\newtheorem{theorem}{Theorem}[section]
\newtheorem{axiom}[theorem]{Axiom}
\newtheorem{conjecture}[theorem]{Conjecture}
\newtheorem{corollary}[theorem]{Corollary}
\newtheorem{definition}[theorem]{Definition}
\newtheorem{example}[theorem]{Example}
\newtheorem{exercise}[theorem]{Exercise}
\newtheorem{lemma}[theorem]{Lemma}
\newtheorem{notation}[theorem]{Notation}
\newtheorem{proposition}[theorem]{Proposition}
\newtheorem{remark}[theorem]{Remark}
\newenvironment{proof}[1][Proof]{\noindent\textbf{#1.} }{\ \rule{0.5em}{0.5em}}
\chardef\@x10\chardef\@xv60
\def\tcitime{
\def\@time{%
  \@minute\time\@hour\@minute\divide\@hour\@xv
  \ifnum\@hour<\@x 0\fi\the\@hour:%
  \multiply\@hour\@xv\advance\@minute-\@hour
  \ifnum\@minute<\@x 0\fi\the\@minute
  }}%
\def\QCTOpt[#1]#2{%
  \def\QCTOptB{#1}
  \def\QCTOptA{#2}
}
\def\QCTNOpt#1{%
  \def\QCTOptA{#1}
  \let\QCTOptB\empty
}
\def\Qct{%
  \@ifnextchar[{%
    \QCTOpt}{\QCTNOpt}
}
\def\QCBOpt[#1]#2{%
  \def\QCBOptB{#1}
  \def\QCBOptA{#2}
}
\def\QCBNOpt#1{%
  \def\QCBOptA{#1}
  \let\QCBOptB\empty
}
\def\Qcb{%
  \@ifnextchar[{%
    \QCBOpt}{\QCBNOpt}
}
\def\PrepCapArgs{%
  \ifx\QCBOptA\empty
    \ifx\QCTOptA\empty
      {}%
    \else
      \ifx\QCTOptB\empty
        {\QCTOptA}%
      \else
        [\QCTOptB]{\QCTOptA}%
      \fi
    \fi
  \else
    \ifx\QCBOptA\empty
      {}%
    \else
      \ifx\QCBOptB\empty
        {\QCBOptA}%
      \else
        [\QCBOptB]{\QCBOptA}%
      \fi
    \fi
  \fi
}
\def\GRAPHICSPS#1{%
 \ifcase\GRAPHICSTYPE
   \special{ps: #1}%
 \or
   \special{language "PS", include "#1"}%
 \fi
}%
\def\graffile#1#2#3#4{%
    \leavevmode
    \raise -#4 \BOXTHEFRAME{%
        \hbox to #2{\raise #3\hbox to #2{\null #1\hfil}}}%
}%
\def\draftbox#1#2#3#4{%
 \leavevmode\raise -#4 \hbox{%
  \frame{\rlap{\protect\tiny #1}\hbox to #2%
   {\vrule height#3 width\z@ depth\z@\hfil}%
  }%
 }%
}%
\newif\ifwasdraft
\def\GRAPHIC#1#2#3#4#5{%
 \ifnum\draft=\@ne\draftbox{#2}{#3}{#4}{#5}%
  \else\graffile{#1}{#3}{#4}{#5}%
  \fi
 }%
\def\addtoLaTeXparams#1{%
    \edef\LaTeXparams{\LaTeXparams #1}}%
\newif\ifBoxFrame \BoxFramefalse
\newif\ifOverFrame \OverFramefalse
\newif\ifUnderFrame \UnderFramefalse
\def\BOXTHEFRAME#1{%
   \hbox{%
      \ifBoxFrame
         \frame{#1}%
      \else
         {#1}%
      \fi
   }%
}
\def\doFRAMEparams#1{\BoxFramefalse\OverFramefalse\UnderFramefalse\readFRAMEparams#1\end}%
\def\readFRAMEparams#1{%
 \ifx#1\end%
  \let\next=\relax
  \else
  \ifx#1i\dispkind=\z@\fi
  \ifx#1d\dispkind=\@ne\fi
  \ifx#1f\dispkind=\tw@\fi
  \ifx#1t\addtoLaTeXparams{t}\fi
  \ifx#1b\addtoLaTeXparams{b}\fi
  \ifx#1p\addtoLaTeXparams{p}\fi
  \ifx#1h\addtoLaTeXparams{h}\fi
  \ifx#1X\BoxFrametrue\fi
  \ifx#1O\OverFrametrue\fi
  \ifx#1U\UnderFrametrue\fi
  \ifx#1w
    \ifnum\draft=1\wasdrafttrue\else\wasdraftfalse\fi
    \draft=\@ne
  \fi
  \let\next=\readFRAMEparams
  \fi
 \next
 }%
\def\IFRAME#1#2#3#4#5#6{%
      \bgroup
      \let\QCTOptA\empty
      \let\QCTOptB\empty
      \let\QCBOptA\empty
      \let\QCBOptB\empty
      #6%
      \parindent=0pt%
      \leftskip=0pt
      \rightskip=0pt
      \setbox0 = \hbox{\QCBOptA}%
      \@tempdima = #1\relax
      \ifOverFrame
          \typeout{This is not implemented yet}%
          \show\HELP
      \else
         \ifdim\wd0>\@tempdima
            \advance\@tempdima by \@tempdima
            \ifdim\wd0 >\@tempdima
               \textwidth=\@tempdima
               \setbox1 =\vbox{%
                  \noindent\hbox to \@tempdima{\hfill\GRAPHIC{#5}{#4}{#1}{#2}{#3}\hfill}\\%
                  \noindent\hbox to \@tempdima{\parbox[b]{\@tempdima}{\QCBOptA}}%
               }%
               \wd1=\@tempdima
            \else
               \textwidth=\wd0
               \setbox1 =\vbox{%
                 \noindent\hbox to \wd0{\hfill\GRAPHIC{#5}{#4}{#1}{#2}{#3}\hfill}\\%
                 \noindent\hbox{\QCBOptA}%
               }%
               \wd1=\wd0
            \fi
         \else
            \ifdim\wd0>0pt
              \hsize=\@tempdima
              \setbox1 =\vbox{%
                \unskip\GRAPHIC{#5}{#4}{#1}{#2}{0pt}%
                \break
                \unskip\hbox to \@tempdima{\hfill \QCBOptA\hfill}%
              }%
              \wd1=\@tempdima
           \else
              \hsize=\@tempdima
              \setbox1 =\vbox{%
                \unskip\GRAPHIC{#5}{#4}{#1}{#2}{0pt}%
              }%
              \wd1=\@tempdima
           \fi
         \fi
         \@tempdimb=\ht1
         \advance\@tempdimb by \dp1
         \advance\@tempdimb by -#2%
         \advance\@tempdimb by #3%
         \leavevmode
         \raise -\@tempdimb \hbox{\box1}%
      \fi
      \egroup%
}%
\def\DFRAME#1#2#3#4#5{%
 \begin{center}
     \let\QCTOptA\empty
     \let\QCTOptB\empty
     \let\QCBOptA\empty
     \let\QCBOptB\empty
     \ifOverFrame 
        #5\QCTOptA\par
     \fi
     \GRAPHIC{#4}{#3}{#1}{#2}{\z@}
     \ifUnderFrame 
        \nobreak\par #5\QCBOptA
     \fi
 \end{center}%
 }%
\def\FFRAME#1#2#3#4#5#6#7{%
 \begin{figure}[#1]%
  \let\QCTOptA\empty
  \let\QCTOptB\empty
  \let\QCBOptA\empty
  \let\QCBOptB\empty
  \ifOverFrame
    #4
    \ifx\QCTOptA\empty
    \else
      \ifx\QCTOptB\empty
        \caption{\QCTOptA}%
      \else
        \caption[\QCTOptB]{\QCTOptA}%
      \fi
    \fi
    \ifUnderFrame\else
      \label{#5}%
    \fi
  \else
    \UnderFrametrue%
  \fi
  \begin{center}\GRAPHIC{#7}{#6}{#2}{#3}{\z@}\end{center}%
  \ifUnderFrame
    #4
    \ifx\QCBOptA\empty
      \caption{}%
    \else
      \ifx\QCBOptB\empty
        \caption{\QCBOptA}%
      \else
        \caption[\QCBOptB]{\QCBOptA}%
      \fi
    \fi
    \label{#5}%
  \fi
  \end{figure}%
 }%
\def\makeactives{
  \catcode`\"=\active
  \catcode`\;=\active
  \catcode`\:=\active
  \catcode`\'=\active
  \catcode`\~=\active
}
   \gdef\activesoff{%
      \def"{\string"}
      \def;{\string;}
      \def:{\string:}
      \def'{\string'}
      \def~{\string~}
    }
\def\FRAME#1#2#3#4#5#6#7#8{%
 \bgroup
 \@ifundefined{bbl@deactivate}{}{\activesoff}
 \ifnum\draft=\@ne
   \wasdrafttrue
 \else
   \wasdraftfalse%
 \fi
 \def\LaTeXparams{}%
 \dispkind=\z@
 \def\LaTeXparams{}%
 \doFRAMEparams{#1}%
 \ifnum\dispkind=\z@\IFRAME{#2}{#3}{#4}{#7}{#8}{#5}\else
  \ifnum\dispkind=\@ne\DFRAME{#2}{#3}{#7}{#8}{#5}\else
   \ifnum\dispkind=\tw@
    \edef\@tempa{\noexpand\FFRAME{\LaTeXparams}}%
    \@tempa{#2}{#3}{#5}{#6}{#7}{#8}%
    \fi
   \fi
  \fi
  \ifwasdraft\draft=1\else\draft=0\fi{}%
  \egroup
 }%
\def\TEXUX#1{"texux"}
\long\def\QQQ#1#2{%
     \long\expandafter\def\csname#1\endcsname{#2}}%
\long\def\QQA#1#2{}%
\def\QTR#1#2{{\csname#1\endcsname #2}}
\def\EXPAND#1[#2]#3{}%
\def\NOEXPAND#1[#2]#3{}%
\def\LaTeXparent#1{}%
\def\ChildStyles#1{}%
\def\ChildDefaults#1{}%
\def\QTagDef#1#2#3{}%
\def\QQfnmark#1{\footnotemark}
\def\makeatletter\input gnuindex.sty\makeatother\makeindex{\makeatletter\input gnuindex.sty\makeatother\makeindex}%
\def\initial#1{\bigbreak{\raggedright\large\bf #1}\kern 2\p@\penalty3000}}%
 \def\abstract{%
  \if@twocolumn
   \section*{Abstract (Not appropriate in this style!)}%
   \else \small 
   \begin{center}{\bf Abstract\vspace{-.5em}\vspace{\z@}}\end{center}%
   \quotation 
   \fi
  }%
   \def\registered{\relax\ifmmode{}\r@gistered
                    \else$\m@th\r@gistered$\fi}%
 \def\r@gistered{^{\ooalign
  {\hfil\raise.07ex\hbox{$\scriptstyle\rm\text{R}$}\hfil\crcr
  \mathhexbox20D}}}}{}%
\newdimen\theight
\def\Column{%
 \vadjust{\setbox\z@=\hbox{\scriptsize\quad\quad tcol}%
  \theight=\ht\z@\advance\theight by \dp\z@\advance\theight by \lineskip
  \kern -\theight \vbox to \theight{%
   \rightline{\rlap{\box\z@}}%
   \vss
   }%
  }%
 }%
\def\qed{%
 \ifhmode\unskip\nobreak\fi\ifmmode\ifinner\else\hskip5\p@\fi\fi
 \hbox{\hskip5\p@\vrule width4\p@ height6\p@ depth1.5\p@\hskip\p@}%
 }%
\def\miss{\hbox{\vrule height2\p@ width 2\p@ depth\z@}}%
\def\tcol#1{{\baselineskip=6\p@ \vcenter{#1}} \Column}  %
\def\newfmtname{LaTeX2e}
\def\chkcompat{%
   \if@compatibility
   \else
     \usepackage{latexsym}
   \fi
}
  \DeclareOldFontCommand{\rm}{\normalfont\rmfamily}{\mathrm}
  \DeclareOldFontCommand{\sf}{\normalfont\sffamily}{\mathsf}
  \DeclareOldFontCommand{\tt}{\normalfont\ttfamily}{\mathtt}
  \DeclareOldFontCommand{\bf}{\normalfont\bfseries}{\mathbf}
  \DeclareOldFontCommand{\it}{\normalfont\itshape}{\mathit}
  \DeclareOldFontCommand{\sl}{\normalfont\slshape}{\@nomath\sl}
  \DeclareOldFontCommand{\sc}{\normalfont\scshape}{\@nomath\sc}
\def\alpha{{\Greekmath 010B}}%
\def\beta{{\Greekmath 010C}}%
\def\gamma{{\Greekmath 010D}}%
\def\delta{{\Greekmath 010E}}%
\def\epsilon{{\Greekmath 010F}}%
\def\zeta{{\Greekmath 0110}}%
\def\eta{{\Greekmath 0111}}%
\def\theta{{\Greekmath 0112}}%
\def\iota{{\Greekmath 0113}}%
\def\kappa{{\Greekmath 0114}}%
\def\lambda{{\Greekmath 0115}}%
\def\mu{{\Greekmath 0116}}%
\def\nu{{\Greekmath 0117}}%
\def\xi{{\Greekmath 0118}}%
\def\pi{{\Greekmath 0119}}%
\def\rho{{\Greekmath 011A}}%
\def\sigma{{\Greekmath 011B}}%
\def\tau{{\Greekmath 011C}}%
\def\upsilon{{\Greekmath 011D}}%
\def\phi{{\Greekmath 011E}}%
\def\chi{{\Greekmath 011F}}%
\def\psi{{\Greekmath 0120}}%
\def\omega{{\Greekmath 0121}}%
\def\varepsilon{{\Greekmath 0122}}%
\def\vartheta{{\Greekmath 0123}}%
\def\varpi{{\Greekmath 0124}}%
\def\varrho{{\Greekmath 0125}}%
\def\varsigma{{\Greekmath 0126}}%
\def\varphi{{\Greekmath 0127}}%
\def\nabla{{\Greekmath 0272}}
\def\FindBoldGroup{%
   {\setbox0=\hbox{$\mathbf{x\global\edef\theboldgroup{\the\mathgroup}}$}}%
}
\def\Greekmath#1#2#3#4{%
    \if@compatibility
        \ifnum\mathgroup=\symbold
           \mathchoice{\mbox{\boldmath$\displaystyle\mathchar"#1#2#3#4$}}%
                      {\mbox{\boldmath$\textstyle\mathchar"#1#2#3#4$}}%
                      {\mbox{\boldmath$\scriptstyle\mathchar"#1#2#3#4$}}%
                      {\mbox{\boldmath$\scriptscriptstyle\mathchar"#1#2#3#4$}}%
        \else
           \mathchar"#1#2#3#4%
        \fi 
    \else 
        \FindBoldGroup
        \ifnum\mathgroup=\theboldgroup 
           \mathchoice{\mbox{\boldmath$\displaystyle\mathchar"#1#2#3#4$}}%
                      {\mbox{\boldmath$\textstyle\mathchar"#1#2#3#4$}}%
                      {\mbox{\boldmath$\scriptstyle\mathchar"#1#2#3#4$}}%
                      {\mbox{\boldmath$\scriptscriptstyle\mathchar"#1#2#3#4$}}%
        \else
           \mathchar"#1#2#3#4%
        \fi     	    
	  \fi}
\newif\ifGreekBold  \GreekBoldfalse
\let\SAVEPBF=\pbf
\def\pbf{\GreekBoldtrue\SAVEPBF}%
  \newcounter{equationnumber}  
  \def\mathletters{%
     \addtocounter{equation}{1}
     \edef\@currentlabel{\theequation}%
     \setcounter{equationnumber}{\c@equation}
     \setcounter{equation}{0}%
     \edef\theequation{\@currentlabel\noexpand\alph{equation}}%
  }
    \def\BibTeX{{\rm B\kern-.05em{\sc i\kern-.025em b}\kern-.08em
                 T\kern-.1667em\lower.7ex\hbox{E}\kern-.125emX}}}{}%
\def\AmS{{\protect\usefont{OMS}{cmsy}{m}{n}%
                A\kern-.1667em\lower.5ex\hbox{M}\kern-.125emS}}}{}%
\let\DOTSI\relax
\def\RIfM@{\relax\ifmmode}%
\def\FN@{\futurelet\next}%
\def\iint{\DOTSI\intno@\tw@\FN@\ints@}%
\def\iiint{\DOTSI\intno@\thr@@\FN@\ints@}%
\def\iiiint{\DOTSI\intno@4 \FN@\ints@}%
\def\idotsint{\DOTSI\intno@\z@\FN@\ints@}%
\def\ints@{\findlimits@\ints@@}%
\newif\iflimtoken@
\newif\iflimits@
\def\findlimits@{\limtoken@true\ifx\next\limits\limits@true
 \else\ifx\next\nolimits\limits@false\else
 \limtoken@false\ifx\ilimits@\nolimits\limits@false\else
 \ifinner\limits@false\else\limits@true\fi\fi\fi\fi}%
\def\multint@{\int\ifnum\intno@=\z@\intdots@                          
 \else\intkern@\fi                                                    
 \ifnum\intno@>\tw@\int\intkern@\fi                                   
 \ifnum\intno@>\thr@@\int\intkern@\fi                                 
 \int}
\def\multintlimits@{\intop\ifnum\intno@=\z@\intdots@\else\intkern@\fi
 \ifnum\intno@>\tw@\intop\intkern@\fi
 \ifnum\intno@>\thr@@\intop\intkern@\fi\intop}%
\def\intic@{%
    \mathchoice{\hskip.5em}{\hskip.4em}{\hskip.4em}{\hskip.4em}}%
\def\negintic@{\mathchoice
 {\hskip-.5em}{\hskip-.4em}{\hskip-.4em}{\hskip-.4em}}%
\def\ints@@{\iflimtoken@                                              
 \def\ints@@@{\iflimits@\negintic@
   \mathop{\intic@\multintlimits@}\limits                             
  \else\multint@\nolimits\fi                                          
  \eat@}
 \else                                                                
 \def\ints@@@{\iflimits@\negintic@
  \mathop{\intic@\multintlimits@}\limits\else
  \multint@\nolimits\fi}\fi\ints@@@}%
\def\intkern@{\mathchoice{\!\!\!}{\!\!}{\!\!}{\!\!}}%
\def\plaincdots@{\mathinner{\cdotp\cdotp\cdotp}}%
\def\intdots@{\mathchoice{\plaincdots@}%
 {{\cdotp}\mkern1.5mu{\cdotp}\mkern1.5mu{\cdotp}}%
 {{\cdotp}\mkern1mu{\cdotp}\mkern1mu{\cdotp}}%
 {{\cdotp}\mkern1mu{\cdotp}\mkern1mu{\cdotp}}}%
\def\RIfM@{\relax\protect\ifmmode}
\def\text{\RIfM@\expandafter\text@\else\expandafter\mbox\fi}
\let\nfss@text\text
\def\text@#1{\mathchoice
   {\textdef@\displaystyle\f@size{#1}}%
   {\textdef@\textstyle\tf@size{\firstchoice@false #1}}%
   {\textdef@\textstyle\sf@size{\firstchoice@false #1}}%
   {\textdef@\textstyle \ssf@size{\firstchoice@false #1}}%
   \glb@settings}
\def\textdef@#1#2#3{\hbox{{%
                    \everymath{#1}%
                    \let\f@size#2\selectfont
                    #3}}}
\newif\iffirstchoice@
\def\Let@{\relax\iffalse{\fi\let\\=\cr\iffalse}\fi}%
\def\vspace@{\def\vspace##1{\crcr\noalign{\vskip##1\relax}}}%
\def\multilimits@{\bgroup\vspace@\Let@
 \baselineskip\fontdimen10 \scriptfont\tw@
 \advance\baselineskip\fontdimen12 \scriptfont\tw@
 \lineskip\thr@@\fontdimen8 \scriptfont\thr@@
 \lineskiplimit\lineskip
 \vbox\bgroup\ialign\bgroup\hfil$\m@th\scriptstyle{##}$\hfil\crcr}%
\def\Sb{_\multilimits@}%
\def\endSb{\crcr\egroup\egroup\egroup}%
\def\Sp{^\multilimits@}%
\newdimen\ex@
\def\rightarrowfill@#1{$#1\m@th\mathord-\mkern-6mu\cleaders
 \hbox{$#1\mkern-2mu\mathord-\mkern-2mu$}\hfill
 \mkern-6mu\mathord\rightarrow$}%
\def\leftarrowfill@#1{$#1\m@th\mathord\leftarrow\mkern-6mu\cleaders
 \hbox{$#1\mkern-2mu\mathord-\mkern-2mu$}\hfill\mkern-6mu\mathord-$}%
\def\leftrightarrowfill@#1{$#1\m@th\mathord\leftarrow
\mkern-6mu\cleaders
 \hbox{$#1\mkern-2mu\mathord-\mkern-2mu$}\hfill
 \mkern-6mu\mathord\rightarrow$}%
\def\overrightarrow{\mathpalette\overrightarrow@}%
\def\overrightarrow@#1#2{\vbox{\ialign{##\crcr\rightarrowfill@#1\crcr
 \noalign{\kern-\ex@\nointerlineskip}$\m@th\hfil#1#2\hfil$\crcr}}}%
\def\overleftarrow{\mathpalette\overleftarrow@}%
\def\overleftarrow@#1#2{\vbox{\ialign{##\crcr\leftarrowfill@#1\crcr
 \noalign{\kern-\ex@\nointerlineskip}$\m@th\hfil#1#2\hfil$\crcr}}}%
\def\overleftrightarrow{\mathpalette\overleftrightarrow@}%
\def\overleftrightarrow@#1#2{\vbox{\ialign{##\crcr
   \leftrightarrowfill@#1\crcr
 \noalign{\kern-\ex@\nointerlineskip}$\m@th\hfil#1#2\hfil$\crcr}}}%
\def\underrightarrow{\mathpalette\underrightarrow@}%
\def\underrightarrow@#1#2{\vtop{\ialign{##\crcr$\m@th\hfil#1#2\hfil
  $\crcr\noalign{\nointerlineskip}\rightarrowfill@#1\crcr}}}%
\def\underleftarrow{\mathpalette\underleftarrow@}%
\def\underleftarrow@#1#2{\vtop{\ialign{##\crcr$\m@th\hfil#1#2\hfil
  $\crcr\noalign{\nointerlineskip}\leftarrowfill@#1\crcr}}}%
\def\underleftrightarrow{\mathpalette\underleftrightarrow@}%
\def\underleftrightarrow@#1#2{\vtop{\ialign{##\crcr$\m@th
  \hfil#1#2\hfil$\crcr
 \noalign{\nointerlineskip}\leftrightarrowfill@#1\crcr}}}%
\def\qopnamewl@#1{\mathop{\operator@font#1}\nlimits@}
\let\nlimits@\displaylimits
\def\setboxz@h{\setbox\z@\hbox}
\def\varlim@#1#2{\mathop{\vtop{\ialign{##\crcr
 \hfil$#1\m@th\operator@font lim$\hfil\crcr
 \noalign{\nointerlineskip}#2#1\crcr
 \noalign{\nointerlineskip\kern-\ex@}\crcr}}}}
 \def\rightarrowfill@#1{\m@th\setboxz@h{$#1-$}\ht\z@\z@
  $#1\copy\z@\mkern-6mu\cleaders
  \hbox{$#1\mkern-2mu\box\z@\mkern-2mu$}\hfill
  \mkern-6mu\mathord\rightarrow$}
\def\leftarrowfill@#1{\m@th\setboxz@h{$#1-$}\ht\z@\z@
  $#1\mathord\leftarrow\mkern-6mu\cleaders
  \hbox{$#1\mkern-2mu\copy\z@\mkern-2mu$}\hfill
  \mkern-6mu\box\z@$}
\def\projlim{\qopnamewl@{proj\,lim}}
\def\injlim{\qopnamewl@{inj\,lim}}
\def\varinjlim{\mathpalette\varlim@\rightarrowfill@}
\def\varprojlim{\mathpalette\varlim@\leftarrowfill@}
\def\varliminf{\mathpalette\varliminf@{}}
\def\varliminf@#1{\mathop{\underline{\vrule\@depth.2\ex@\@width\z@
   \hbox{$#1\m@th\operator@font lim$}}}}
\def\varlimsup{\mathpalette\varlimsup@{}}
\def\varlimsup@#1{\mathop{\overline
  {\hbox{$#1\m@th\operator@font lim$}}}}
\def\align{\@verbatim \frenchspacing\@vobeyspaces \@alignverbatim
You are using the "align" environment in a style in which it is not defined.}
\let\csname endalign*\endcsname =\endtrivlist
\def\alignat{\@verbatim \frenchspacing\@vobeyspaces \@alignatverbatim
You are using the "alignat" environment in a style in which it is not defined.}
\let\csname endalignat*\endcsname =\endtrivlist
\def\xalignat{\@verbatim \frenchspacing\@vobeyspaces \@xalignatverbatim
You are using the "xalignat" environment in a style in which it is not defined.}
\let\csname endxalignat*\endcsname =\endtrivlist
\def\gather{\@verbatim \frenchspacing\@vobeyspaces \@gatherverbatim
You are using the "gather" environment in a style in which it is not defined.}
\let\csname endgather*\endcsname =\endtrivlist
\def\multiline{\@verbatim \frenchspacing\@vobeyspaces \@multilineverbatim
You are using the "multiline" environment in a style in which it is not defined.}
\let\csname endmultiline*\endcsname =\endtrivlist
\def\arrax{\@verbatim \frenchspacing\@vobeyspaces \@arraxverbatim
You are using a type of "array" construct that is only allowed in AmS-LaTeX.}
\def\tabulax{\@verbatim \frenchspacing\@vobeyspaces \@tabulaxverbatim
You are using a type of "tabular" construct that is only allowed in AmS-LaTeX.}
\let\csname endarrax*\endcsname =\endtrivlist
\let\csname endtabulax*\endcsname =\endtrivlist
\def\@@eqncr{\let\@tempa\relax
    \ifcase\@eqcnt \def\@tempa{& & &}\or \def\@tempa{& &}%
      \else \def\@tempa{&}\fi
     \@tempa
     \if@eqnsw
        \iftag@
           \@taggnum
        \else
           \@eqnnum\stepcounter{equation}%
        \fi
     \fi
     \global\tag@false
     \global\@eqnswtrue
     \global\@eqcnt\z@\cr}
 \def\endequation{%
     \ifmmode\ifinner 
      \iftag@
        \addtocounter{equation}{-1} 
        $\hfil
           \displaywidth\linewidth\@taggnum\egroup \endtrivlist
        \global\tag@false
        \global\@ignoretrue   
      \else
        $\hfil
           \displaywidth\linewidth\@eqnnum\egroup \endtrivlist
        \global\tag@false
        \global\@ignoretrue 
      \fi
     \else   
      \iftag@
        \addtocounter{equation}{-1} 
        \eqno \hbox{\@taggnum}
        \global\tag@false%
        $$\global\@ignoretrue
      \else
        \eqno \hbox{\@eqnnum}
        $$\global\@ignoretrue
      \fi
     \fi\fi
 } 
 \newif\iftag@ \tag@false
 \def\tag{\@ifnextchar*{\@tagstar}{\@tag}}
 \def\@tag#1{%
     \global\tag@true
     \global\def\@taggnum{(#1)}}
 \def\@tagstar*#1{%
     \global\tag@true
     \global\def\@taggnum{#1}%
}
\begin{document}

\title{Quantum Fluctuations and Large Deviation Principle for Microscopic
Currents of Free Fermions in Disordered Media}
\author{J.-B. Bru \and W. de Siqueira Pedra \and A. Ratsimanetrimanana}
\date{\today }
\maketitle

\begin{abstract}
We contribute an extension of large-deviation results obtained in [N.J.B.
Aza, J.-B. Bru, W. de Siqueira Pedra, A. Ratsimanetrimanana, J. Math. Pures
Appl. 125 (2019) 209] on conductivity theory at atomic scale of free lattice
fermions in disordered media. Disorder is modeled by (i) a random external
potential, like in the celebrated Anderson model, and (ii) a
nearest-neighbor hopping term with random complex-valued amplitudes. In
accordance with experimental observations, via the large deviation
formalism, our previous paper showed in this case that quantum uncertainty
of microscopic electric current densities around their (classical)
macroscopic value is suppressed, exponentially fast with respect to the
volume of the region of the lattice where an external electric field is
applied. Here, the quantum fluctuations of linear response currents are
shown to exist in the thermodynamic limit and we mathematically prove that
they are related to the rate function of the large deviation principle
associated with current densities. We also demonstrate that, in general,
they do not vanish (in the thermodynamic limit) and the quantum uncertainty
around the macroscopic current density disappears exponentially fast with an
exponential rate proportional to the squared deviation of the current from
its macroscopic value and the inverse current fluctuation, with respect to
growing space (volume) scales.\bigskip

\noindent \textbf{Keywords:} Quantum fluctuations, large deviations,
Fermionic charge transport, disordered media.
\end{abstract}

\tableofcontents%

\section{Introduction}

Surprisingly \cite{Ohm-exp2}, in 2012, experimental measurements \cite%
{Ohm-exp} of electric resistance of nanowires in Si doped with phosphorus
atoms demonstrate that the macroscopic laws for charge transport are already
accurate at length scales larger than a few nanometers, even at very low
temperature ($4.2~\mathrm{K}$). As a consequence, microscopic (quantum)
effects on charge transport can very rapidly disappear with respect to
growing space scales. Understanding the breakdown of the classical
(macroscopic) conductivity theory at microscopic scales is an important
technological issue, because of the growing need for smaller electronic
components.

From a mathematical perspective, the convergence of the expectations of
microscopic current densities with respect to growing space scales is proven
in \cite{OhmIII,OhmVI}, but no information about the suppression of quantum
uncertainty was obtained in the macroscopic limit. In \cite{LDP}, in
accordance with experimental observations, it is proven, for non-interacting
lattice fermions with disorder, that quantum uncertainty of microscopic
electric current densities around their (classical) macroscopic value is
suppressed, exponentially fast with respect to the volume of the region of
the lattice where an external electric field is applied. This is proven in 
\cite{LDP} via the large deviation formalism \cite{DS89,dembo1998large},
which has been adopted in quantum statistical mechanics since the eighties 
\cite[Section 7]{ABPM1}. Given a fixed electromagnetic field $\mathcal{E}$,
we derive in particular in \cite{LDP} the (good) rate function $\mathrm{I}^{(%
\mathcal{E)}}$ associated with microscopic (linear response) current
densities\footnote{%
In some direction of $\mathbb{R}^{d}$.} $x_{L}^{(\mathcal{E)}}\in \mathbb{R}$%
, $L\in \mathbb{R}_{0}^{+}$, meaning in this case that, in a cubic box of
volume $L^{d}$ ($d$-dimmensional lattice), for any $a,b\in \mathbb{R}$, 
\begin{equation}
\mathrm{Prob}\left[ x_{L}^{(\mathcal{E)}}\in \left[ a,b\right] \right] \sim 
\mathrm{e}^{-L^{d}\inf_{x\in \left[ a,b\right] }\mathrm{I}^{(\mathcal{E)}%
}(x)}\ ,\qquad \text{as }L\rightarrow \infty \ ,  \label{ddddddddddddd}
\end{equation}%
with $\mathrm{I}^{(\mathcal{E)}}\geq 0$ and $\mathrm{I}^{(\mathcal{E)}}(x)=0$
iff $x$ is the macroscopic (linear response) current density, $x^{(\mathcal{E%
})}$.

In this paper, we complement these studies by rigorously showing two new
properties of charge transport of quasi-free fermions in disordered media:\ 

\begin{itemize}
\item[(a)] The quantum fluctuations of linear response currents exist in the
thermodynamic limit and are meanwhile explicitly related to the rate
function $\mathrm{I}^{(\mathcal{E)}}$, as expected.

\item[(b)] In general, the quantum fluctuations of currents do not vanish in
the thermodynamic limit and the quantum uncertainty around the macroscopic
current density disappears exponentially fast with an exponential rate
proportional to $(x-x^{(\mathcal{E})})^{2}$ and the inverse current
fluctuation, with respect to growing space (volume) scales.
\end{itemize}

\noindent (a)-(b) refer to Theorems \ref{Quantum fluctuations and rate
function} and \ref{new theorem}, which are the main results of this paper.

Our results show that the experimental measure of the rate function $\mathrm{%
I}^{(\mathcal{E)}}$ (see (\ref{ddddddddddddd})) leads to an experimental
estimate on the corresponding quantum fluctuations. Conversely, an
experimental estimate on these quantum fluctuations gives the behavior of
the corresponding rate function $\mathrm{I}^{(\mathcal{E)}}$ around the
macroscopic current density $x^{(\mathcal{E})}$. This fact is certainly not
restricted to fermionic currents.

Note that the existence of quantum fluctuations and associated mathematical
structures has been extensively studied for quantum many-body systems. This
refers for instance to the construction of so-called algebra of normal
fluctuations for transport phenomena, which are related to quantum central
limit theorems.\ See, e.g., \cite{OhmIII,OhmIV,GVV1,GVV2,GVV3,GVV4,GVV5,GVV6}
as well as \cite[Chapter 6]{Verbeure} and references therein. The explicit
relation (a) we derive between quantum fluctuations and the large deviation
formalism in quantum statistical mechanics \cite[Section 7]{ABPM1} is,
however, a new general observation on quantum many-body systems.

We use the mathematical framework of \cite{LDP,OhmVI,brupedraLR} to study
fermions on the lattice. For simplicity we take a cubic lattice $\mathbb{Z}%
^{d}$, even if other types of lattices can be considered with very similar
methods. Disorder within the conductive material, due to impurities, crystal
lattice defects, etc., is modeled by (i) a random external potential, like
in the celebrated Anderson model, and (ii) a nearest-neighbor hopping term
with random complex-valued amplitudes. In particular, random
(electromagnetic) vector potentials can also be implemented. The celebrated
tight-binding Anderson model is one particular example of the general case
considered here.

In order to prove Property (a), i.e., Theorem \ref{Quantum fluctuations and
rate function}, we use the large deviation formalism and follow the argument
lines of \cite[Section 4]{LDP} to show \cite[Theorem 3.1]{LDP} via the
Akcoglu-Krengel ergodic theorem \cite[Theorem 4.17]{LDP}, for one has to
control the thermodynamical limit of (finite-volume) generating functions
that are random. We perform in particular the same box decomposition of
these random functions, which can be justified with the help of the
Bogoliubov-type inequality \cite[Lemma 4.2]{LDP} and the \textquotedblleft
locality\textquotedblright\ (or space decay) of both the quasi-free dynamics
and space correlations of KMS states, which is a consequence of
Combes-Thomas estimates \cite[Appendix A]{LDP}. See \cite[Section 4.3]{LDP}.
In this paper we only give the new arguments that are necessary to prove
Property (a), like the existence of the thermodynamic limit of quantum
fluctuations of currents and the continuity of the second derivative of the
generating function. In particular, like in the proof of \cite[Corollary 4.20%
]{LDP}, we use the (Arzel\`{a}-) Ascoli theorem \cite[Theorem A5]{Rudin},
which requires uniform bounds on the third-order derivatives of
finite-volume generating functions. This proof is much more computational
than the one of \cite[Proposition 4.9]{LDP}, which only control the first
and second derivatives of the same function. Note that derivatives\ of the
logarithm of the expectations of an exponential, like the generating
function we consider here, are generally related to so-called
\textquotedblleft truncated\textquotedblright\ or \textquotedblleft
connected\textquotedblright\ correlations. We demonstrate that it is the
case for the third-order derivative we refer to above, allowing the reader
to follow the computation of that derivative in a systematic way.
Considering the third-order case, the algorithm to compute the derivatives
of the generating functions at any order becomes apparent, showing that the
generating function is in fact \emph{smooth}. We give below further remarks
on that.

In order to prove Property (b), i.e., Theorem \ref{new theorem} (Theorem \ref%
{Quantum fluctuations and rate function} being proven), we rewrite the
second derivative of the generating function, which is the thermodynamic
limit of the quantum fluctuations of currents (Theorem \ref{Quantum
fluctuations and rate function} (i)), as a trace of some explicit positive
operator in the one-particle Hilbert space. This quantity can be estimated
from below by the Hilbert-Schmidt norm of a kind of current observable in
the one-particle Hilbert space. Various computations and estimates then
imply Theorem \ref{new theorem}.

As discussed in \cite{LDP}, observe the existence of a large mathematical
literature on charged transport properties of fermions in disordered media,
see for instance \cite%
{[SBB98],BvESB94,jfa,klm,JMP-autre,JMP-autre2,germinet,Pro13,Cornean} and
references therein. However, it is not the purpose of this introduction to
go into the details of the history of this specific research field. For a
(non-exhaustive) historical perspective on linear conductivity (Ohm's law),
see, e.g., \cite{brupedrahistoire} or our previous papers \cite%
{OhmIII,OhmVI,OhmIV,brupedraLR,OhmI,OhmII,OhmV}.

To conclude, this paper is organized as follows:

\begin{itemize}
\item In Section \ref{sec:SetPro}, we describe the mathematical framework,
which is the one of \cite{LDP,OhmVI,brupedraLR}. It refers to quasi-free
fermions on the lattice in disordered media. Although all the problem can be
formulated, in a mathematically equivalent way, in the one-particle (or
Hilbert space) setting \cite[Appendix C.3]{LDP}, since the underlying
physical system is a many-body one, it is conceptually more appropriate to
state our results within the algebraic formulation for lattice fermion
systems, like in \cite{LDP,OhmVI,brupedraLR}. Short complementary
discussions on response of quasi-free fermion systems to\ electric fields
can be found in \cite[Appendix C]{LDP}.

\item In Section \ref{sec:main}, the main results are stated. In particular,
Property (a) described above refers to Section \ref{rate_quantum_fluctuation}%
, while Property (b) is explained in Section \ref{Non-Vanishing}.

\item Section \ref{sec:proofs} gathers all technical proofs. In particular,\
Sections \ref{Sectino tech3 copy(2)}-\ref{Sectino tech3} give preliminary
definitions and observations, while Sections \ref{Sectino tech3 copy(1)} and %
\ref{positivite} refer to the proofs of Theorems \ref{Quantum fluctuations
and rate function} (i) and \ref{new theorem}, respectively.
\end{itemize}

\begin{notation}
\label{remark constant}\mbox{
}\newline
A norm on a generic vector space $\mathcal{X}$ is denoted by $\Vert \cdot
\Vert _{\mathcal{X}}$. The Banach space of all bounded linear operators on $(%
\mathcal{X},\Vert \cdot \Vert _{\mathcal{X}}\mathcal{)}$ is denoted by $%
\mathcal{B}(\mathcal{X})$. The scalar product of any Hilbert space $\mathcal{%
X}$ is denoted by $\langle \cdot ,\cdot \rangle _{\mathcal{X}}$. We use the
convention $\mathbb{R}^{+}\doteq \left\{ x\in \mathbb{R}:x>0\right\} $\
while $\mathbb{R}_{0}^{+}\doteq \mathbb{R}^{+}\cup \{0\}$. For any random
variable $X$, $\mathbb{E}[X]$ denotes its expectation and $\mathrm{Var}[X]$
its variance.
\end{notation}

\section{Setup of the Problem\label{sec:SetPro}}

We use the mathematical framework of \cite{LDP,OhmVI,brupedraLR} in order to
study fermions on the lattice.

\subsection{Random Tight-Binding Model\label{Section impurities}}

We consider conducting fermions in a cubic crystal represented by the $d$%
-dimensional cubic lattice $\mathbb{Z}^{d}$ ($d\in \mathbb{N}$). The
corresponding one-particle Hilbert space is thus $\mathfrak{h}\doteq \ell
^{2}(\mathbb{Z}^{d};\mathbb{C})$. Its canonical orthonormal basis is denoted
by $\left\{ \mathfrak{e}_{x}\right\} _{x\in \mathbb{Z}^{d}}$, where $%
\mathfrak{e}_{x}(y)\doteq \delta _{x,y}$ for all $x,y\in \mathbb{Z}^{d}$. ($%
\delta _{x,y}$ is the Kronecker delta.)

Disorder in the crystal is modeled via a probability space $(\Omega ,%
\mathfrak{A}_{\Omega },\mathfrak{a}_{\Omega })$, defined as follows: Using
the sets%
\begin{equation*}
\mathbb{D}\doteq \{z\in \mathbb{C}\colon \left\vert z\right\vert \leq 1\}%
\text{\quad and\quad }\mathfrak{b}\doteq \left\{ \{x,x^{\prime }\}\subseteq 
\mathbb{Z}^{d}\colon |x-x^{\prime }|=1\right\}
\end{equation*}%
we define 
\begin{equation*}
\Omega \doteq \lbrack -1,1]^{\mathbb{Z}^{d}}\times \mathbb{D}^{\mathfrak{b}%
}\qquad \text{and}\qquad \mathfrak{A}_{\Omega }\doteq \left( \otimes _{x\in 
\mathbb{Z}^{d}}\mathfrak{A}_{x}^{(1)}\right) \otimes \left( \otimes _{%
\mathbf{x}\in \mathfrak{b}}\mathfrak{A}_{\mathbf{x}}^{(2)}\right) \ ,
\end{equation*}%
where $\mathfrak{A}_{x}^{(1)}$, $x\in \mathbb{Z}^{d}$, and $\mathfrak{A}_{%
\mathbf{x}}^{(2)}$, $\mathbf{x}\in \mathfrak{b}$, are the Borel $\sigma $%
-algebras of respectively the interval $[-1,1]$ and the unit disc $\mathbb{D}
$, both with respect to their usual metric topology. The distribution $%
\mathfrak{a}_{\Omega }$ is an \emph{ergodic} probability measure on the
measurable space $(\Omega ,\mathfrak{A}_{\Omega })$. See \cite{LDP} for more
details. Below, $\mathbb{E}\left[ \cdot \right] $ and $\mathrm{Var}[\cdot ]$
always refer to expectations and variances associated with $\mathfrak{a}%
_{\Omega }$.

Given $\vartheta \in \mathbb{R}_{0}^{+}$ and $\omega =(\omega _{1},\omega
_{2})\in \Omega $ we define a bounded self-adjoint operator $\Delta _{\omega
,\vartheta }\in \mathcal{B}(\mathfrak{h})$ encoding the hopping amplitudes
of a single particle in the lattice: 
\begin{eqnarray}
\lbrack \Delta _{\omega ,\vartheta }(\psi )](x) &\doteq &2d\psi
(x)-\sum_{j=1}^{d}\Big((1+\vartheta \overline{\omega _{2}(\{x,x-e_{j}\})})\
\psi (x-e_{j})  \notag \\
&&+\psi (x+e_{j})(1+\vartheta \omega _{2}(\{x,x+e_{j}\}))\Big)
\label{equation sup}
\end{eqnarray}%
for any $x\in \mathbb{Z}^{d}$ and $\psi \in \mathfrak{h}$, where $%
\{e_{k}\}_{k=1}^{d}$ is the canonical basis of $\mathbb{R}^{d}$. If $%
\vartheta =0$, $\Delta _{\omega ,0}$ is (up to a minus sign) the usual $d$%
-dimensional discrete Laplacian. Random (electromagnetic) vector potentials
can also be implemented in our model, since $\omega _{2}$ takes values in
the unit disc $\mathbb{D}\subseteq \mathbb{C}$. Then, the random
tight-binding model is the one-particle Hamiltonian defined by 
\begin{equation}
h^{(\omega )}\doteq \Delta _{\omega ,\vartheta }+\lambda \omega _{1}\ ,\text{%
\qquad }\omega =\left( \omega _{1},\omega _{2}\right) \in \Omega ,\ \lambda
,\vartheta \in \mathbb{R}_{0}^{+},  \label{eq:Ham_lap_pot}
\end{equation}%
where the function $\omega _{1}\colon \mathbb{Z}^{d}\rightarrow \lbrack
-1,1] $ is identified with the corresponding (self-adjoint) multiplication
operator. The celebrated tight-binding Anderson model corresponds to the
special case $\vartheta =0$.

\subsection{$C^{\ast }$-Algebraic Setting\label{subsec:algebraic}}

We denote by $\mathcal{U}$ the universal unital $C^{\ast }$-algebra
generated by elements $\{a(\psi )\}_{\psi \in \mathfrak{h}}$ satisfying the
canonical anticommutation relations (CAR): For all $\psi ,\varphi \in 
\mathfrak{h}$, 
\begin{equation}
a(\psi )a(\varphi )=-a(\varphi )a(\psi ),\quad a(\psi )a(\varphi )^{\ast
}+a(\varphi )^{\ast }a(\psi )=\left\langle \psi ,\varphi \right\rangle _{%
\mathfrak{h}}\mathfrak{1}.  \label{CAR}
\end{equation}%
As is usual, $a(\psi )$ and $a(\psi )^{\ast }$ refer to, respectively,
annihilation and creation operators in the fermionic Fock space
representation.

For all $\omega \in \Omega $ and $\lambda ,\vartheta \in \mathbb{R}_{0}^{+}$%
, a dynamics on the $C^{\ast }$-algebra $\mathcal{U}$ is defined by the
unique strongly continuous group $\tau ^{(\omega )}\doteq (\tau
_{t}^{(\omega )})_{t\in {\mathbb{R}}}$ of (Bogoliubov) $\ast $-automorphisms
of $\mathcal{U}$ satisfying%
\begin{equation}
\tau _{t}^{(\omega )}(a(\psi ))=a(\mathrm{e}^{ith^{(\omega )}}\psi )\ ,\text{%
\qquad }t\in \mathbb{R},\ \psi \in \mathfrak{h}.  \label{rescaledbis}
\end{equation}%
See (\ref{eq:Ham_lap_pot}) as well as \cite[Theorem 5.2.5]%
{bratteli2003operator2} for more details on Bogoliubov automorphisms.

For any realization $\omega \in \Omega $ and disorder strengths $\lambda
,\vartheta \in \mathbb{R}_{0}^{+}$, the thermal equilibrium state of the
system at inverse temperature $\beta \in \mathbb{R}^{+}$ (i.e., $\beta >0$)
is by definition the unique $(\tau ^{(\omega )},\beta )$-KMS state $\varrho
^{(\omega )}$, see \cite[Example 5.3.2.]{bratteli2003operator2} or \cite[%
Theorem 5.9]{AttalJoyePillet2006a}. It is well-known that such a state is
stationary with respect to the dynamics $\tau ^{(\omega )}$, that is, 
\begin{equation*}
\varrho ^{(\omega )}\circ \tau _{t}^{(\omega )}=\varrho ^{(\omega )}\
,\qquad \omega \in \Omega ,\ t\in \mathbb{R}.
\end{equation*}%
The state $\varrho ^{(\omega )}$ is also gauge-invariant, quasi-free and
satisfies%
\begin{equation}
\varrho ^{(\omega )}(a^{\ast }\left( \varphi \right) a\left( \psi \right)
)=\left\langle \psi ,\frac{1}{1+\mathrm{e}^{\beta h^{(\omega )}}}\varphi
\right\rangle _{\mathfrak{h}},\qquad \varphi ,\psi \in \mathfrak{h}.
\label{2-point correlation function}
\end{equation}%
The gauge-invariant quasi-free state with two-point correlation functions
given by (\ref{2-point correlation function}) for $\beta =0$ is the tracial
state (or chaotic state), denoted by $\text{tr}\in \mathcal{U}^{\ast }$.

Recall that gauge-invariant quasi-free states are positive linear
functionals $\rho \in \mathcal{U}^{\ast }$ such that $\rho (\mathfrak{1})=1$
and, for all $N_{1},N_{2}\in \mathbb{N}$ and $\psi _{1},\ldots ,\psi
_{N_{1}+N_{2}}\in \mathfrak{h}$, 
\begin{equation}
\rho \left( a^{\ast }(\psi _{1})\cdots a^{\ast }(\psi _{N_{1}})a(\psi
_{N_{1}+N_{2}})\cdots a(\psi _{N_{1}+1})\right) =0  \label{ass O0-00}
\end{equation}%
if $N_{1}\neq N_{2}$, while in the case $N_{1}=N_{2}\equiv N$, 
\begin{equation}
\rho \left( a^{\ast }(\psi _{1})\cdots a^{\ast }(\psi _{N})a(\psi
_{2N})\cdots a(\psi _{N+1})\right) =\mathrm{det}\left[ \rho \left( a^{\ast
}(\psi _{k})a(\psi _{N+l})\right) \right] _{k,l=1}^{N}.  \label{ass O0-00bis}
\end{equation}%
See, e.g., \cite[Definition 3.1]{Araki}, which refers to a more general
notion of quasi-free states. The gauge-invariant property corresponds to
Equation (\ref{ass O0-00}) whereas \cite[Definition 3.1, Condition (3.1)]%
{Araki} only imposes the quasi-free state to be even, which is a strictly
weaker property than being gauge-invariant.

\subsection{Linear Response Current Density\label{Current Densities}}

\noindent \underline{(i) Paramagnetic currents:} Fix $\omega \in \Omega $
and $\vartheta \in \mathbb{R}_{0}^{+}$. For any oriented edge $(x,y)\in (%
\mathbb{Z}^{d})^{2}$, we define the paramagnetic\footnote{%
Diamagnetic currents correspond to the ballistic movement of charged
particles driven by electric fields. Their presence leads to the progressive
appearance of paramagnetic currents which are responsible for heat
production. For more details, see \cite{OhmVI,OhmII,OhmV} as well as \cite[%
Appendix C]{LDP} on linear response currents.} current observable by 
\begin{equation}
I_{(x,y)}^{(\omega )}\doteq -2\Im \mathrm{m}\left( \langle \mathfrak{e}%
_{x},\Delta _{\omega ,\vartheta }\mathfrak{e}_{y}\rangle _{\mathfrak{h}}a(%
\mathfrak{e}_{x})^{\ast }a(\mathfrak{e}_{y})\right) ,
\label{current observable}
\end{equation}%
where, as is usual, the real and imaginary parts of any element $A\in 
\mathcal{U}$ are respectively defined by 
\begin{equation}
\Re \mathrm{e}\left( A\right) \doteq \frac{1}{2}\left( A+A^{\ast }\right)
\quad \text{and}\quad \Im \mathrm{m}\left( A\right) \doteq \frac{1}{2i}%
\left( A-A^{\ast }\right) .  \label{im and real part}
\end{equation}%
The self-adjoint elements $I_{(x,y)}^{(\omega )}\in \mathcal{U}$\ are seen
as current observables because they satisfy a discrete continuity equation,
as explained in \cite[Appendix C]{LDP}. This \textquotedblleft
second-quantized\textquotedblright\ definition of a current observable and
the usual one in the one-particle setting, like in \cite{[SBB98],jfa,klm},
are perfectly equivalent, in the case of non-interacting fermions. See for
instance \cite[Appendix C.3]{LDP}. \medskip

\noindent \underline{(ii) Conductivity:} As is usual, $\left[ A,B\right]
\doteq AB-BA\in \mathcal{U}$ denotes the commutator between the elements $%
A\in \mathcal{U}$ and $B\in \mathcal{U}$. For any finite subset $\Lambda
\subsetneq \mathbb{Z}^{d}$, we define the space-averaged transport
coefficient observable $\mathcal{C}_{\Lambda }^{(\omega )}\in C^{1}(\mathbb{R%
};\mathcal{B}(\mathbb{R}^{d};\mathcal{U}^{d}))$, with respect to the
canonical basis $\{e_{q}\}_{q=1}^{d}$ of $\mathbb{R}^{d}$, by the
corresponding matrix entries%
\begin{eqnarray}
\left\{ \mathcal{C}_{\Lambda }^{(\omega )}\left( t\right) \right\} _{k,q}
&\doteq &\frac{1}{\left\vert \Lambda \right\vert }\underset{%
x,y,x+e_{k},y+e_{q}\in \Lambda }{\sum }\int\nolimits_{0}^{t}i[\tau _{-\alpha
}^{(\omega )}(I_{\left( y+e_{q},y\right) }^{(\omega )}),I_{\left(
x+e_{k},x\right) }^{(\omega )}]\mathrm{d}\alpha  \notag \\
&&+\frac{2\delta _{k,q}}{\left\vert \Lambda \right\vert }\underset{x\in
\Lambda }{\sum }\Re \mathrm{e}\left( \langle \mathfrak{e}_{x+e_{k}},\Delta
_{\omega ,\vartheta }\mathfrak{e}_{x}\rangle a(\mathfrak{e}_{x+e_{k}})^{\ast
}a(\mathfrak{e}_{x})\right)  \label{defininion para coeff observable}
\end{eqnarray}%
for any $\omega \in \Omega $, $t\in \mathbb{R}$, $\lambda ,\vartheta \in 
\mathbb{R}_{0}^{+}$ and $k,q\in \{1,\ldots ,d\}$. It is the conductivity
observable matrix associated with the lattice region $\Lambda $ and time $t$%
. See \cite[Appendix C]{LDP}. In fact, the first term in the right-hand side
of (\ref{defininion para coeff observable}) corresponds to the paramagnetic
coefficient, whereas the second one is the diamagnetic component. For more
details, see \cite[Theorem 3.7]{OhmV}. \medskip

\noindent \underline{(iii) Linear response current density:} Fix a direction 
$\vec{w}\in {\mathbb{R}}^{d}$ with $\left\Vert \vec{w}\right\Vert _{\mathbb{R%
}^{d}}=1$ and a (time-dependent) continuous, compactly supported, electric
field $\mathcal{E}\in C_{0}^{0}(\mathbb{R};\mathbb{R}^{d})$, i.e., the
external electric field is a continuous function $t\mapsto \mathcal{E}(t)\in 
\mathbb{R}^{d}$ of time $t\in \mathbb{R}$, with compact support. Then, as it
is explained in \cite[Appendix C]{LDP} as well as in \cite{OhmVI,OhmV}%
\footnote{%
Strictly speaking, these papers use smooth electric fields, but the
extension to the continuous case is straightforward.}, the space-averaged
linear response current observable in the lattice region $\Lambda $ and at
time $t=0$ in the direction $\vec{w}$ is equal to%
\begin{equation}
\mathbb{I}_{\Lambda }^{(\omega ,\mathcal{E})}\doteq \underset{k,q=1}{\sum^{d}%
}w_{k}\int_{-\infty }^{0}\left\{ \mathcal{E}\left( \alpha \right) \right\}
_{q}\left\{ \mathcal{C}_{\Lambda }^{(\omega )}\left( -\alpha \right)
\right\} _{k,q}\mathrm{d}\alpha .  \label{current}
\end{equation}%
By \cite{OhmIII, OhmVI}, the macroscopic (linear response) current density
produced by electric fields $\mathcal{E}\in C_{0}^{0}(\mathbb{R};\mathbb{R}%
^{d})$ at time $t=0$ in the direction $\vec{w}$ is consequently equal to 
\begin{equation}
x^{(\mathcal{E})}\doteq \lim_{L\rightarrow \infty }\mathbb{E}\left[ \varrho
^{(\cdot )}\left( \mathbb{I}_{\Lambda _{L}}^{(\cdot ,\mathcal{E})}\right) %
\right] \in \mathbb{R},  \label{def currents0}
\end{equation}%
where $\Lambda _{L}\doteq \{\mathbb{Z}\cap \left[ -L,L\right] \}^{d}$ for
any $L\in \mathbb{R}_{0}^{+}$. In order to obtain the current density at any
time $t\in \mathbb{R}$ in the direction $\vec{w}$, it suffices to replace $%
\mathcal{E}\in C_{0}^{0}(\mathbb{R};\mathbb{R}^{d})$ in the last two
equations with 
\begin{equation}
\mathcal{E}_{t}(\alpha )\doteq \mathcal{E}\left( \alpha +t\right) ,\qquad
\alpha \in \mathbb{R}.  \label{new field}
\end{equation}%
For a short summary on response of quasi-free fermion systems to\ electric
fields, see \cite[Appendix C]{LDP}.

\subsection{Large Deviations for Microscopic Current Densities}

Fix again a direction $\vec{w}\in {\mathbb{R}}^{d}$ with $\left\Vert \vec{w}%
\right\Vert _{\mathbb{R}^{d}}=1$ and a time-dependent electric field $%
\mathcal{E}\in C_{0}^{0}(\mathbb{R};\mathbb{R}^{d})$. Recall that $\Lambda
_{L}\doteq \{\mathbb{Z}\cap \left[ -L,L\right] \}^{d}$ for any $L\in \mathbb{%
R}_{0}^{+}$. From \cite{OhmIII, OhmVI} combined with \cite[Corollary 3.2]%
{LDP}, it follows that the distributions\footnote{%
Here, like in \cite{LDP}, the distribution associated to a selfadjoint
element $A$ of a unital $C^{\ast }$-algebra $\mathfrak{A}$ and to a state on
this algebra is the probability measure on the spectrum of $A$ representing
the restriction of the state to the unital $C^{\ast }$-subalgebra of\ $%
\mathfrak{A}$\ generated by $A$. Recall that this measure exists and is
unique, by the Riesz-Markov representation theorem.} of the microscopic
current density observables $(\mathbb{I}_{\Lambda _{L}}^{(\omega ,\mathcal{E}%
)})_{L\in \mathbb{R}^{+}}$, in the state $\varrho ^{(\omega )}$, weak$^{\ast
}$ converge, for $\omega \in \Omega $ almost surely, to the delta
distribution at the macroscopic value $x^{(\mathcal{E})}$, well-defined by
Equation (\ref{def currents0}). By \cite[Corollary 3.5]{LDP}, the quantum
uncertainty around the macroscopic value disappears \emph{exponentially fast}%
, as $L\rightarrow \infty $.

To arrive at that conclusion we use in \cite{LDP} the large deviation
formalism for the microscopic (linear response) current density in the state 
$\varrho ^{(\omega )}$. More precisely, we prove in \cite[Corollary 3.2]{LDP}
that, almost surely\footnote{%
The measurable subset $\tilde{\Omega}\subseteq \Omega $ of full measure of 
\cite[Corollary 3.2]{LDP} does not depend on $\beta \in \mathbb{R}^{+}$, $%
\vartheta ,\lambda \in \mathbb{R}_{0}^{+}$, $\mathcal{E}\in C_{0}^{0}(%
\mathbb{R};\mathbb{R}^{d})$ and $\vec{w}\in {\mathbb{R}}^{d}$ with $%
\left\Vert \vec{w}\right\Vert _{\mathbb{R}^{d}}=1$.} (or with probability
one in $\Omega $), for any borel subset $\mathcal{G}$ of $\mathbb{R}$ with
interior and closure respectively denoted by $\mathcal{G}^{\circ }$ and $%
\mathcal{\bar{G}}$, 
\begin{multline*}
-\inf_{x\in \mathcal{G}^{\circ }}\mathrm{I}^{(\mathcal{E)}}\left( x\right)
\leq \liminf_{L\rightarrow \infty }\frac{1}{\left\vert \Lambda
_{L}\right\vert }\ln \varrho ^{(\omega )}\left( \mathbf{1}\left[ \mathbb{I}%
_{\Lambda _{L}}^{(\omega ,\mathcal{E})}\in \mathcal{G}\right] \right) \\
\leq \limsup_{L\rightarrow \infty }\frac{1}{\left\vert \Lambda
_{L}\right\vert }\ln \varrho ^{(\omega )}\left( \mathbf{1}\left[ \mathbb{I}%
_{\Lambda _{L}}^{(\omega ,\mathcal{E})}\in \mathcal{G}\right] \right) \leq
-\inf_{x\in \mathcal{\bar{G}}}\mathrm{I}^{(\mathcal{E)}}\left( x\right) .
\end{multline*}%
By an abuse of notation\footnote{%
In fact, the object $\varrho ^{(\omega )}\left( \mathbf{1}\left[ \mathbb{I}%
_{\Lambda _{L}}^{(\omega ,\mathcal{E})}\in \mathcal{G}\right] \right) $ can
be easily given a precise mathematical sense by using the (up to unitary
equivalence) unique cyclic representation of the $C^{\ast }$-algebra $%
\mathcal{U}$ associated to the state $\varrho ^{(\omega )}$, noting that the
bicommutant of a $\ast $-algebra\ in any representation is a von Neumann
algebra and thus admits a mesurable calculus.}, we applied above the
(non-continuous) characteristic function $\mathbf{1}\left[ x\in \mathcal{G}%
\right] $ to $\mathbb{I}_{\Lambda _{L}}^{(\omega ,\mathcal{E})}$. Here, by 
\cite[Theorems 3.1, 3.4, Corollary 3.2]{LDP}, the so-called good\footnote{%
It means, in this context, that $\{x\in \mathbb{R}\colon \mathrm{I}^{(%
\mathcal{E)}}(x)\leq m\}$ is compact for any $m\geq 0$.} rate function $%
\mathrm{I}^{(\mathcal{E)}}$ is a deterministic, positive,
lower-semicontinuous, convex function defined by 
\begin{equation}
\mathrm{I}^{(\mathcal{E)}}(x)\doteq \sup\limits_{s\in \mathbb{R}}\left\{ sx-%
\mathrm{J}^{(s\mathcal{E})}\right\} \geq 0,\qquad x\in {\mathbb{R}},
\label{rate_function}
\end{equation}%
where%
\begin{equation}
\mathrm{J}^{(\mathcal{E})}\doteq \lim_{L\rightarrow \infty }\frac{1}{%
\left\vert \Lambda _{L}\right\vert }\mathbb{E}\left[ \ln \varrho ^{(\cdot
)}\left( \mathrm{e}^{\left\vert \Lambda _{L}\right\vert \mathbb{I}_{\Lambda
_{L}}^{(\cdot ,\mathcal{E})}}\right) \right] \in {\mathbb{R}}
\label{generating_function}
\end{equation}%
for all $\beta \in \mathbb{R}^{+}$, $\vartheta ,\lambda \in \mathbb{R}%
_{0}^{+}$, $\mathcal{E}\in C_{0}^{0}(\mathbb{R};\mathbb{R}^{d})$ and $\vec{w}%
\in {\mathbb{R}}^{d}$ with $\left\Vert \vec{w}\right\Vert _{\mathbb{R}%
^{d}}=1 $. By \cite[Theorem 3.4]{LDP}, $\mathrm{I}^{(\mathcal{E)}}$
restricted to the interior of its domain is continuous and, as clearly
expected, the rate function $\mathrm{I}^{(\mathcal{E)}}$ vanishes on the
macroscopic (linear response) current density $x^{(\mathcal{E})}$, i.e., $%
\mathrm{I}^{(\mathcal{E)}}(x^{(\mathcal{E})})=0$, whereas $\mathrm{I}^{(%
\mathcal{E)}}(x)>0$ for all $x\neq x^{(\mathcal{E})}$.

For any $\mathcal{E}\in C_{0}^{0}(\mathbb{R};\mathbb{R}^{d})$, note that
Equation (\ref{rate_function}) means that $\mathrm{I}^{(\mathcal{E)}}$ is
the Legendre-Fenchel transform of the generating function $s\mapsto \mathrm{J%
}^{(s\mathcal{E})}$ from ${\mathbb{R}}$ to itself, which is a well-defined,
continuously differentiable, convex function, by \cite[Theorem 3.1]{LDP}.
Moreover, by \cite[Corollary 4.20 and Equation (54)]{LDP}, for any $\beta
\in \mathbb{R}^{+}$, $\vartheta ,\lambda \in \mathbb{R}_{0}^{+}$, $\mathcal{E%
}\in C_{0}^{0}(\mathbb{R};\mathbb{R}^{d})$, $\vec{w}\in {\mathbb{R}}^{d}$
with $\left\Vert \vec{w}\right\Vert _{\mathbb{R}^{d}}=1$, the macroscopic
current density defined by (\ref{def currents0}) can be expressed in terms
of the generating function: 
\begin{equation}
x^{(\mathcal{E})}=\partial _{s}\mathrm{J}^{(s\mathcal{E})}|_{s=0}\ .
\label{def currents}
\end{equation}

\section{Main Results\label{sec:main}}

In order to provide a rather complete study of conductivity at the atomic
scale for free-fermions in a lattice, we analyse here the rate function
defined by Equation (\ref{rate_function}) in much more detail than in \cite%
{LDP}. See \cite[Corollary 3.2]{LDP}. We focus on the behavior of the rate
function near the macroscopic value of the current density (see (\ref{def
currents})), because it establishes a very interesting connection between
exponential suppression of quantum uncertainties at the atomic scale and the
concept of \emph{quantum fluctuations}, in the case of currents.

\subsection{Quantum Fluctuations of Linear Response Currents and Rate
Function\label{rate_quantum_fluctuation}}

For any inverse temperature $\beta \in \mathbb{R}^{+}$, disorder strengths $%
\vartheta ,\lambda \in \mathbb{R}_{0}^{+}$, disorder realization $\omega \in
\Omega $, direction $\vec{w}\in {\mathbb{R}}^{d}$ with $\left\Vert \vec{w}%
\right\Vert _{\mathbb{R}^{d}}=1$ and time-dependent electric field $\mathcal{%
E}\in C_{0}^{0}(\mathbb{R};\mathbb{R}^{d})$, the quantum fluctuations of
linear response currents in cubic boxes are defined to be%
\begin{equation}
\mathbf{F}_{L}^{(\omega ,\mathcal{E})}\doteq |\Lambda _{L}|\left( \varrho
^{(\omega )}\left( \left( \mathbb{I}_{\Lambda _{L}}^{(\omega ,\mathcal{E}%
)}\right) ^{2}\right) -\varrho ^{(\omega )}\left( \mathbb{I}_{\Lambda
_{L}}^{(\omega ,\mathcal{E})}\right) ^{2}\right) \geq 0,\qquad L\in \mathbb{R%
}_{0}^{+},  \label{qq}
\end{equation}%
with $\Lambda _{L}\doteq \{\mathbb{Z}\cap \left[ -L,L\right] \}^{d}$ and $%
\mathbb{I}_{\Lambda _{L}}^{(\omega ,\mathcal{E})}(t)$ being the
space-averaged linear response current defined by (\ref{current}). Observe
that 
\begin{equation*}
\left\vert \Lambda _{L}\right\vert \varrho ^{(\omega )}(\mathbb{I}_{\Lambda
_{L}}^{(\omega ,\mathcal{E})}(t)),\qquad L\in \mathbb{R}_{0}^{+},
\end{equation*}%
is the (total) current linear response (in the direction $\vec{w}$) to the
electric field and, consequently, 
\begin{equation}
\mathbf{F}_{L}^{(\omega ,\mathcal{E})}=\frac{1}{|\Lambda _{L}|}\left(
\varrho ^{(\omega )}\left( \left( \left\vert \Lambda _{L}\right\vert \mathbb{%
I}_{\Lambda _{L}}^{(\omega ,\mathcal{E})}\right) ^{2}\right) -\varrho
^{(\omega )}\left( \left\vert \Lambda _{L}\right\vert \mathbb{I}_{\Lambda
_{L}}^{(\omega ,\mathcal{E})}\right) ^{2}\right) ,\qquad L\in \mathbb{R}%
_{0}^{+},  \label{qq2}
\end{equation}%
are naturally seen as (normal) quantum fluctuations of the (total) linear
response current. Note that these quantum fluctuations are not quite the
same current fluctuations of \cite{OhmIII,OhmIV}, which correspond only to
the paramagnetic component of the current, whereas ($\mathbf{F}_{L}^{(\omega
,\mathcal{E})}$) also includes the diamagnetic one and thus refers to the
total current.

Recall that $x^{(\mathcal{E})}$ is the macroscopic (linear response) current
density defined by (\ref{def currents0}) and $\mathrm{I}^{(\mathcal{E)}}$ (%
\ref{rate_function}) is the (good) rate function associated with the large
deviation principle of the sequence $\{\mathbb{I}_{\Lambda _{L}}^{(\omega ,%
\mathcal{E})}\}_{L\in \mathbb{R}^{+}}$ of microscopic current densities, in
the KMS state $\varrho ^{(\omega )}$ and with speed $\left\vert \Lambda
_{L}\right\vert $. See, e.g., \cite[Theorems 3.1, 3.4, Corollary 3.2]{LDP}.
We are now in a position to connect the quantum fluctuations of (linear
response) currents with the generating and rate functions associated with
the large deviation principle of microscopic current densities.

\begin{theorem}[Quantum fluctuations and rate function]
\label{Quantum fluctuations and rate function}\mbox{}\newline
There is a measurable subset $\tilde{\Omega}\subseteq \Omega $ of full
measure such that, for all $\beta \in \mathbb{R}^{+}$, $\vartheta ,\lambda
\in \mathbb{R}_{0}^{+}$, $\omega \in \tilde{\Omega}$, $\mathcal{E}\in
C_{0}^{0}(\mathbb{R};\mathbb{R}^{d})$ and $\vec{w}\in {\ \mathbb{R}}^{d}$
with $\left\Vert \vec{w}\right\Vert _{\mathbb{R}^{d}}=1$, the following
properties hold true: \newline
\emph{(i)} The generating function $s\mapsto \mathrm{J}^{(s\mathcal{E})}$
defined by (\ref{generating_function}) belongs to $C^{\infty }\left( \mathbb{%
R};\mathbb{R}\right) $ and satisfies 
\begin{equation}
\partial _{s}^{2}\mathrm{J}^{(s\mathcal{E})}|_{s=0}=\lim_{L\rightarrow
\infty }\mathbb{E}\left[ \mathbf{F}_{L}^{(\cdot ,\mathcal{E})}\right]
=\lim_{L\rightarrow \infty }\mathbf{F}_{L}^{(\omega ,\mathcal{E})}\geq 0.%
\text{ }  \label{sd}
\end{equation}%
\emph{(ii)} The rate function $\mathrm{I}^{(\mathcal{E)}}$ satisfies the
asymptotics 
\begin{equation*}
\mathrm{I}^{(\mathcal{E)}}(x)=\frac{1}{2\partial _{s}^{2}\mathrm{J}^{(s%
\mathcal{E})}|_{s=0}}\left( x-x^{(\mathcal{E})}\right) ^{2}+o\left( \left(
x-x^{(\mathcal{E})}\right) ^{2}\right) ,
\end{equation*}%
provided that $\partial _{s}^{2}\mathrm{J}^{(s\mathcal{E})}|_{s=0}\neq 0$.
\end{theorem}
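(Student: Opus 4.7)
The two parts of the theorem rest on the same observation: by linearity of the current observable in the driving field (which makes $\mathbb{I}_{\Lambda_L}^{(\omega, s\mathcal{E})} = s\, \mathbb{I}_{\Lambda_L}^{(\omega, \mathcal{E})}$ in its dependence on $s$), the finite-volume generating function
\[
\mathrm{J}_L^{(\omega, s\mathcal{E})} \doteq \frac{1}{|\Lambda_L|}\ln \varrho^{(\omega)}\!\left(\mathrm{e}^{|\Lambda_L| s\, \mathbb{I}_{\Lambda_L}^{(\omega,\mathcal{E})}}\right)
\]
is real-analytic in $s$, and a direct differentiation gives $\partial_s \mathrm{J}_L^{(\omega,s\mathcal{E})}|_{s=0} = \varrho^{(\omega)}(\mathbb{I}_{\Lambda_L}^{(\omega,\mathcal{E})})$ and $\partial_s^2 \mathrm{J}_L^{(\omega,s\mathcal{E})}|_{s=0} = \mathbf{F}_L^{(\omega,\mathcal{E})}$, since the second logarithmic derivative of a moment generating function is the variance. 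Thus (\ref{sd}) reduces to exchanging the limit $L\to\infty$ with $\partial_s^2|_{s=0}$, and this is the heart of part (i).

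For part (i), the plan is to establish that the maps $s \mapsto \mathrm{J}_L^{(\omega, s\mathcal{E})}$ and all their derivatives are equicontinuous and uniformly bounded on compact subsets of $\mathbb{R}$, so that an Arzel\`a--Ascoli argument --- exactly the mechanism used in \cite[Corollary 4.20]{LDP} for the first and second derivatives --- upgrades the already known pointwise convergence $\mathrm{J}_L^{(\omega, s\mathcal{E})} \to \mathrm{J}^{(s\mathcal{E})}$ to $C^k$ convergence for every $k$. Concretely, I would compute $\partial_s^k \mathrm{J}_L^{(\omega,s\mathcal{E})}$ systematically and identify it as a truncated (connected) correlation function of $\mathbb{I}_{\Lambda_L}^{(\omega,\mathcal{E})}$ with respect to the perturbed state $A \mapsto \varrho^{(\omega)}(\mathrm{e}^{|\Lambda_L| s \mathbb{I}_{\Lambda_L}^{(\omega,\mathcal{E})}} A)/\varrho^{(\omega)}(\mathrm{e}^{|\Lambda_L| s \mathbb{I}_{\Lambda_L}^{(\omega,\mathcal{E})}})$. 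Writing the order-$k$ truncated correlation as a sum over connected pairings, one then applies the same box decomposition, Bogoliubov-type inequality and Combes--Thomas locality input that was used in \cite[Section 4.3]{LDP} to prove the existence of $\mathrm{J}^{(\mathcal{E})}$; the space-decay of two-point KMS correlations controls the contribution of each connected diagram and yields bounds of order $|\Lambda_L|$ on $\partial_s^k$ (and hence of order $1$ on $\partial_s^k \mathrm{J}_L^{(\omega,s\mathcal{E})}$), uniformly in $L$, in $\omega$ in a full-measure set, and locally uniformly in $s$. The deterministic full-measure set $\tilde{\Omega}$ comes, as in \cite{LDP}, from the Akcoglu--Krengel ergodic theorem applied to the averaged generating functions. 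Dominated convergence then gives the expectation statement in (\ref{sd}), while the almost sure equality follows from the $\omega$-uniform bounds combined with the almost sure convergence of $\mathrm{J}_L^{(\omega,s\mathcal{E})}$ on a countable dense set of $s$ extended to $C^2$ convergence by equicontinuity. The main obstacle is the bookkeeping for the third-order (and by induction all higher-order) derivatives: the combinatorics of connected correlations together with the careful use of Combes--Thomas decay to beat the $|\Lambda_L|^{k-1}$ naive volume factor; this is precisely the computational step the excerpt flags as harder than its second-order predecessor in \cite{LDP}.

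Given part (i), part (ii) reduces to a Legendre--Fenchel Taylor expansion. By (i), $g(s) \doteq \mathrm{J}^{(s\mathcal{E})}$ is of class $C^{\infty}$ and convex, with $g'(0) = x^{(\mathcal{E})}$ by (\ref{def currents}) and $g''(0) = \partial_s^2 \mathrm{J}^{(s\mathcal{E})}|_{s=0} > 0$ by hypothesis. The definition (\ref{rate_function}) gives $\mathrm{I}^{(\mathcal{E})}(x) = \sup_{s\in\mathbb{R}}(sx - g(s))$, and since $g$ is smooth and strictly convex in a neighborhood of $0$, the implicit function theorem applied to $g'(s) = x$ produces a smooth local inverse $s^{\ast}(x)$ with $s^{\ast}(x^{(\mathcal{E})}) = 0$ and $(s^{\ast})'(x^{(\mathcal{E})}) = 1/g''(0)$. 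Taylor expansion yields
\[
\mathrm{I}^{(\mathcal{E})}(x) = s^{\ast}(x)\, x - g(s^{\ast}(x)) = \frac{1}{2 g''(0)}\bigl(x - x^{(\mathcal{E})}\bigr)^2 + o\!\left(\bigl(x - x^{(\mathcal{E})}\bigr)^2\right),
\]
which is exactly the stated asymptotics. No obstacle of substance arises in this second step; all the analytic content is carried by part (i).
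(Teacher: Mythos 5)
Your proposal follows essentially the same route as the paper: part (i) is proven by expressing the higher $s$-derivatives of the finite-volume generating functions as truncated (connected) correlations of the perturbed quasi-free state, bounding them uniformly via the Combes--Thomas locality estimates of \cite{LDP} (the paper does exactly this for the third derivative in Proposition \ref{3eme_derivee}, noting the general $n$-th order structure for smoothness), and then upgrading the almost-sure convergence to $C^{2}$ (indeed $C^{\infty}$) convergence by the Arzel\`a--Ascoli argument of \cite[Corollary 4.20]{LDP}; part (ii) is the same inverse-function-theorem plus Taylor expansion of the Legendre--Fenchel transform used in the paper. No substantive deviation or gap to report.
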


\begin{proof}
Fix all parameters of the theorem. By Corollary \ref{final corrollary}, the
generating function $s\mapsto \mathrm{J}^{(s\mathcal{E})}$ belongs to $%
C^{2}\left( \mathbb{R};\mathbb{R}\right) $ and satisfies (\ref{sd}). As
explained after Corollary \ref{final corrollary}, under the assumptions of
Theorem \ref{Quantum fluctuations and rate function}, one can
straightforwardly extend our arguments to prove that the generating function 
$s\mapsto \mathrm{J}^{(s\mathcal{E})}$ defined by (\ref{generating_function}%
) is infinitely differentiable. (i) thus holds true. It remains to prove
Assertion (ii): Since the map $s\mapsto \mathrm{J}^{(s\mathcal{E})}$ from ${%
\mathbb{R}}$ to itself is convex and belongs (at least) to $C^{1}\left( 
\mathbb{R};\mathbb{R}\right) $ (see, e.g., Assertion (i) or \cite[Theorem 3.1%
]{LDP}), all finite solutions $s(x)\in \mathbb{R}$ of the variational
problem (\ref{rate_function}) for $x\in \mathbb{R}$, i.e., 
\begin{equation}
\mathrm{I}^{(\mathcal{E)}}(x)=s(x)x-\mathrm{J}^{(s(x)\mathcal{E})},
\label{definition rate}
\end{equation}%
satisfy 
\begin{equation}
x=f(s(x)),  \label{blabla_1}
\end{equation}%
with $f$ being the real-valued function defined by 
\begin{equation}
f(s)\doteq \partial _{s}\mathrm{J}^{(s\mathcal{E})},\qquad s\in \mathbb{R}.
\label{def f}
\end{equation}%
Assume now that $\partial _{s}^{2}\mathrm{J}^{(s\mathcal{E})}|_{s=0}\neq 0$,
which is equivalent in this case to%
\begin{equation}
\partial _{s}f(0)=\partial _{s}^{2}\mathrm{J}^{(s\mathcal{E})}|_{s=0}>0,
\label{blabla_2}
\end{equation}%
by positivity of fluctuations (see (i)). Since, by Corollary \ref{final
corrollary}, the mapping $s\mapsto \mathrm{J}^{(s\mathcal{E})}$ from ${%
\mathbb{R}}$ to itself belongs (at least) to $C^{2}\left( \mathbb{R};\mathbb{%
R}\right) $, by the inverse function theorem combined with (\ref{definition
rate})-(\ref{blabla_2}) and (\ref{def currents}), there is an open interval 
\begin{equation*}
\mathcal{I}\subseteq \left\{ f(s):s\in {\mathbb{R}}\text{ such that }%
\partial _{s}f(s)>0\right\} \subseteq {\mathbb{R}}
\end{equation*}%
containing $x^{(\mathcal{E})}=f(0)$ and a $C^{1}$-function $x\mapsto s(x)$
from $\mathcal{I}$ to ${\mathbb{R}}$ such that Equations (\ref{definition
rate})-(\ref{def f}) hold true. In particular, 
\begin{equation}
\partial _{s}f(s(x))=\partial _{s}^{2}\mathrm{J}^{(s\mathcal{E}%
)}|_{s=s(x)}>0,\qquad x\in \mathcal{I}.  \label{postive2}
\end{equation}%
Clearly,%
\begin{equation}
\partial _{x}s(x)=\frac{1}{\partial _{s}f\left( s(x)\right) },\qquad x\in 
\mathcal{I}.  \label{def f2}
\end{equation}%
We thus infer from (\ref{definition rate})-(\ref{def f}) and (\ref{def f2}),
together with (i), that 
\begin{equation*}
\partial _{x}\mathrm{I}^{(\mathcal{E)}}(x)=s(x),\qquad x\in \mathcal{I}.
\end{equation*}%
Consequently, $\partial _{x}\mathrm{I}^{(\mathcal{E)}}$ is differentiable on 
$\mathcal{I}$ with derivative given by 
\begin{equation*}
\partial _{x}^{2}\mathrm{I}^{(\mathcal{E)}}(x)=\partial _{x}s(x),\qquad x\in 
\mathcal{I}.
\end{equation*}%
As a consequence, $\mathrm{I}^{(\mathcal{E)}}$ is twice differentiable on $%
\mathcal{I}\supseteq \{x^{(\mathcal{E})}\}$ and, using the Taylor theorem at
the point $x^{(\mathcal{E})}$, one obtains that 
\begin{equation}
\mathrm{I}^{(\mathcal{E)}}(x)=s(x^{(\mathcal{E})})\left( x-x^{(\mathcal{E}%
)}\right) +\frac{1}{2}\partial _{x}s(x^{(\mathcal{E})})\left( x-x^{(\mathcal{%
E})}\right) ^{2}+o\left( \left( x-x^{(\mathcal{E})}\right) ^{2}\right) ,
\label{blabla_0}
\end{equation}%
provided (\ref{blabla_2}) holds true. Since, by (\ref{def currents}), (\ref%
{def f}) and (\ref{def f2}), $s(x^{(\mathcal{E})})=0$ and 
\begin{equation*}
\partial _{x}s(x^{(\mathcal{E})})=\frac{1}{\partial _{s}f\left( 0\right) }%
=\left. \frac{1}{\partial _{s}^{2}\mathrm{J}^{(s\mathcal{E})}}\right\vert
_{s=0},
\end{equation*}%
one thus deduces (ii) from (\ref{blabla_0}).
\end{proof}

This theorem is a very interesting observation on the physics of fermionic
systems because it shows that the experimental measure of the rate function
of currents around the expected value leads to an experimental estimate on
the corresponding quantum fluctuations. Conversely, by Theorem \ref{Quantum
fluctuations and rate function}, an experimental estimate on these quantum
fluctuations gives the behavior of the corresponding rate function around
the expected value. This phenomenon is certainly not restricted to fermionic
currents and this is a new observation on transport properties of quantum
many-body systems, to our knowledge.

\begin{remark}[Extension of Theorem \protect\ref{Quantum fluctuations and
rate function}]
\mbox{}\newline
The proof of Theorem \ref{Quantum fluctuations and rate function} can be
generalized to very general kinetic terms (i.e., it does not really depend
on the special choice $\Delta _{\omega ,\vartheta }$), provided the pivotal
Combes-Thomas estimate holds true for the one-particle Hamiltonian. Note,
however, that this would require a new, more complicated, definition of
currents, which results from the commutator of the density operator at fixed
lattice site with the kinetic term (cf. continuity equations on the CAR
algebra \cite[Eqs. (38)--(39)]{OhmV}). We did not implement this
generalization here, because we think that, conceptually, the gain is too
small as compared to the drawbacks concerning notations, definitions, and
technical proofs. Instead, we aim at obtaining an extension of Theorem \ref%
{Quantum fluctuations and rate function} to weakly interacting fermionic
systems by using new constructive methods based on Grassmann-Berezin
integrals, Brydges-Kennedy expansions, etc.
\end{remark}

\subsection{Non-Vanishing Quantum Fluctuations of Linear Response Currents 
\label{Non-Vanishing}}

By Theorem \ref{Quantum fluctuations and rate function}, the behavior of the
rate function within a neighborhood of the macroscopic current densities is
directly related to the quantum fluctuations of the linear response current,
provided these fluctuations do not vanish in the thermodynamic limit, i.e.,
if $\partial _{s}^{2}\mathrm{J}^{(s\mathcal{E})}|_{s=0}\neq 0$ (see Theorem %
\ref{Quantum fluctuations and rate function} (i)). We do not expect this
situation to appear in presence of disorder. We discuss this issue in
Section \ref{positivite}, where we give sufficient conditions ensuring
non-vanishing quantum fluctuations of linear response currents in the
thermodynamic limit. This study leads to the following theorem:

\begin{theorem}[Sufficient conditions for non-zero quantum fluctuations]
\label{new theorem}\mbox{}\newline
Take $\vartheta ,\lambda \in \mathbb{R}_{0}^{+}$, $T,\beta \in \mathbb{R}%
^{+} $, $\mathcal{E}\in C_{0}^{0}(\mathbb{R};\mathbb{R}^{d})$ with support
in $[-T,0]$ and $\vec{w}\doteq (w_{1},\ldots ,w_{d})\in {\mathbb{R}}^{d}$
with $\left\Vert \vec{w}\right\Vert _{\mathbb{R}^{d}}=1$. Assume that the
random variables $\{\omega _{1}\left( z\right) \}_{z\in \mathbb{Z}^{d}}$ are
independently and identically distributed (i.i.d.). Then, for sufficiently
small $T$ and $\vartheta $,%
\begin{equation*}
\partial _{s}^{2}\mathrm{J}^{(s\mathcal{E})}|_{s=0}\geq \frac{\lambda
^{2}\Upsilon ^{(\mathcal{E},\vec{w})}}{\left( 1+\mathrm{e}^{\beta \left(
2d\left( 2+\vartheta \right) +\lambda \right) }\right) ^{2}}\mathrm{Var}%
\left[ (\cdot )_{1}\left( 0\right) \right]
\end{equation*}%
with 
\begin{equation*}
\Upsilon ^{(\mathcal{E},\vec{w})}\mathbf{\doteq }\left( \int_{-\infty
}^{0}\left\langle w,\mathcal{E}\left( \alpha \right) \right\rangle _{\mathbb{%
R}^{d}}\alpha ^{2}\mathrm{d}\alpha \right) ^{2}+\frac{1}{2}%
\sum_{k=1}^{d}\left( w_{k}\int_{-\infty }^{0}\left( \mathcal{E}\left( \alpha
\right) \right) _{k}\alpha ^{2}\mathrm{d}\alpha \right) ^{2}.
\end{equation*}%
In particular, $\partial _{s}^{2}\mathrm{J}^{(s\mathcal{E})}|_{s=0}\neq 0$
whenever $\Upsilon ^{(\mathcal{E},\vec{w})}>0$, $\omega _{1}\left( 0\right) $
is not almost surely constant (and thus $\mathrm{Var}\left[ (\cdot
)_{1}\left( 0\right) \right] >0$, by Chebychev's inequality) and $%
T,\vartheta $ are sufficiently small.
\end{theorem}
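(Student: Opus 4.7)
The plan is to follow the scheme sketched in the introduction. By Theorem \ref{Quantum fluctuations and rate function}(i) one has $\partial_s^2\mathrm{J}^{(s\mathcal{E})}|_{s=0}=\lim_{L\to\infty}\mathbb{E}[\mathbf{F}_L^{(\cdot,\mathcal{E})}]$, so it suffices to bound $\mathbb{E}[\mathbf{F}_L^{(\cdot,\mathcal{E})}]$ from below by the announced quantity, uniformly in $L$. First, because each $I_{(x,y)}^{(\omega)}$ is bilinear in the CAR generators, the Bogoliubov dynamics $\tau^{(\omega)}$ preserves bilinearity, and commutators of bilinear operators on the CAR algebra are again bilinear, the centered observable $|\Lambda_L|\mathbb{I}_{\Lambda_L}^{(\omega,\mathcal{E})}-\varrho^{(\omega)}(|\Lambda_L|\mathbb{I}_{\Lambda_L}^{(\omega,\mathcal{E})})\mathfrak{1}$ can be written as $\sum_{x,y\in\mathbb{Z}^d}(M_L^{(\omega,\mathcal{E})})_{xy}(a(\mathfrak{e}_x)^*a(\mathfrak{e}_y)-\varrho^{(\omega)}(a(\mathfrak{e}_x)^*a(\mathfrak{e}_y)))$ for an explicit self-adjoint $M_L^{(\omega,\mathcal{E})}\in\mathcal{B}(\mathfrak{h})$. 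Combining the Wick rule (\ref{ass O0-00bis}) with the two-point function (\ref{2-point correlation function}) yields the quasi-free fluctuation formula
\begin{equation*}
\mathbf{F}_L^{(\omega,\mathcal{E})}=\frac{1}{|\Lambda_L|}\mathrm{tr}_{\mathfrak{h}}\bigl(M_L^{(\omega,\mathcal{E})}\,\mathrm{d}_\beta^{(\omega)}\,(M_L^{(\omega,\mathcal{E})})^*\,(\mathbf{1}-\mathrm{d}_\beta^{(\omega)})\bigr),\qquad \mathrm{d}_\beta^{(\omega)}\doteq (1+\mathrm{e}^{\beta h^{(\omega)}})^{-1}.
\end{equation*}
Next, the norm bound $\|h^{(\omega)}\|\leq 2d(2+\vartheta)+\lambda$, which follows directly from (\ref{equation sup})-(\ref{eq:Ham_lap_pot}), implies that $f(x)(1-f(y))\geq (1+\mathrm{e}^{\beta(2d(2+\vartheta)+\lambda)})^{-2}$ for all $x,y$ in the spectrum of $h^{(\omega)}$, with $f(a)\doteq(1+\mathrm{e}^{\beta a})^{-1}$. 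Diagonalizing $h^{(\omega)}$ (so that both $\mathrm{d}_\beta^{(\omega)}$ and $\mathbf{1}-\mathrm{d}_\beta^{(\omega)}$ are simultaneously diagonal) then gives
\begin{equation*}
\mathbf{F}_L^{(\omega,\mathcal{E})}\geq \frac{1}{(1+\mathrm{e}^{\beta(2d(2+\vartheta)+\lambda)})^2}\cdot \frac{1}{|\Lambda_L|}\,\|M_L^{(\omega,\mathcal{E})}\|_{\mathrm{HS}}^2.
\end{equation*}

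It remains to bound $\mathbb{E}[|\Lambda_L|^{-1}\|M_L^{(\cdot,\mathcal{E})}\|_{\mathrm{HS}}^2]$ from below by $\lambda^2\Upsilon^{(\mathcal{E},\vec{w})}\mathrm{Var}[\omega_1(0)]$ in the limit $L\to\infty$, for small $T$ and $\vartheta$. Since $\mathcal{E}$ is supported in $[-T,0]$, the time variable $\alpha$ appearing in (\ref{current})-(\ref{defininion para coeff observable}) satisfies $|\alpha|\leq T$, and I would Taylor-expand the Bogoliubov dynamics $e^{-i\alpha h^{(\omega)}}$ with errors uniform in $\omega$ thanks to the spectral bound above. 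Using the identification between commutators in the CAR algebra and matrix commutators on $\mathfrak{h}$, a short computation shows that the paramagnetic part of $\mathcal{C}_{\Lambda_L}^{(\omega)}(-\alpha)$ contributes at second order in $\alpha$ a matrix proportional to $\tfrac{\alpha^2}{2}[[h^{(\omega)},M_{I,(y+e_q,y)}],M_{I,(x+e_k,x)}]$, where $M_{I,(x,y)}$ is the one-particle matrix of $I_{(x,y)}^{(\omega)}$, while the diamagnetic part is $\alpha$-independent. Splitting $h^{(\omega)}=\Delta_0+\lambda\omega_1+\vartheta R_{\omega_2}$, the genuinely random piece of the inner commutator is $\lambda[\omega_1,M_{I,(y+e_q,y)}]$, a multiplication operator by differences of neighboring $\omega_1$-values; the $\vartheta R_{\omega_2}$ contribution is absorbed into a small remainder when $\vartheta$ is small. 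Integrating against $\mathcal{E}(\alpha)$ produces the $\alpha^2$-moments entering $\Upsilon^{(\mathcal{E},\vec{w})}$, and the precise $\vec{w}$- and $(k,q)$-contractions in the squared Hilbert-Schmidt norm of the double commutator yield exactly the two terms of $\Upsilon^{(\mathcal{E},\vec{w})}$. Taking disorder expectation, the i.i.d.\ property of $\{\omega_1(z)\}_{z\in\mathbb{Z}^d}$ implies that only site-diagonal products survive in $\mathbb{E}[\|M_L^{(\cdot,\mathcal{E})}\|_{\mathrm{HS}}^2]$ at leading order; each such contribution carries a factor $\mathrm{Var}[\omega_1(0)]$, and translation invariance (up to boundary terms negligible in the thermodynamic limit) produces an overall factor $|\Lambda_L|$, cancelling the $|\Lambda_L|^{-1}$ prefactor.

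The main obstacle is the uniform control of remainders in the joint small-$(T,\vartheta)$ expansion. One must carefully bound the higher-order $\alpha$-contributions, the $\vartheta$-dependent perturbation of $\Delta_{\omega,\vartheta}$ relative to $\Delta_0$, and the many off-diagonal cross terms generated by the Hilbert-Schmidt sum, so that after disorder averaging none of them can cancel the leading $\alpha^2$-in-time, variance-in-disorder contribution. This is achieved by combining operator-norm bounds on iterated commutators with $h^{(\omega)}$, the smallness of $T$ and $\vartheta$, and the fact that only site-diagonal products of the i.i.d.\ variables $\{\omega_1(z)\}_{z\in\mathbb{Z}^d}$ survive the disorder expectation in the Hilbert-Schmidt sum at leading order.
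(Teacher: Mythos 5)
Your first two reductions coincide with the paper's own: the quasi-free fluctuation formula is exactly Lemma \ref{equation inmportantelemma} (note $\frac{1}{1+\mathrm{e}^{-\beta h^{(\omega )}}}=\mathbf{1}-\frac{1}{1+\mathrm{e}^{\beta h^{(\omega )}}}$), and the lower bound by $(1+\mathrm{e}^{\beta (2d(2+\vartheta )+\lambda )})^{-2}$ times the Hilbert--Schmidt norm of the current operator is Lemma \ref{long_lemme copy(2)}. The gap lies in your last step. You propose to bound $\mathbb{E}\bigl[|\Lambda _{L}|^{-1}\Vert K_{\{\Lambda _{L}\},\{\mathbb{Z}^{d}\}}^{(\cdot ,\mathcal{E})}\Vert _{\mathrm{HS}}^{2}\bigr]$ from below by the Hilbert--Schmidt norm of the random $\alpha ^{2}$-piece extracted from a Taylor expansion. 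Writing $K=A+B$ with $B$ that random piece, one only has $\Vert A+B\Vert _{\mathrm{HS}}^{2}\geq \Vert B\Vert _{\mathrm{HS}}^{2}-2|\langle A,B\rangle _{\mathrm{HS}}|$, and the cross term is a priori of order $\Vert A\Vert _{\mathrm{HS}}\Vert B\Vert _{\mathrm{HS}}\gg \Vert B\Vert _{\mathrm{HS}}^{2}$, because $A$ contains the diamagnetic term and the zeroth-order-in-time paramagnetic commutator, which are not small in $T,\vartheta$. Your proposed cure -- that disorder averaging leaves only site-diagonal products of the i.i.d.\ variables -- does not dispose of these cross terms: $A$ depends on $\omega _{2}$ (through $\Delta _{\omega ,\vartheta }$ and the dynamics) and, at higher order in $\alpha $, on $\omega _{1}$ itself, and the theorem does not assume $\omega _{1}$ independent of $\omega _{2}$, so $\mathbb{E}[\langle A,B\rangle _{\mathrm{HS}}]$ has no reason to vanish, nor is it shown to be dominated by the variance term. (The diamagnetic--$B$ cross term does in fact vanish, but for a structural reason -- nearest-neighbour versus distance-$0$/$2$ matrix elements -- that you never identify, and it does not help with the remaining cross terms.) Finally, the claim that \textquotedblleft the contractions yield exactly the two terms of $\Upsilon ^{(\mathcal{E},\vec{w})}$\textquotedblright\ is asserted rather than derived; in particular the $\delta _{k,q}$ structure of the leading random term, which is what produces that specific combination, never appears in your computation.

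The paper avoids all of this with one structural move you are missing: it discards the off-diagonal entries, using $\mathrm{Tr}_{\mathfrak{h}}(K^{\ast }K)\geq \sum_{z\in \Lambda _{L/2}}|\langle \mathfrak{e}_{z},K\mathfrak{e}_{z}\rangle _{\mathfrak{h}}|^{2}$ as in (\ref{ddddddddd}). On the diagonal the diamagnetic part $\mathbf{M}_{k}^{(L,\omega )}$ contributes nothing (Lemma \ref{long_lemme copy(3)}), and the small-$\vartheta$ and small-$\gamma$ expansions (Lemmata \ref{les termes en eta et eta^2 sont nuls copy(4)} and \ref{long_lemme copy(5)}) show that, up to remainders $\mathcal{O}(\vartheta )+\mathcal{O}(T^{2})$ uniform in $\omega $ and a term vanishing as $L\rightarrow \infty $, the diagonal entry equals $\lambda \sum_{k}w_{k}\int_{-\infty }^{0}\{\mathcal{E}(\alpha )\}_{k}\,\alpha ^{2}\,\bigl(2\omega _{1}(z)-\omega _{1}(z+e_{k})-\omega _{1}(z-e_{k})\bigr)\mathrm{d}\alpha $ -- note the $\delta _{k,q}$. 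Squaring, averaging with the i.i.d.\ identity (\ref{iid}) and translation invariance then give Lemma \ref{long_lemme copy(6)} and hence (\ref{inequalty fluctuation 1}). Unless you either adopt this diagonal restriction (or an equivalent device killing the large non-random contributions before squaring), or supply genuine estimates on all cross terms, including the $\omega _{2}$-dependent and the higher-order $\omega _{1}$-dependent ones, your argument does not close.
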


\begin{proof}
This is a direct consequence of Equations (\ref{inequalty fluctuation 1})
and (\ref{iid}), in Section \ref{sec:proofs}.
\end{proof}

By Theorems \ref{Quantum fluctuations and rate function} and \ref{new
theorem}, we thus demonstrate that, in general, the quantum fluctuations of
linear response currents do not vanish in the thermodynamic limit and the
quantum uncertainty around the macroscopic current density $x^{(\mathcal{E}%
)} $ disappears exponentially fast, as the volume of the cubic box $\Lambda
_{L} $ grows, with a rate proportional to the squared deviation of the
current from $x^{(\mathcal{E})}$ and the inverse current fluctuation. In
particular, by combining Theorem \ref{Quantum fluctuations and rate function}
(i) with Theorem \ref{new theorem} we can obtain an \emph{explicit} upper
bound on the rate function $\mathrm{I}^{(\mathcal{E)}}$ around $x^{(\mathcal{%
E})}$.

The fact that the random variables $\{\omega _{1}\left( z\right) \}_{z\in 
\mathbb{Z}^{d}}$ are independently and identically distributed (i.i.d.) in
Theorem \ref{new theorem} is not essential here: For any $\omega \in \Omega $%
, let $w^{(\omega )}\doteq (w_{1}^{(\omega )},\ldots ,w_{d}^{(\omega )})\in 
\mathbb{R}^{d}$ be the random vector defined by 
\begin{equation*}
w_{k}^{(\omega )}\doteq \left( 2\omega _{1}\left( 0\right) -\omega
_{1}\left( e_{k}\right) -\omega _{1}\left( -e_{k}\right) \right)
w_{k},\qquad k\in \{1,\ldots ,d\},
\end{equation*}%
with $\{e_{k}\}_{k=1}^{d}$ being the canonical basis of $\mathbb{R}^{d}$. By
(\ref{inequalty fluctuation 0}), (\ref{inequalty fluctuation 1}) and (\ref%
{inequalty fluctuation 2}), it suffices that 
\begin{equation*}
\mathbb{E}\left[ \left\vert \int_{-\infty }^{0}\left\langle w^{(\cdot )},%
\mathcal{E}\left( \alpha \right) \right\rangle _{\mathbb{R}^{d}}\alpha ^{2}%
\mathrm{d}\alpha \right\vert ^{2}\right] =\mathrm{Var}\left[ \int_{-\infty
}^{0}\left\langle w^{(\cdot )},\mathcal{E}\left( \alpha \right)
\right\rangle _{\mathbb{R}^{d}}\alpha ^{2}\mathrm{d}\alpha \right] >0
\end{equation*}%
in order to ensure non-vanishing quantum fluctuations of linear response
currents in the thermodynamic limit, i.e., $\partial _{s}^{2}\mathrm{J}^{(s%
\mathcal{E})}|_{s=0}\neq 0$.

Theorem \ref{new theorem} can be applied to the celebrated tight-binding
Anderson model, which corresponds to the special case $\vartheta =0$. This
is why we focus on this important example in this theorem. The remaining
case of larger parameters $\vartheta ,T\in \mathbb{R}_{0}^{+}$ can certainly
be studied, even if this is not done here.

\section{Technical Proofs\label{sec:proofs}}

\subsection{Quasi-Free Fermions in Subregions of the Lattice\label{Sectino
tech3 copy(2)}}

Let $\mathcal{P}_{\text{f}}(\mathbb{Z}^{d})\subseteq 2^{\mathbb{Z}^{d}}$ be
the set of all non-empty \emph{finite} subsets of $\mathbb{Z}^{d}$. Like in 
\cite[Section 2.1]{LDP}, we need the sets%
\begin{eqnarray*}
\mathfrak{Z} &\doteq &\left\{ \mathcal{Z}\subseteq 2^{\mathbb{Z}^{d}}\colon
\left( \forall Z_{1},Z_{2}\in \mathcal{Z}\right) \ Z_{1}\neq
Z_{2}\Rightarrow Z_{1}\cap Z_{2}=\emptyset \text{ }\right\} , \\
\mathfrak{Z}_{\text{f}} &\doteq &\mathfrak{Z}\cap \left\{ \mathcal{Z}%
\subseteq \mathcal{P}_{\text{f}}(\mathbb{Z}^{d})\colon \left\vert \mathcal{Z}%
\right\vert <\infty \right\} .
\end{eqnarray*}%
This kind of decomposition over collections of disjoint subsets of the
lattice is important to prove Theorem \ref{Quantum fluctuations and rate
function} (i).

Recall that $\mathfrak{h}\doteq \ell ^{2}(\mathbb{Z}^{d};\mathbb{C})$ and $%
\mathcal{B}(\mathfrak{h})$ is the Banach space of all bounded linear
operators acting on $\mathfrak{h}$. One can restrict the quasi-free dynamics
defined by (\ref{rescaledbis}) to collections $\mathcal{Z}\in \mathfrak{Z}$
of disjoint subsets of the lattice by using the orthogonal projections $%
P_{\Lambda }$, $\Lambda \subseteq \mathbb{Z}^{d}$, defined on the Hilbert
space $\mathfrak{h}$ by 
\begin{equation}
\lbrack P_{\Lambda }(\psi )](x)\doteq \left\{ 
\begin{array}{lll}
\psi (x) & , & \text{if }x\in \Lambda , \\ 
0 & , & \text{else,}%
\end{array}%
\right.  \label{orthogonal projection}
\end{equation}%
for any $\psi \in \mathfrak{h}$. Then, the one-particle Hamiltonian within $%
\mathcal{Z}\in \mathfrak{Z}$ is, by definition, equal to%
\begin{equation}
h_{\mathcal{Z}}^{(\omega )}\doteq \sum_{Z\in \mathcal{Z}}P_{Z}h^{(\omega
)}P_{Z}\in \mathcal{B}\left( \mathfrak{h}\right) ,  \label{h finite f}
\end{equation}%
where $h^{(\omega )}\in \mathcal{B}(\mathfrak{h})$ is the random
tight-binding model defined by (\ref{eq:Ham_lap_pot}) for any $\omega \in
\Omega $ and\ $\lambda ,\vartheta \in \mathbb{R}_{0}^{+}$. For any $\mathcal{%
Z}\in \mathfrak{Z}$, it leads to the unitary group $\{\mathrm{e}^{ith_{%
\mathcal{Z}}^{(\omega )}}\}_{t\in \mathbb{R}}$ acting on the Hilbert space $%
\mathfrak{h}$.

Similar to Equation (\ref{rescaledbis}), for any $\mathcal{Z}\in \mathfrak{Z}
$, we consequently define the strongly continuous group $\tau ^{(\omega ,%
\mathcal{Z})}\doteq \{\tau _{t}^{(\omega ,\mathcal{Z})}\}_{t\in {\mathbb{R}}%
} $ of Bogoliubov $\ast $-automorphisms of $\mathcal{U}$ by%
\begin{equation*}
\tau _{t}^{(\omega ,\mathcal{Z})}(a(\psi ))=a(\mathrm{e}^{ith_{\mathcal{Z}%
}^{(\omega )}}\psi )\ ,\text{\qquad }t\in \mathbb{R},\ \psi \in \mathfrak{h}.
\end{equation*}%
This corresponds to replace $h^{(\omega )}$ in (\ref{rescaledbis}) with $h_{%
\mathcal{Z}}^{(\omega )}$. Similarly, for any $\mathcal{Z}\in \mathfrak{Z}$,
we define the quasi-free state $\varrho _{\mathcal{Z}}^{(\omega )}$ by
replacing $h^{(\omega )}$ in Equation (\ref{2-point correlation function})
with the one-particle Hamiltonian $h_{\mathcal{Z}}^{(\omega )}$ within $%
\mathcal{Z}$.

If $\mathcal{Z}\in \mathfrak{Z}_{\text{f}}$ then both $\tau ^{(\omega ,%
\mathcal{Z})}$ and $\varrho _{\mathcal{Z}}^{(\omega )}$ can be written in
terms of bilinear elements\footnote{%
This refers to the well-known second-quantization of one-particle
Hamiltonians in the Fock space representation.}, defined as follows: The
bilinear element associated with an operator in $C\in \mathcal{B}(\mathfrak{h%
})$ whose range, $\mathrm{ran}(C)$, is finite dimensional is defined by 
\begin{equation}
\langle \mathrm{A},C\mathrm{A}\rangle \doteq \sum\limits_{i,j\in
I}\left\langle \psi _{i},C\psi _{j}\right\rangle _{\mathfrak{h}}a\left( \psi
_{i}\right) ^{\ast }a\left( \psi _{j}\right) ,  \label{bilinear_elements}
\end{equation}%
where $\{\psi _{i}\}_{i\in I}$ is any orthonormal basis\footnote{$\langle 
\mathrm{A},C\mathrm{A}\rangle $ does not depend on the particular choice of $%
\mathcal{H}$ and its orthonormal basis.} of a finite dimensional subspace 
\begin{equation*}
\mathcal{H}\supseteq \mathrm{ran}(C)\cup \mathrm{ran}(C^{\ast })
\end{equation*}%
of the Hilbert space $\mathfrak{h}$. See \cite[Definition 4.3]{LDP}. For any 
$\omega \in \Omega $ and $\lambda ,\vartheta \in \mathbb{R}_{0}^{+}$, the
range of $h_{\mathcal{Z}}^{(\omega )}\in \mathcal{B}(\mathfrak{h})$ is
finite-dimensional whenever $\mathcal{Z}\in \mathfrak{Z}_{\text{f}}$ and one
checks that, for any time $t\in \mathbb{R}$, inverse temperature $\beta \in 
\mathbb{R}^{+}$, finite collections $\mathcal{Z}\in \mathfrak{Z}_{\text{f}}$
and elements $B\in \mathcal{U}$,%
\begin{equation*}
\tau _{t}^{(\omega ,\mathcal{Z})}\left( B\right) =\mathrm{e}^{it\langle 
\mathrm{A},h_{\mathcal{Z}}^{(\omega )}\mathrm{A}\rangle }B\mathrm{e}%
^{-it\langle \mathrm{A},h_{\mathcal{Z}}^{(\omega )}\mathrm{A}\rangle }\qquad 
\text{and}\qquad \varrho _{\mathcal{Z}}^{(\omega )}(B)=\frac{\mathrm{tr}%
\left( B\mathrm{e}^{-\beta \langle \mathrm{A},h_{\mathcal{Z}}^{(\omega )}%
\mathrm{A}\rangle }\right) }{\mathrm{tr}\left( \mathrm{e}^{-\beta \langle 
\mathrm{A},h_{\mathcal{Z}}^{(\omega )}\mathrm{A}\rangle }\right) },
\end{equation*}%
where $\mathrm{tr}\in \mathcal{U}^{\ast }$ is the tracial state, i.e., the
gauge-invariant quasi-free state with two-point correlation functions given
by (\ref{2-point correlation function}) for $\beta =0$. See \cite[Equations
(27)-(28)]{LDP}. The dynamics corresponds in this case to the usual dynamics
written in the Heisenberg picture of quantum mechanics, while the above
quasi-free state is the Gibbs state at inverse temperature $\beta \in 
\mathbb{R}^{+}$, both associated with the Hamiltonian $\langle \mathrm{A},h_{%
\mathcal{Z}}^{(\omega )}\mathrm{A}\rangle \in \mathcal{U}$ for $\mathcal{Z}%
\in \mathfrak{Z}_{\text{f}}$.

In order to define the thermodynamic limit, we use the cubic boxes $\Lambda
_{\ell }\doteq \{\mathbb{Z}\cap \left[ -\ell ,\ell \right] \}^{d}$ for $\ell
\in \mathbb{R}_{0}^{+}$. Then, as $\ell \rightarrow \infty $, for any $t\in {%
\mathbb{R}}$, $\tau _{t}^{(\omega ,\{\Lambda _{\ell }\})}$ converges
strongly to $\tau _{t}^{(\omega )}\equiv \tau _{t}^{(\omega ,\{\mathbb{Z}%
^{d}\})}$, while $\varrho _{\{\Lambda _{\ell }\}}^{(\omega )}$ converges in
the weak$^{\ast }$ topology to $\varrho ^{(\omega )}\equiv \varrho _{\{%
\mathbb{Z}^{d}\}}^{(\omega )}$. For an explicit proof of these well-known
facts, see for instance \cite[Propositions 3.2.9 and 3.2.13]{Antsa-these}.

\subsection{Current Observables in Subregions of the Lattice\label{Sectino
tech3}}

Fix once and for all $\vec{w}\in {\mathbb{R}}^{d}$ with $\left\Vert \vec{w}%
\right\Vert _{\mathbb{R}^{d}}=1$. By \cite[Equation (29)]{LDP}, for any $%
\lambda ,\vartheta \in \mathbb{R}_{0}^{+}$, $\omega \in \Omega $, $\mathcal{E%
}\in C_{0}^{0}(\mathbb{R};\mathbb{R}^{d})$, $\mathcal{Z}\in \mathfrak{Z}_{%
\text{f}}$ and $\mathcal{Z}^{(\tau )}\in \mathfrak{Z}$, the linear response
current observable is, by definition, equal to%
\begin{eqnarray}
\mathfrak{K}_{\mathcal{Z},\mathcal{Z}^{(\tau )}}^{(\omega ,\mathcal{E})}
&\doteq &\underset{k,q=1}{\sum^{d}}w_{k}\sum_{Z\in \mathcal{Z}}\underset{%
x,y,x+e_{k},y+e_{q}\in Z}{\sum }\int_{-\infty }^{0}\left\{ \mathcal{E}\left(
\alpha \right) \right\} _{q}\mathrm{d}\alpha \int\nolimits_{0}^{-\alpha }%
\mathrm{d}s\ i[\tau _{-s}^{(\omega ,\mathcal{Z}^{(\tau )})}(I_{\left(
y+e_{q},y\right) }^{(\omega )}),I_{\left( x+e_{k},x\right) }^{(\omega )}] 
\notag \\
&&+2\underset{k=1}{\sum^{d}}w_{k}\sum_{Z\in \mathcal{Z}}\underset{%
x,x+e_{k}\in Z}{\sum }\left( \int_{-\infty }^{0}\left\{ \mathcal{E}\left(
\alpha \right) \right\} _{q}\mathrm{d}\alpha \right) \Re \mathrm{e}\left(
\langle \mathfrak{e}_{x+e_{k}},\Delta _{\omega ,\vartheta }\mathfrak{e}%
_{x}\rangle a(\mathfrak{e}_{x+e_{k}})^{\ast }a(\mathfrak{e}_{x})\right)
\label{eq:par-curr-sum-gen}
\end{eqnarray}%
with $\{e_{k}\}_{k=1}^{d}$ being the canonical basis of $\mathbb{R}^{d}$.
Recall that $\Re \mathrm{e}(A)\in \mathcal{U}$ is the real part of $A\in 
\mathcal{U}$, see (\ref{im and real part}). Note from Equations (\ref%
{defininion para coeff observable})-(\ref{current}) that%
\begin{equation}
\mathfrak{K}_{\{\Lambda \},\{\mathbb{Z}^{d}\}}^{(\omega ,\mathcal{E}%
)}=\left\vert \Lambda \right\vert \mathbb{I}_{\Lambda }^{(\omega ,\mathcal{E}%
)},\qquad \Lambda \in \mathcal{P}_{\text{f}}(\mathbb{Z}^{d}),  \label{aa}
\end{equation}%
are linear response current observables within finite subsets of the lattice.

The above current observables can obviously be rewritten as bilinear
elements (\ref{bilinear_elements}) associated with one-particle operators
acting on the Hilbert space $\mathfrak{h}$. In order to give an explicit
expression of these operators, we first define, for any $x\in \mathbb{Z}^{d}$%
, the shift operator $s_{x}\in \mathcal{B}(\mathfrak{h})$ by 
\begin{equation}
\left( s_{x}\psi \right) \left( y\right) \doteq \psi \left( x+y\right)
,\qquad y\in \mathbb{Z}^{d},\ \psi \in \mathfrak{h}.  \label{shift}
\end{equation}%
Note that $s_{x}^{\ast }=s_{-x}=s_{x}^{-1}$ for any $x\in \mathbb{Z}^{d}$.
Then, for every $\omega \in \Omega $ and $\vartheta \in \mathbb{R}_{0}^{+}$,
the single-hopping operators are%
\begin{equation}
S_{x,y}^{(\omega )}\doteq \langle \mathfrak{e}_{x},\Delta _{\omega
,\vartheta }\mathfrak{e}_{y}\rangle _{\mathfrak{h}}P_{\left\{ x\right\}
}s_{x-y}P_{\left\{ y\right\} },\qquad x,y\in \mathbb{Z}^{d},  \label{shift2}
\end{equation}%
where $P_{\left\{ u\right\} }$ is the orthogonal projection defined by (\ref%
{orthogonal projection}) for $\Lambda =\left\{ u\right\} $ and $u\in \mathbb{%
Z}^{d}$. Observe that 
\begin{equation*}
\left\langle \mathrm{A},S_{x,y}^{(\omega )}\mathrm{A}\right\rangle =\langle 
\mathfrak{e}_{x},\Delta _{\omega ,\vartheta }\mathfrak{e}_{y}\rangle _{%
\mathfrak{h}}a(\mathfrak{e}_{x})^{\ast }a(\mathfrak{e}_{y}),\qquad x,y\in 
\mathbb{Z}^{d}.
\end{equation*}%
Similarly, by the identity%
\begin{equation*}
\Im \mathrm{m}\left\{ \langle \mathrm{A},C\mathrm{A}\rangle \right\}
=\langle \mathrm{A},\Im \mathrm{m}\left\{ C\right\} \mathrm{A}\rangle
\end{equation*}%
for any $C\in \mathcal{B}(\mathfrak{h})$ whose range is finite dimensional,
the paramagnetic current observables defined by (\ref{current observable})
equals 
\begin{equation*}
I_{(x,y)}^{(\omega )}=-2\langle \mathrm{A},\Im \mathrm{m}\{S_{x,y}^{(\omega
)}\}\mathrm{A}\rangle ,\qquad x,y\in \mathbb{Z}^{d},
\end{equation*}%
for each $\omega \in \Omega $ and $\vartheta \in \mathbb{R}_{0}^{+}$. For
any $\lambda ,\vartheta \in \mathbb{R}_{0}^{+}$, $\omega \in \Omega $, $%
\mathcal{E}\in C_{0}^{0}(\mathbb{R};\mathbb{R}^{d})$, $\mathcal{Z}^{(\tau
)}\in \mathfrak{Z}$ and $\mathcal{Z}\in \mathfrak{Z}_{\text{f}}$, the
current observable (\ref{eq:par-curr-sum-gen}) can then be rewritten as 
\begin{equation}
\mathfrak{K}_{\mathcal{Z},\mathcal{Z}^{(\tau )}}^{(\omega ,\mathcal{E}%
)}=\left\langle \mathrm{A},K_{\mathcal{Z},\mathcal{Z}^{(\tau )}}^{(\omega ,%
\mathcal{E})}\mathrm{A}\right\rangle =\sum\limits_{x,y\in \mathbb{Z}%
^{d}}\left\langle \mathfrak{e}_{x},K_{\mathcal{Z},\mathcal{Z}^{(\tau
)}}^{(\omega ,\mathcal{E})}\mathfrak{e}_{y}\right\rangle _{\mathfrak{h}%
}a\left( \mathfrak{e}_{x}\right) ^{\ast }a\left( \mathfrak{e}_{y}\right) ,
\label{bilinear idiot}
\end{equation}%
where $K_{\mathcal{Z},\mathcal{Z}^{(\tau )}}^{(\omega ,\mathcal{E})}\in 
\mathcal{B}(\mathfrak{h})$ is the operator acting on the one-particle
Hilbert space $\mathfrak{h}$ defined by 
\begin{eqnarray}
K_{\mathcal{Z},\mathcal{Z}^{(\tau )}}^{(\omega ,\mathcal{E})} &\doteq &4%
\underset{k,q=1}{\sum^{d}}w_{k}\sum_{Z\in \mathcal{Z}}\underset{%
x,y,x+e_{k},y+e_{q}\in Z}{\sum }\int_{-\infty }^{0}\left\{ \mathcal{E}\left(
\alpha \right) \right\} _{q}\mathrm{d}\alpha  \notag \\
&&\qquad \qquad \qquad \int\nolimits_{0}^{-\alpha }\mathrm{d}s\ i\left[ 
\mathrm{e}^{-ish_{\mathcal{Z}^{(\tau )}}^{(\omega )}}\Im \mathrm{m}%
\{S_{y+e_{q},y}^{(\omega )}\}\mathrm{e}^{ish_{\mathcal{Z}^{(\tau
)}}^{(\omega )}},\Im \mathrm{m}\{S_{x+e_{k},x}^{(\omega )}\}\right]  \notag
\\
&&+2\underset{k=1}{\sum^{d}}w_{k}\sum_{Z\in \mathcal{Z}}\underset{%
x,x+e_{k}\in Z}{\sum }\left( \int_{-\infty }^{0}\left\{ \mathcal{E}\left(
\alpha \right) \right\} _{q}\mathrm{d}\alpha \right) \Re \mathrm{e}%
\{S_{x+e_{k},x}^{(\omega )}\}.  \label{definition K}
\end{eqnarray}%
Note that the range of this bounded and self-adjoint operator is
finite-dimensional whenever $\mathcal{Z}\in \mathfrak{Z}_{\text{f}}$.

\subsection{Differentiability Class of Generating Functions\label{Sectino
tech3 copy(1)}}

The aim of this section is to prove Theorem \ref{Quantum fluctuations and
rate function} (i), in particular that the generating function $s\mapsto 
\mathrm{J}^{(s\mathcal{E})}$ defined by (\ref{generating_function}) belongs
to $C^{2}\left( \mathbb{R};\mathbb{R}\right) $. By \cite[Theorem 3.1]{LDP},
we already know that it is a well-defined, continuously differentiable,
convex function. So, one has to prove here that\ the second derivative of
the generating function exists and is continuous. To arrive at this
assertion, we follow the lines of arguments of \cite[Section 4]{LDP} showing 
\cite[Theorem 3.1]{LDP} via the control of the thermodynamic limit of
finite-volume generating functions that are random.

Fix once and for all $\beta \in \mathbb{R}^{+}$, $\lambda ,\vartheta \in 
\mathbb{R}_{0}^{+}$ and $\vec{w}\in {\mathbb{R}}^{d}$ with $\left\Vert \vec{w%
}\right\Vert _{\mathbb{R}^{d}}=1$. For any $\mathcal{E}\in C_{0}^{0}(\mathbb{%
R};\mathbb{R}^{d})$, $\omega \in \Omega $ and three finite collections $%
\mathcal{Z},\mathcal{Z}^{(\varrho )},\mathcal{Z}^{(\tau )}\in \mathfrak{Z}_{%
\text{f}}$, we define the finite-volume generating function 
\begin{equation}
\mathrm{J}_{\mathcal{Z},\mathcal{Z}^{(\varrho )},\mathcal{Z}^{(\tau
)}}^{(\omega ,\mathcal{E})}\doteq g_{\mathcal{Z},\mathcal{Z}^{(\varrho )},%
\mathcal{Z}^{(\tau )}}^{(\omega ,\mathcal{E})}-g_{\mathcal{Z},\mathcal{Z}%
^{(\varrho )},\mathcal{Z}^{(\tau )}}^{(\omega ,0)},
\label{generating functions0}
\end{equation}%
where%
\begin{equation}
g_{\mathcal{Z},\mathcal{Z}^{(\varrho )},\mathcal{Z}^{(\tau )}}^{(\omega ,%
\mathcal{E})}\doteq \frac{1}{|\cup \mathcal{Z}|}\ln \mathrm{tr}\left( \exp
(-\beta \langle \mathrm{A},h_{\mathcal{Z}^{(\varrho )}}^{(\omega )}\mathrm{A}%
\rangle )\exp (\mathfrak{K}_{\mathcal{Z},\mathcal{Z}^{(\tau )}}^{(\omega ,%
\mathcal{E})})\right) .  \label{generating functions}
\end{equation}%
Recall that the tracial state $\mathrm{tr}\in \mathcal{U}^{\ast }$ is the
gauge-invariant quasi-free state with two-point correlation function given
by (\ref{2-point correlation function}) for $\beta =0$, while $h_{\mathcal{Z}%
^{(\varrho )}}^{(\omega )}$ is the one-particle Hamiltonian defined by (\ref%
{h finite f}). See also (\ref{bilinear_elements}) and (\ref%
{eq:par-curr-sum-gen}). Compare (\ref{generating functions0})-(\ref%
{generating functions}) with the equalities%
\begin{eqnarray}
\mathrm{J}^{(\mathcal{E})} &\doteq &\lim_{L\rightarrow \infty }\frac{1}{%
\left\vert \Lambda _{L}\right\vert }\mathbb{E}\left[ \ln \varrho ^{(\cdot
)}\left( \mathrm{e}^{\left\vert \Lambda _{L}\right\vert \mathbb{I}_{\Lambda
_{L}}^{(\cdot ,\mathcal{E})}}\right) \right]  \notag \\
&=&\lim_{L\rightarrow \infty }\frac{1}{\left\vert \Lambda _{L}\right\vert }%
\ln \varrho ^{(\omega )}\left( \mathrm{e}^{\left\vert \Lambda
_{L}\right\vert \mathbb{I}_{\Lambda _{L}}^{(\omega ,\mathcal{E})}}\right)
=\lim_{L\rightarrow \infty }\lim_{L_{\varrho }\rightarrow \infty
}\lim_{L_{\tau }\rightarrow \infty }\mathrm{J}_{\mathcal{\{}\Lambda _{L}%
\mathcal{\}},\mathcal{\{}\Lambda _{L_{\varrho }}\mathcal{\}},\mathcal{\{}%
\Lambda _{L_{\tau }}\mathcal{\}}}^{(\omega ,\mathcal{E})},
\label{limit cool}
\end{eqnarray}%
where the random variable $\omega $ is in a measurable subset of full measure%
\footnote{%
The measurable subset $\tilde{\Omega}\subseteq \Omega $ of full measure of 
\cite[Theorem 3.1]{LDP} does not depend on $\beta \in \mathbb{R}^{+}$, $%
\vartheta ,\lambda \in \mathbb{R}_{0}^{+}$, $\mathcal{E}\in C_{0}^{0}(%
\mathbb{R};\mathbb{R}^{d})$ and $\vec{w}\in {\mathbb{R}}^{d}$ with $%
\left\Vert \vec{w}\right\Vert _{\mathbb{R}^{d}}=1$.}, by \cite[Theorem 3.1
and Equation (45)]{LDP}. Recall that $\Lambda _{\ell }\doteq \{\mathbb{Z}%
\cap \left[ -\ell ,\ell \right] \}^{d}$ for $\ell \in \mathbb{R}_{0}^{+}$.
(See again (\ref{generating_function}) for the definition of the generating
function.) In fact, by \cite[Proposition 4.10]{LDP}, the above local
generating functions can be approximately decomposed into boxes of fixed
volume and we use the Ackoglu-Krengel (superadditive) ergodic theorem \cite[%
Theorem 4.17]{LDP} to deduce, via \cite[Proposition 4.8]{LDP}, the existence
of the generating functions as the thermodynamic limit of finite-volume
generating functions, as given in (\ref{limit cool}).

In order to prove that the generating function is continuously
differentiable, one uses in \cite[Corollary 4.20]{LDP} the (Arzel\`{a}-)
Ascoli theorem \cite[Theorem A5]{Rudin}. This approach requires uniform
bounds on the first and second derivatives of the finite-volume generating
functions 
\begin{equation}
s\mapsto \mathrm{J}_{\mathcal{Z},\mathcal{Z}^{(\varrho )},\mathcal{Z}^{(\tau
)}}^{(\omega ,s\mathcal{E})},\qquad \mathcal{E}\in C_{0}^{0}(\mathbb{R};%
\mathbb{R}^{d}),\ \omega \in \Omega ,\ \mathcal{Z},\mathcal{Z}^{(\varrho )},%
\mathcal{Z}^{(\tau )}\in \mathfrak{Z}_{\text{f}}.  \label{kl}
\end{equation}%
This is done in \cite[Proposition 4.9]{LDP}, which establishes the
following: Fixing $\mathcal{E}\in C_{0}^{0}\left( \mathbb{R};\mathbb{R}%
^{d}\right) $ and $\beta _{1},s_{1},\vartheta _{1},\lambda _{1}\in \mathbb{R}%
^{+}$, one has%
\begin{equation}
\sup_{\substack{ \beta \in \left( 0,\beta _{1}\right] ,\ \vartheta \in \left[
0,\vartheta _{1}\right] ,\ \lambda \in \left[ 0,\lambda _{1}\right]  \\ %
\omega \in \Omega ,\ s\in \left[ -s_{1},s_{1}\right] ,\ \mathcal{Z},\mathcal{%
Z}^{(\varrho )},\mathcal{Z}^{(\tau )}\in \mathfrak{Z}_{\text{f}}}}\left\{
\left\vert \partial _{s}\mathrm{J}_{\mathcal{Z},\mathcal{Z}^{(\varrho )},%
\mathcal{Z}^{(\tau )}}^{(\omega ,s\mathcal{E})}\right\vert +\left\vert
\partial _{s}^{2}\mathrm{J}_{\mathcal{Z},\mathcal{Z}^{(\varrho )},\mathcal{Z}%
^{(\tau )}}^{(\omega ,s\mathcal{E})}\right\vert \right\} <\infty .
\label{proposition 4.9 new}
\end{equation}%
In order to get in the same way the existence and continuity of the second
derivative of the generating function, we need now to control the \emph{third%
}-order derivative of the same finite-volume generating functions (\ref{kl}).

Equation (\ref{proposition 4.9 new}) is proven by using the CAR (\ref{CAR})
and the Combes-Thomas estimate \cite[Appendix A]{LDP}, in particular the
bound%
\begin{equation}
\sup_{\lambda \in \mathbb{R}_{0}^{+}}\sup_{\mathcal{Z}\in \mathfrak{Z}%
}\sup_{\omega \in \Omega }\left\vert \left\langle \mathfrak{e}_{x},\mathrm{e}%
^{ith_{\mathcal{Z}}^{(\omega )}}\mathfrak{e}_{y}\right\rangle _{\mathfrak{h}%
}\right\vert \leq 36\mathrm{e}^{\left\vert t\eta \right\vert -2\mu _{\eta
}|x-y|},\qquad x,y\in \mathbb{Z}^{d},\ \vartheta \in \mathbb{R}_{0}^{+},\
t\in \mathbb{R},  \label{Combes-ThomasCombes-Thomas}
\end{equation}%
(see \cite[Equation (7)]{LDP}), where 
\begin{equation}
\mu _{\eta }\doteq \mu \min \left\{ \frac{1}{2},\frac{\eta }{8d\left(
1+\vartheta \right) \mathrm{e}^{\mu }}\right\} ,
\label{Combes-ThomasCombes-Thomasbis}
\end{equation}%
the parameters $\eta ,\mu \in \mathbb{R}^{+}$ being two arbitrarily fixed
(strictly positive) constants. For any $\mathcal{E}\in C_{0}^{0}(\mathbb{R};%
\mathbb{R}^{d})$ and $\beta _{1},s_{1},\vartheta _{1},\lambda _{1}\in 
\mathbb{R}^{+}$, the Combes-Thomas estimate leads also to the uniform
estimates 
\begin{equation}
\sup_{\substack{ \beta \in \left( 0,\beta _{1}\right] ,\ \vartheta \in \left[
0,\vartheta _{1}\right] ,\ \lambda \in \left[ 0,\lambda _{1}\right]  \\ %
\omega \in \Omega ,\ s\in \left[ -s_{1},s_{1}\right] ,\ \mathcal{Z},\mathcal{%
Z}^{(\varrho )},\mathcal{Z}^{(\tau )}\in \mathfrak{Z}_{\text{f}}}}\sup_{x\in 
\mathbb{Z}^{d}}\sum_{y\in \mathbb{Z}^{d}}\left\vert \left\langle \mathfrak{e}%
_{y},\frac{1}{1+\mathrm{e}^{-\frac{s}{2}K_{\mathcal{Z},\mathcal{Z}^{(\tau
)}}^{(\omega ,\mathcal{E})}}\mathrm{e}^{\beta h_{\mathcal{Z}^{(\varrho
)}}^{(\omega )}}\mathrm{e}^{-\frac{s}{2}K_{\mathcal{Z},\mathcal{Z}^{(\tau
)}}^{(\omega ,\mathcal{E})}}}\mathfrak{e}_{x}\right\rangle _{\mathfrak{h}%
}\right\vert <\infty  \label{fghfh}
\end{equation}%
(see the end of the proof of \cite[Proposition 4.9]{LDP}) as well as 
\begin{equation}
\sup_{\vartheta \in \left[ 0,\vartheta _{1}\right] }\sup_{\lambda \in 
\mathbb{R}_{0}^{+}}\sup_{\mathcal{Z},\mathcal{Z}^{(\tau )}\in \mathfrak{Z}_{%
\text{f}}}\sup_{\omega \in \Omega }\frac{1}{|\cup \mathcal{Z}|}%
\sum\limits_{x,y\in \mathbb{Z}^{d}}\left\vert \left\langle \mathfrak{e}%
_{x},K_{\mathcal{Z},\mathcal{Z}^{(\tau )}}^{(\omega ,\mathcal{E})}\mathfrak{e%
}_{y}\right\rangle _{\mathfrak{h}}\right\vert <\infty
\end{equation}%
and 
\begin{equation}
\sup_{\vartheta \in \left[ 0,\vartheta _{1}\right] }\sup_{\lambda \in 
\mathbb{R}_{0}^{+}}\sup_{\mathcal{Z},\mathcal{Z}^{(\tau )}\in \mathfrak{Z}_{%
\text{f}}}\sup_{\omega \in \Omega }\left\vert \left\langle \mathfrak{e}%
_{x},K_{\mathcal{Z},\mathcal{Z}^{(\tau )}}^{(\omega ,\mathcal{E})}\mathfrak{e%
}_{y}\right\rangle _{\mathfrak{h}}\right\vert \leq C_{x,y}^{(\mathcal{E}%
,\vartheta _{1})}<\infty  \label{fghfh0}
\end{equation}%
for $x,y\in \mathbb{Z}^{d}$, where $C_{x,y}^{(\mathcal{E},\vartheta
_{1})}\in \mathbb{R}^{+}$ are constants satisfying 
\begin{equation}
\sup_{x,y\in \mathbb{Z}^{d}}C_{x,y}^{(\mathcal{E},\vartheta _{1})}<\infty
\qquad \text{and}\qquad \sup_{x\in \mathbb{Z}^{d}}\sum_{y\in \mathbb{Z}%
^{d}}C_{x,y}^{(\mathcal{E},\vartheta _{1})}<\infty .
\label{corollaire utile}
\end{equation}%
Recall that $K_{\mathcal{Z},\mathcal{Z}^{(\tau )}}^{(\omega ,\mathcal{E}%
)}\in \mathcal{B}(\mathfrak{h})$ is the operator defining linear response
current observables, by (\ref{bilinear idiot})-(\ref{definition K}).

In order to give a uniform estimate on the third-order derivative of the
finite-volume generating functions (\ref{kl}), similar to the proof of
Equation (\ref{proposition 4.9 new}), we use again the Combes-Thomas
estimate, which yields (\ref{fghfh})-(\ref{corollaire utile}). This proof
bears however on more complex computations than the one of Equation (\ref%
{proposition 4.9 new}), which only controls the first and second derivatives
of the same function.

\begin{proposition}[Uniform boundedness of third derivatives]
\label{3eme_derivee}\mbox{}\newline
Fix an electric field $\mathcal{E}\in C_{0}^{0}\left( \mathbb{R};\mathbb{R}%
^{d}\right) $, $\vec{w}\in {\mathbb{R}}^{d}$ with $\left\Vert \vec{w}%
\right\Vert _{\mathbb{R}^{d}}=1$ and the parameters $\beta
_{1},s_{1},\vartheta _{1},\lambda _{1}\in \mathbb{R}^{+}$. Then, 
\begin{equation*}
\sup_{\substack{ \beta \in \left( 0,\beta _{1}\right] ,\ \vartheta \in \left[
0,\vartheta _{1}\right] ,\ \lambda \in \left[ 0,\lambda _{1}\right]  \\ %
\omega \in \Omega ,\ s\in \left[ -s_{1},s_{1}\right] ,\ \mathcal{Z},\mathcal{%
Z}^{(\varrho )},\mathcal{Z}^{(\tau )}\in \mathfrak{Z}_{\mathrm{f}}}}%
\left\vert \partial _{s}^{3}\mathrm{J}_{\mathcal{Z},\mathcal{Z}^{(\varrho )},%
\mathcal{Z}^{(\tau )}}^{(\omega ,s\mathcal{E})}\right\vert <\infty .
\end{equation*}
\end{proposition}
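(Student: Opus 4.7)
The plan is to continue the strategy of \cite[Proposition 4.9]{LDP} (used there for the first two derivatives) by computing $\partial_s^3\mathrm{J}_{\mathcal{Z},\mathcal{Z}^{(\varrho)},\mathcal{Z}^{(\tau)}}^{(\omega,s\mathcal{E})}$ explicitly, recognizing it as a third-order truncated (cumulant) correlation of $\mathfrak{K}_{\mathcal{Z},\mathcal{Z}^{(\tau)}}^{(\omega,\mathcal{E})}$, and then bounding this cumulant using the Combes--Thomas estimate (\ref{Combes-ThomasCombes-Thomas})--(\ref{Combes-ThomasCombes-Thomasbis}) together with the uniform bounds (\ref{fghfh})--(\ref{corollaire utile}) already at our disposal.

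First, since $\mathfrak{K}_{\mathcal{Z},\mathcal{Z}^{(\tau)}}^{(\omega,s\mathcal{E})}=s\,\mathfrak{K}_{\mathcal{Z},\mathcal{Z}^{(\tau)}}^{(\omega,\mathcal{E})}$ by linearity in the electric field (see (\ref{eq:par-curr-sum-gen})), writing $\mathfrak{K}\equiv\mathfrak{K}_{\mathcal{Z},\mathcal{Z}^{(\tau)}}^{(\omega,\mathcal{E})}$, $H\equiv\langle\mathrm{A},h_{\mathcal{Z}^{(\varrho)}}^{(\omega)}\mathrm{A}\rangle$, $Z(s)\doteq\mathrm{tr}(e^{-\beta H}e^{s\mathfrak{K}})$, and $\rho_s(B)\doteq Z(s)^{-1}\mathrm{tr}(e^{-\beta H}e^{s\mathfrak{K}}B)$, the standard identities $\partial_s\ln Z=Z'/Z$, $\partial_s^2\ln Z=Z''/Z-(Z'/Z)^2$, $\partial_s^3\ln Z=Z'''/Z-3Z'Z''/Z^2+2(Z'/Z)^3$ combined with $Z^{(n)}(s)/Z(s)=\rho_s(\mathfrak{K}^n)$ yield
\[
|\cup\mathcal{Z}|\,\partial_s^3\mathrm{J}_{\mathcal{Z},\mathcal{Z}^{(\varrho)},\mathcal{Z}^{(\tau)}}^{(\omega,s\mathcal{E})}=\rho_s(\mathfrak{K}^3)-3\rho_s(\mathfrak{K})\rho_s(\mathfrak{K}^2)+2\rho_s(\mathfrak{K})^3,
\]
which is precisely the third cumulant of $\mathfrak{K}$ in $\rho_s$. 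Since $H$ and $\mathfrak{K}$ are both bilinear and gauge-invariant, $\rho_s$ is a gauge-invariant quasi-free linear form on $\mathcal{U}$ whose two-point function is the one-particle operator $\mathfrak{d}_s\doteq(1+e^{-sK/2}e^{\beta h_{\mathcal{Z}^{(\varrho)}}^{(\omega)}}e^{-sK/2})^{-1}$ already appearing in (\ref{fghfh}), where $K\equiv K_{\mathcal{Z},\mathcal{Z}^{(\tau)}}^{(\omega,\mathcal{E})}$ is the one-particle kernel of (\ref{bilinear idiot})--(\ref{definition K}).

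Second, applying Wick's theorem to each $\rho_s(\mathfrak{K}^n)$ (gauge invariance restricts pairings to contractions of one creation with one annihilation operator), the disconnected Wick diagrams in $\rho_s(\mathfrak{K}^3)$ cancel against the $-3\rho_s(\mathfrak{K})\rho_s(\mathfrak{K}^2)+2\rho_s(\mathfrak{K})^3$ pieces, leaving only the connected diagrams. Concretely, the surviving contribution is a finite sum of one-particle traces $\mathrm{tr}_{\mathfrak{h}}(K\,D_1\,K\,D_2\,K\,D_3)$ on $\mathfrak{h}$, with each $D_i\in\{\mathfrak{d}_s,\mathfrak{1}-\mathfrak{d}_s\}$. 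This connected-cumulant identity is exactly what removes the na\"{i}ve $|\cup\mathcal{Z}|^3$ growth and replaces it by $O(|\cup\mathcal{Z}|)$, which is compensated by the $|\cup\mathcal{Z}|^{-1}$ prefactor in (\ref{generating functions}). Bounding each such trace by inserting the orthonormal basis $\{\mathfrak{e}_x\}_{x\in\mathbb{Z}^d}$ and then using the uniform row-$\ell^1$ bound on $\mathfrak{d}_s$ from (\ref{fghfh}) together with the kernel estimates (\ref{fghfh0})--(\ref{corollaire utile}) on $K$ --- all of which ultimately rest on the Combes--Thomas decay (\ref{Combes-ThomasCombes-Thomas}) --- then produces a constant that is uniform in $\omega$, $\mathcal{Z},\mathcal{Z}^{(\varrho)},\mathcal{Z}^{(\tau)}\in\mathfrak{Z}_{\mathrm{f}}$ and in the compact ranges of $\beta,\vartheta,\lambda,s$ specified in the statement.

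The main obstacle will be the second step: carefully carrying out Wick's theorem for the non-self-adjoint quasi-free weight $e^{-\beta H}e^{s\mathfrak{K}}$ (where one exploits only the determinantal structure, positivity of $\rho_s$ being unnecessary), identifying which pairings survive in the third cumulant, and organising the resulting sum of connected traces so that the Combes--Thomas machinery from \cite[Appendix A]{LDP} applies in essentially the same way as for the second derivative. Once this bookkeeping is in place, the lattice estimations are a straightforward repetition of those already used to derive (\ref{proposition 4.9 new}). As announced in the introduction, treating the order-three case in this systematic way also makes transparent how the argument iterates to arbitrary order --- the only change being a larger (but still finite) combinatorial family of connected Wick diagrams of length $2n$ --- thereby providing the uniform control on all higher derivatives needed for the $C^\infty$ claim of Theorem~\ref{Quantum fluctuations and rate function}.
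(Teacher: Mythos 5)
Your proposal follows essentially the same route as the paper's proof of Proposition \ref{3eme_derivee}: the third derivative is identified with the third truncated (connected) correlation of $\mathfrak{K}_{\mathcal{Z},\mathcal{Z}^{(\tau)}}^{(\omega,\mathcal{E})}$ in the quasi-free state $\varpi_{s}$ of (\ref{cool2}), the quasi-free (determinantal) structure cancels the disconnected contributions leaving the two connected terms $\mathbf{K}_{1},\mathbf{K}_{2}$, and these are bounded uniformly via the Combes--Thomas-based kernel estimates (\ref{fghfh})--(\ref{corollaire utile}). The only minor imprecision is the identification of the two-point operator of the weight $\mathrm{e}^{-\beta H}\mathrm{e}^{s\mathfrak{K}}$ with the symmetrized $\mathfrak{d}_{s}$ of (\ref{fghfh}) (they are conjugate by $\mathrm{e}^{\pm sK/2}$ rather than equal), which is harmless here since $K$ commutes with $\mathrm{e}^{\pm sK/2}$, so the connected traces coincide.
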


\begin{proof}
For any $\beta \in \mathbb{R}^{+}$, $\vartheta ,\lambda \in \mathbb{R}%
_{0}^{+}$, $\mathcal{E}\in C_{0}^{0}(\mathbb{R};\mathbb{R}^{d})$, $\vec{w}%
\in {\mathbb{R}}^{d}$ with $\left\Vert \vec{w}\right\Vert _{\mathbb{R}%
^{d}}=1 $ and $\mathcal{Z},\mathcal{Z}^{(\varrho )},\mathcal{Z}^{(\tau )}\in 
\mathfrak{Z}_{\mathrm{f}}$, a straightforward computation yields that 
\begin{eqnarray}
&&\partial _{s}^{3}\mathrm{J}_{\mathcal{Z},\mathcal{Z}^{(\varrho )},\mathcal{%
Z}^{(\tau )}}^{(\omega ,s\mathcal{E})}  \label{dfdfdfdfd} \\
&=&\frac{1}{|\cup \mathcal{Z}|}\varpi _{s}^{T}\left( \mathfrak{K}_{\mathcal{Z%
},\mathcal{Z}^{(\tau )}}^{(\omega ,\mathcal{E})};\mathfrak{K}_{\mathcal{Z},%
\mathcal{Z}^{(\tau )}}^{(\omega ,\mathcal{E})};\mathfrak{K}_{\mathcal{Z},%
\mathcal{Z}^{(\tau )}}^{(\omega ,\mathcal{E})}\right)  \notag \\
&=&\frac{1}{|\cup \mathcal{Z}|}\left( \varpi _{s}\left( \left( \mathfrak{K}_{%
\mathcal{Z},\mathcal{Z}^{(\tau )}}^{(\omega ,\mathcal{E})}\right)
^{3}\right) -3\varpi _{s}\left( \left( \mathfrak{K}_{\mathcal{Z},\mathcal{Z}%
^{(\tau )}}^{(\omega ,\mathcal{E})}\right) ^{2}\right) \varpi _{s}\left( 
\mathfrak{K}_{\mathcal{Z},\mathcal{Z}^{(\tau )}}^{(\omega ,\mathcal{E}%
)}\right) +2\varpi _{s}\left( \mathfrak{K}_{\mathcal{Z},\mathcal{Z}^{(\tau
)}}^{(\omega ,\mathcal{E})}\right) ^{3}\right) ,  \notag
\end{eqnarray}%
where $\varpi _{s}$ is the (unique) gauge-invariant quasi-free state
satisfying 
\begin{equation}
\varpi _{s}(a^{\ast }\left( \varphi \right) a\left( \psi \right)
)=\left\langle \psi ,\frac{1}{1+\mathrm{e}^{-\frac{s}{2}K_{\mathcal{Z},%
\mathcal{Z}^{(\tau )}}^{(\omega ,\mathcal{E})}}\mathrm{e}^{\beta h_{\mathcal{%
Z}^{(\varrho )}}^{(\omega )}}\mathrm{e}^{-\frac{s}{2}K_{\mathcal{Z},\mathcal{%
Z}^{(\tau )}}^{(\omega ,\mathcal{E})}}}\varphi \right\rangle _{\mathfrak{h}%
},\qquad \varphi ,\psi \in \mathfrak{h}.  \label{cool2}
\end{equation}%
In the first equality of (\ref{dfdfdfdfd}), $\varpi _{s}^{T}(\cdot ;\cdot
;\cdot )$ denotes the so-called \textquotedblleft
truncated\textquotedblright\ or \textquotedblleft
connected\textquotedblright\ correlation function of third order, associated
with the state $\varpi _{s}$. Recall that, for all $A_{1},A_{2},A_{3}\in 
\mathcal{U}$, this function is defined by%
\begin{eqnarray*}
\varpi _{s}^{T}(A_{1};A_{2};A_{3}) &\doteq &\varpi
_{s}(A_{1}A_{2}A_{3})-\varpi _{s}(A_{1})\varpi _{s}(A_{2}A_{3})-\varpi
_{s}(A_{2})\varpi _{s}(A_{1}A_{3}) \\
&&-\varpi _{s}(A_{3})\varpi _{s}(A_{1}A_{2})+2\varpi _{s}(A_{1})\varpi
_{s}(A_{2})\varpi _{s}(A_{3}).
\end{eqnarray*}%
(This is similar to \cite[Proof of Proposition 4.9, until Equation (48)]{LDP}%
.) Recall that $\left\{ \mathfrak{e}_{x}\right\} _{x\in \mathbb{Z}^{d}}$ is
the canonical orthonormal basis of $\mathfrak{h}$, which is defined by $%
\mathfrak{e}_{x}(y)\doteq \delta _{x,y}$ for all $x,y\in \mathbb{Z}^{d}$. By
linearity and continuity in each argument of $\varpi _{s}^{T}(\cdot ;\cdot
;\cdot )$, one has%
\begin{eqnarray*}
\partial _{s}^{3}\mathrm{J}_{\mathcal{Z},\mathcal{Z}^{(\varrho )},\mathcal{Z}%
^{(\tau )}}^{(\omega ,s\mathcal{E})} &=&\frac{1}{|\cup \mathcal{Z}|}%
\sum\limits_{x_{1},y_{1},x_{2},y_{2},x_{3},y_{3}\in \mathbb{Z}%
^{d}}\left\langle \mathfrak{e}_{x_{1}},K_{\mathcal{Z},\mathcal{Z}^{(\tau
)}}^{(\omega ,\mathcal{E})}\mathfrak{e}_{y_{1}}\right\rangle _{\mathfrak{h}%
}\left\langle \mathfrak{e}_{x_{2}},K_{\mathcal{Z},\mathcal{Z}^{(\tau
)}}^{(\omega ,\mathcal{E})}\mathfrak{e}_{y_{2}}\right\rangle _{\mathfrak{h}%
}\left\langle \mathfrak{e}_{x_{3}},K_{\mathcal{Z},\mathcal{Z}^{(\tau
)}}^{(\omega ,\mathcal{E})}\mathfrak{e}_{y_{3}}\right\rangle _{\mathfrak{h}}
\\
&&\times \varpi _{s}^{T}(a^{\ast }\left( \mathfrak{e}_{x_{1}}\right) a\left( 
\mathfrak{e}_{y_{1}}\right) ;a^{\ast }\left( \mathfrak{e}_{x_{2}}\right)
a\left( \mathfrak{e}_{y_{2}}\right) ;a^{\ast }\left( \mathfrak{e}%
_{x_{3}}\right) a\left( \mathfrak{e}_{y_{3}}\right) ).
\end{eqnarray*}%
Note that, by Equation (\ref{ass O0-00bis}) and the fact that $\varpi _{s}$
is a gauge-invariant quasi-free state, 
\begin{eqnarray*}
&&\varpi _{s}(a^{\ast }\left( \mathfrak{e}_{x_{1}}\right) a\left( \mathfrak{e%
}_{y_{1}}\right) a^{\ast }\left( \mathfrak{e}_{x_{2}}\right) a\left( 
\mathfrak{e}_{y_{2}}\right) a^{\ast }\left( \mathfrak{e}_{x_{3}}\right)
a\left( \mathfrak{e}_{y_{3}}\right) ) \\
&=&\mathrm{det}\left( 
\begin{array}{ccc}
\varpi _{s}\left( a^{\ast }(\mathfrak{e}_{x_{1}})a(\mathfrak{e}%
_{y_{1}})\right) & \varpi _{s}\left( a^{\ast }(\mathfrak{e}_{x_{1}})a(%
\mathfrak{e}_{y_{2}})\right) & \varpi _{s}\left( a^{\ast }(\mathfrak{e}%
_{x_{1}})a(\mathfrak{e}_{y_{3}})\right) \\ 
-\varpi _{s}\left( a(\mathfrak{e}_{y_{1}})a^{\ast }(\mathfrak{e}%
_{x_{2}})\right) & \varpi _{s}\left( a^{\ast }(\mathfrak{e}_{x_{2}})a(%
\mathfrak{e}_{y_{2}})\right) & \varpi _{s}\left( a^{\ast }(\mathfrak{e}%
_{x_{2}})a(\mathfrak{e}_{y_{3}})\right) \\ 
-\varpi _{s}\left( a(\mathfrak{e}_{y_{1}})a^{\ast }(\mathfrak{e}%
_{x_{3}})\right) & -\varpi _{s}\left( a(\mathfrak{e}_{y_{2}})a^{\ast }(%
\mathfrak{e}_{x_{3}})\right) & \varpi _{s}\left( a^{\ast }(\mathfrak{e}%
_{x_{3}})a(\mathfrak{e}_{y_{3}})\right)%
\end{array}%
\right) \\
&=&\sum\limits_{g\in \mathcal{G}_{3}}\xi _{s}^{g}\left(
x_{1},y_{1},x_{2},y_{2},x_{3},y_{3}\right)
\end{eqnarray*}%
(use, for instance, \cite[Lemma 3.1]{universal} to get the above
determinant), where 
\begin{eqnarray*}
\mathcal{G}_{3} &\doteq
&\{\{(1,1),(2,2),(3,3)\},\{(1,1),(2,3),(3,2)\},\{(1,2),(2,1),(3,3)\}\} \\
&&\cup \{\{(1,2),(2,3),(3,1)\},\{(1,3),(2,1),(3,2)\},\{(1,3),(2,2),(3,1)\}\}
\end{eqnarray*}%
is a set of oriented graphs with vertex set $\{1,2,3\}$ and%
\begin{eqnarray*}
\xi _{s}^{\{(1,1),(2,2),(3,3)\}}\left(
x_{1},y_{1},x_{2},y_{2},x_{3},y_{3}\right) &\doteq &\varpi _{s}\left(
a^{\ast }(\mathfrak{e}_{x_{1}})a(\mathfrak{e}_{y_{1}})\right) \varpi
_{s}\left( a^{\ast }(\mathfrak{e}_{x_{2}})a(\mathfrak{e}_{y_{2}})\right)
\varpi _{s}\left( a^{\ast }(\mathfrak{e}_{x_{3}})a(\mathfrak{e}%
_{y_{3}})\right) , \\
\xi _{s}^{\{(1,1),(2,3),(3,2)\}}\left(
x_{1},y_{1},x_{2},y_{2},x_{3},y_{3}\right) &\doteq &\varpi _{s}\left(
a^{\ast }(\mathfrak{e}_{x_{1}})a(\mathfrak{e}_{y_{1}})\right) \varpi
_{s}\left( a^{\ast }(\mathfrak{e}_{x_{2}})a(\mathfrak{e}_{y_{3}})\right)
\varpi _{s}\left( a(\mathfrak{e}_{y_{2}})a^{\ast }(\mathfrak{e}%
_{x_{3}})\right) , \\
\xi _{s}^{\{(1,2),(2,1),(3,3)\}}\left(
x_{1},y_{1},x_{2},y_{2},x_{3},y_{3}\right) &\doteq &\varpi _{s}\left(
a^{\ast }(\mathfrak{e}_{x_{1}})a(\mathfrak{e}_{y_{2}})\right) \varpi
_{s}\left( a(\mathfrak{e}_{y_{1}})a^{\ast }(\mathfrak{e}_{x_{2}})\right)
\varpi _{s}\left( a^{\ast }(\mathfrak{e}_{x_{3}})a(\mathfrak{e}%
_{y_{3}})\right) , \\
\xi _{s}^{\{(1,2),(2,3),(3,1)\}}\left(
x_{1},y_{1},x_{2},y_{2},x_{3},y_{3}\right) &\doteq &-\varpi _{s}\left(
a^{\ast }(\mathfrak{e}_{x_{1}})a(\mathfrak{e}_{y_{2}})\right) \varpi
_{s}\left( a^{\ast }(\mathfrak{e}_{x_{2}})a(\mathfrak{e}_{y_{3}})\right)
\varpi _{s}\left( a(\mathfrak{e}_{y_{1}})a^{\ast }(\mathfrak{e}%
_{x_{3}})\right) , \\
\xi _{s}^{\{(1,3),(2,1),(3,2)\}}\left(
x_{1},y_{1},x_{2},y_{2},x_{3},y_{3}\right) &\doteq &\varpi _{s}\left(
a^{\ast }(\mathfrak{e}_{x_{1}})a(\mathfrak{e}_{y_{3}})\right) \varpi
_{s}\left( a(\mathfrak{e}_{y_{1}})a^{\ast }(\mathfrak{e}_{x_{2}})\right)
\varpi _{s}\left( a(\mathfrak{e}_{y_{2}})a^{\ast }(\mathfrak{e}%
_{x_{3}})\right) , \\
\xi _{s}^{\{(1,3),(2,2),(3,1)\}}\left(
x_{1},y_{1},x_{2},y_{2},x_{3},y_{3}\right) &\doteq &\varpi _{s}\left(
a^{\ast }(\mathfrak{e}_{x_{1}})a(\mathfrak{e}_{y_{3}})\right) \varpi
_{s}\left( a^{\ast }(\mathfrak{e}_{x_{2}})a(\mathfrak{e}_{y_{2}})\right)
\varpi _{s}\left( a(\mathfrak{e}_{y_{1}})a^{\ast }(\mathfrak{e}%
_{x_{3}})\right) .
\end{eqnarray*}%
By elementary computations, one sees that taking connected correlations
corresponds here, as is usual, to only keep the terms associated with
connected graphs. That is,%
\begin{multline*}
\varpi _{s}^{T}(a^{\ast }\left( \mathfrak{e}_{x_{1}}\right) a\left( 
\mathfrak{e}_{y_{1}}\right) ;a^{\ast }\left( \mathfrak{e}_{x_{2}}\right)
a\left( \mathfrak{e}_{y_{2}}\right) ;a^{\ast }\left( \mathfrak{e}%
_{x_{3}}\right) a\left( \mathfrak{e}_{y_{3}}\right) ) \\
=\varpi _{s}\left( a^{\ast }(\mathfrak{e}_{x_{1}})a(\mathfrak{e}%
_{y_{3}})\right) \varpi _{s}\left( a(\mathfrak{e}_{y_{1}})a^{\ast }(%
\mathfrak{e}_{x_{2}})\right) \varpi _{s}\left( a(\mathfrak{e}%
_{y_{2}})a^{\ast }(\mathfrak{e}_{x_{3}})\right) \\
-\varpi _{s}\left( a^{\ast }(\mathfrak{e}_{x_{1}})a(\mathfrak{e}%
_{y_{2}})\right) \varpi _{s}\left( a^{\ast }(\mathfrak{e}_{x_{2}})a(%
\mathfrak{e}_{y_{3}})\right) \varpi _{s}\left( a(\mathfrak{e}%
_{y_{1}})a^{\ast }(\mathfrak{e}_{x_{3}})\right) .
\end{multline*}%
Hence, 
\begin{equation}
\partial _{s}^{3}\mathrm{J}_{\mathcal{Z},\mathcal{Z}^{(\varrho )},\mathcal{Z}%
^{(\tau )}}^{(\omega ,s\mathcal{E})}=\mathbf{K}_{1}-\mathbf{K}_{2},
\label{K0}
\end{equation}%
where 
\begin{align}
\mathbf{K}_{1}& \doteq \frac{1}{|\cup \mathcal{Z}|}\sum%
\limits_{x_{1},y_{1},x_{2},y_{2},x_{3},y_{3}\in \mathbb{Z}^{d}}\left\langle 
\mathfrak{e}_{x_{1}},K_{\mathcal{Z},\mathcal{Z}^{(\tau )}}^{(\omega ,%
\mathcal{E})}\mathfrak{e}_{y_{1}}\right\rangle _{\mathfrak{h}}\left\langle 
\mathfrak{e}_{x_{2}},K_{\mathcal{Z},\mathcal{Z}^{(\tau )}}^{(\omega ,%
\mathcal{E})}\mathfrak{e}_{y_{2}}\right\rangle _{\mathfrak{h}}\left\langle 
\mathfrak{e}_{x_{3}},K_{\mathcal{Z},\mathcal{Z}^{(\tau )}}^{(\omega ,%
\mathcal{E})}\mathfrak{e}_{y_{3}}\right\rangle _{\mathfrak{h}}  \label{K11}
\\
& \qquad \qquad \qquad \qquad \varpi _{s}\left( a^{\ast }\left( \mathfrak{e}%
_{x_{1}}\right) a\left( \mathfrak{e}_{y_{3}}\right) \right) \varpi
_{s}\left( a\left( \mathfrak{e}_{y_{1}}\right) a^{\ast }\left( \mathfrak{e}%
_{x_{2}}\right) \right) \varpi _{s}\left( a\left( \mathfrak{e}%
_{y_{2}}\right) a^{\ast }\left( \mathfrak{e}_{x_{3}}\right) \right)  \notag
\end{align}%
and 
\begin{align}
\mathbf{K}_{2}& \doteq \frac{1}{|\cup \mathcal{Z}|}\sum%
\limits_{x_{1},y_{1},x_{2},y_{2},x_{3},y_{3}\in \mathbb{Z}^{d}}\left\langle 
\mathfrak{e}_{x_{1}},K_{\mathcal{Z},\mathcal{Z}^{(\tau )}}^{(\omega ,%
\mathcal{E})}\mathfrak{e}_{y_{1}}\right\rangle _{\mathfrak{h}}\left\langle 
\mathfrak{e}_{x_{2}},K_{\mathcal{Z},\mathcal{Z}^{(\tau )}}^{(\omega ,%
\mathcal{E})}\mathfrak{e}_{y_{2}}\right\rangle _{\mathfrak{h}}\left\langle 
\mathfrak{e}_{x_{3}},K_{\mathcal{Z},\mathcal{Z}^{(\tau )}}^{(\omega ,%
\mathcal{E})}\mathfrak{e}_{y_{3}}\right\rangle _{\mathfrak{h}}  \label{K2} \\
& \qquad \qquad \qquad \qquad \varpi _{s}\left( a^{\ast }\left( \mathfrak{e}%
_{x_{1}}\right) a\left( \mathfrak{e}_{y_{2}}\right) \right) \varpi
_{s}\left( a\left( \mathfrak{e}_{y_{1}}\right) a^{\ast }\left( \mathfrak{e}%
_{x_{3}}\right) \right) \varpi _{s}\left( a^{\ast }\left( \mathfrak{e}%
_{x_{2}}\right) a\left( \mathfrak{e}_{y_{3}}\right) \right) .  \notag
\end{align}%
Applying the triangle inequality, we now obtain that 
\begin{multline*}
|\mathbf{K}_{1}|\leq \frac{1}{|\cup \mathcal{Z}|}\sum%
\limits_{x_{1},y_{1},x_{2},y_{2},x_{3},y_{3}\in \mathbb{Z}^{d}}\left\vert
\left\langle \mathfrak{e}_{x_{1}},K_{\mathcal{Z},\mathcal{Z}^{(\tau
)}}^{(\omega ,\mathcal{E})}\mathfrak{e}_{y_{1}}\right\rangle _{\mathfrak{h}%
}\right\vert \,\left\vert \left\langle \mathfrak{e}_{x_{2}},K_{\mathcal{Z},%
\mathcal{Z}^{(\tau )}}^{(\omega ,\mathcal{E})}\mathfrak{e}%
_{y_{2}}\right\rangle _{\mathfrak{h}}\right\vert \,\left\vert \left\langle 
\mathfrak{e}_{x_{3}},K_{\mathcal{Z},\mathcal{Z}^{(\tau )}}^{(\omega ,%
\mathcal{E})}\mathfrak{e}_{y_{3}}\right\rangle _{\mathfrak{h}}\right\vert \\
\left\vert \varpi _{s}\left( a^{\ast }\left( \mathfrak{e}_{x_{1}}\right)
a\left( \mathfrak{e}_{y_{3}}\right) \right) \right\vert \,|\varpi _{s}(a(%
\mathfrak{e}_{y_{1}})a^{\ast }\left( \mathfrak{e}_{x_{2}}\right)
)|\,\left\vert \varpi _{s}\left( a(\mathfrak{e}_{y_{2}})a^{\ast }\left( 
\mathfrak{e}_{x_{3}}\right) \right) \right\vert \\
\leq \sup_{x_{3},y_{3}\in \mathbb{Z}^{d}}\left\vert \left\langle \mathfrak{e}%
_{x_{3}},K_{\mathcal{Z},\mathcal{Z}^{(\tau )}}^{(\omega ,\mathcal{E})}%
\mathfrak{e}_{y_{3}}\right\rangle _{\mathfrak{h}}\right\vert \sup_{x_{2}\in 
\mathbb{Z}^{d}}\sum_{y_{2}\in \mathbb{Z}^{d}}\left\vert \left\langle 
\mathfrak{e}_{x_{2}},K_{\mathcal{Z},\mathcal{Z}^{(\tau )}}^{(\omega ,%
\mathcal{E})}\mathfrak{e}_{y_{2}}\right\rangle _{\mathfrak{h}}\right\vert 
\frac{1}{|\cup \mathcal{Z}|}\sum_{x_{1},y_{1}\in \mathbb{Z}^{d}}\left\vert
\left\langle \mathfrak{e}_{x_{1}},K_{\mathcal{Z},\mathcal{Z}^{(\tau
)}}^{(\omega ,\mathcal{E})}\mathfrak{e}_{y_{1}}\right\rangle _{\mathfrak{h}%
}\right\vert \\
\sup_{x_{1}\in \mathbb{Z}^{d}}\sum_{y_{3}\in \mathbb{Z}^{d}}\left\vert
\varpi _{s}\left( a^{\ast }\left( \mathfrak{e}_{x_{1}}\right) a\left( 
\mathfrak{e}_{y_{3}}\right) \right) \right\vert \sup_{y_{1}\in \mathbb{Z}%
^{d}}\sum_{x_{2}\in \mathbb{Z}^{d}}\left\vert \varpi _{s}\left( a\left( 
\mathfrak{e}_{y_{1}}\right) a^{\ast }\left( \mathfrak{e}_{x_{2}}\right)
\right) \right\vert \sup_{y_{2}\in \mathbb{Z}^{d}}\sum_{x_{3}\in \mathbb{Z}%
^{d}}\left\vert \varpi _{s}\left( a\left( \mathfrak{e}_{y_{2}}\right)
a^{\ast }\left( \mathfrak{e}_{x_{3}}\right) \right) \right\vert .
\end{multline*}%
We can finally use Equations (\ref{fghfh})-(\ref{corollaire utile}) and (\ref%
{cool2}) to arrive from the last upper bound at 
\begin{equation*}
\sup_{\substack{ \beta \in \left( 0,\beta _{1}\right] ,\ \vartheta \in \left[
0,\vartheta _{1}\right] ,\ \lambda \in \left[ 0,\lambda _{1}\right]  \\ %
\omega \in \Omega ,\ s\in \left[ -s_{1},s_{1}\right] ,\ \mathcal{Z},\mathcal{%
Z}^{(\varrho )},\mathcal{Z}^{(\tau )}\in \mathfrak{Z}_{\text{f}}}}|\mathbf{K}%
_{1}|<\infty .
\end{equation*}%
The absolute value $|\mathbf{K}_{2}|$ of the other term of $\partial _{s}^{3}%
\mathrm{J}_{\mathcal{Z},\mathcal{Z}^{(\varrho )},\mathcal{Z}^{(\tau
)}}^{(\omega ,s\mathcal{E})}$ (see (\ref{K0})-(\ref{K2})) can be bounded
exactly in the same way. By the triangle inequality applied to (\ref{K0}),
this concludes the proof.
\end{proof}

We can now sharpen the result given in \cite[Corollary 4.20]{LDP}, stating
that the mapping $s\mapsto \mathrm{J}^{(s\mathcal{E})}$ defined by (\ref%
{generating_function}) is continuously differentiable with%
\begin{equation*}
\partial _{s}\mathrm{J}^{(s\mathcal{E})}=\lim_{L\rightarrow \infty }\frac{%
\varrho ^{(\omega )}\left( \mathbb{I}_{\Lambda _{L}}^{(\omega ,\mathcal{E})}%
\mathrm{e}^{s\left\vert \Lambda _{L}\right\vert \mathbb{I}_{\Lambda
_{L}}^{(\omega ,\mathcal{E})}}\right) }{\varrho ^{(\omega )}\left( \mathrm{e}%
^{s\left\vert \Lambda _{L}\right\vert \mathbb{I}_{\Lambda _{L}}^{(\omega ,%
\mathcal{E})}}\right) }.
\end{equation*}%
Thanks to Equation (\ref{proposition 4.9 new}) and Proposition \ref%
{3eme_derivee}, we now obtain the following assertion:

\begin{corollary}[Differentiability of generating functions]
\label{final corrollary}\mbox{
}\newline
There is a measurable subset $\tilde{\Omega}\subseteq \Omega $ of full
measure such that, for all $\beta \in \mathbb{R}^{+}$, $\vartheta ,\lambda
\in \mathbb{R}_{0}^{+}$, $\omega \in \tilde{\Omega}$, $\mathcal{E}\in
C_{0}^{0}(\mathbb{R};\mathbb{R}^{d})$ and $\vec{w}\in {\ \mathbb{R}}^{d}$
with $\left\Vert \vec{w}\right\Vert _{\mathbb{R}^{d}}=1$, the mapping $%
s\mapsto \mathrm{J}^{(s\mathcal{E})}$ from $\mathbb{R}$ to itself belongs to 
$C^{2}\left( \mathbb{R};\mathbb{R}\right) $ and 
\begin{eqnarray*}
\partial _{s}\mathrm{J}^{(s\mathcal{E})}|_{s=0} &=&x^{(\mathcal{E})}\doteq
\lim_{L\rightarrow \infty }\mathbb{E}\left[ \varrho ^{(\cdot )}\left( 
\mathbb{I}_{\Lambda _{L}}^{(\cdot ,\mathcal{E})}\right) \right]
=\lim_{L\rightarrow \infty }\varrho ^{(\omega )}\left( \mathbb{I}_{\Lambda
_{L}}^{(\omega ,\mathcal{E})}\right) \\
\partial _{s}^{2}\mathrm{J}^{(s\mathcal{E})}|_{s=0} &=&\lim_{L\rightarrow
\infty }\mathbb{E}\left[ \mathbf{F}_{L}^{(\cdot ,\mathcal{E})}\right]
=\lim_{L\rightarrow \infty }\mathbf{F}_{L}^{(\omega ,\mathcal{E})}\geq 0\ ,
\end{eqnarray*}%
where $\mathbf{F}_{L}^{(\omega ,\mathcal{E})}$ is the quantum fluctuation of
the linear response current defined by (\ref{qq}) for any $L\in \mathbb{R}%
_{0}^{+}$. See also (\ref{def currents0}) for the definition of the
macroscopic current density $x^{(\mathcal{E})}$.
\end{corollary}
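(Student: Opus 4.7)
The plan is to mirror the proof of \cite[Corollary 4.20]{LDP}, which established the $C^{1}$ case, but to promote every step by one order of regularity using Proposition \ref{3eme_derivee}. Starting from the triple-limit representation (\ref{limit cool}) of $\mathrm{J}^{(s\mathcal{E})}$ as the thermodynamic limit (in three parameters $L_{\tau},L_{\varrho},L$) of the finite-volume generating functions $\mathrm{J}_{\mathcal{Z},\mathcal{Z}^{(\varrho)},\mathcal{Z}^{(\tau)}}^{(\omega,s\mathcal{E})}$, I would apply the Arzel\`a--Ascoli theorem \cite[Theorem A5]{Rudin} to the family of their second derivatives $s\mapsto \partial _{s}^{2}\mathrm{J}_{\mathcal{Z},\mathcal{Z}^{(\varrho)},\mathcal{Z}^{(\tau)}}^{(\omega ,s\mathcal{E})}$. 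Equation (\ref{proposition 4.9 new}) gives uniform boundedness on compact $s$-intervals, and Proposition \ref{3eme_derivee} gives the equicontinuity. Pointwise convergence of the same sequence of second derivatives is obtained by the box-decomposition/Akcoglu--Krengel argument of \cite[Propositions 4.8, 4.10, Corollary 4.20]{LDP}, which goes through verbatim with $\partial _{s}\mathrm{J}_{\cdot}$ replaced by $\partial_{s}^{2}\mathrm{J}_{\cdot}$ thanks to the new uniform third-derivative bound. Compactness plus uniqueness of the pointwise limit then upgrades convergence to uniform convergence on compact sets, so the limit $\partial_{s}^{2}\mathrm{J}^{(s\mathcal{E})}$ exists and is continuous, yielding $\mathrm{J}^{(s\mathcal{E})}\in C^{2}(\mathbb{R};\mathbb{R})$.

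To identify the derivatives at $s=0$ I would use the fact that $\mathrm{J}_{\mathcal{Z},\mathcal{Z}^{(\varrho)},\mathcal{Z}^{(\tau)}}^{(\omega,s\mathcal{E})}$ is the cumulant generating function of $\mathfrak{K}_{\mathcal{Z},\mathcal{Z}^{(\tau)}}^{(\omega,\mathcal{E})}$ in the Gibbs state associated with $\langle\mathrm{A},h_{\mathcal{Z}^{(\varrho)}}^{(\omega)}\mathrm{A}\rangle$. Since $\mathfrak{K}_{\mathcal{Z},\mathcal{Z}^{(\tau)}}^{(\omega,s\mathcal{E})}=s\,\mathfrak{K}_{\mathcal{Z},\mathcal{Z}^{(\tau)}}^{(\omega,\mathcal{E})}$, a direct differentiation gives
\begin{equation*}
\partial_{s}\mathrm{J}_{\{\Lambda_{L}\},\{\Lambda_{L_{\varrho}}\},\{\Lambda_{L_{\tau}}\}}^{(\omega,s\mathcal{E})}\Big|_{s=0}=\varrho_{\{\Lambda_{L_{\varrho}}\}}^{(\omega)}\!\left(\mathbb{I}_{\Lambda_{L}}^{(\omega,\mathcal{E})}\right)
\end{equation*}
and
\begin{equation*}
\partial_{s}^{2}\mathrm{J}_{\{\Lambda_{L}\},\{\Lambda_{L_{\varrho}}\},\{\Lambda_{L_{\tau}}\}}^{(\omega,s\mathcal{E})}\Big|_{s=0}=\frac{1}{|\Lambda_{L}|}\left[\varrho_{\{\Lambda_{L_{\varrho}}\}}^{(\omega)}\!\left(\bigl(\mathfrak{K}_{\{\Lambda_{L}\},\{\Lambda_{L_{\tau}}\}}^{(\omega,\mathcal{E})}\bigr)^{2}\right)-\varrho_{\{\Lambda_{L_{\varrho}}\}}^{(\omega)}\!\left(\mathfrak{K}_{\{\Lambda_{L}\},\{\Lambda_{L_{\tau}}\}}^{(\omega,\mathcal{E})}\right)^{2}\right].
\end{equation*}
Sending $L_{\tau}\to\infty$ (strong convergence of the restricted quasi-free dynamics), then $L_{\varrho}\to\infty$ (weak$^{\ast}$ convergence of the finite-volume KMS states), and using (\ref{aa}) to rewrite $\mathfrak{K}_{\{\Lambda_{L}\},\{\mathbb{Z}^{d}\}}^{(\omega,\mathcal{E})}=|\Lambda_{L}|\,\mathbb{I}_{\Lambda_{L}}^{(\omega,\mathcal{E})}$ produces $\varrho^{(\omega)}(\mathbb{I}_{\Lambda_{L}}^{(\omega,\mathcal{E})})$ and $\mathbf{F}_{L}^{(\omega,\mathcal{E})}$, respectively. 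Finally, taking $L\to\infty$ yields the two claimed identities; the equivalence between the a.s. limit and the $\mathbb{E}[\cdot]$-limit along $\tilde{\Omega}$ is inherited from the Akcoglu--Krengel ergodic theorem as in \cite[Corollary 4.20]{LDP}, and $\tilde{\Omega}$ is taken as the (countable) intersection of the full-measure sets required throughout, which remains of full measure.

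The main obstacle is the legitimate exchange of two derivatives with the three successive thermodynamic limits, exactly as in the $C^{1}$ case of \cite[Corollary 4.20]{LDP}. This is the sole reason a uniform bound on $\partial_{s}^{3}\mathrm{J}_{\mathcal{Z},\mathcal{Z}^{(\varrho)},\mathcal{Z}^{(\tau)}}^{(\omega,s\mathcal{E})}$ is needed, and it is the content of Proposition \ref{3eme_derivee}, whose proof relies on expressing the third derivative as a truncated three-point correlation of the quasi-free state $\varpi_{s}$ and on the summability estimates (\ref{fghfh})--(\ref{corollaire utile}) that stem from the Combes--Thomas bound (\ref{Combes-ThomasCombes-Thomas}). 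Everything else is routine given the infrastructure already set up in \cite[Section 4]{LDP}.
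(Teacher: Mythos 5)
Your overall strategy---the uniform third-derivative bound of Proposition \ref{3eme_derivee}, Arzel\`a--Ascoli on compact $s$-intervals, and identification of the derivatives at $s=0$ through the cumulant structure and the successive limits $L_{\tau}\rightarrow \infty $, $L_{\varrho }\rightarrow \infty $, $L\rightarrow \infty $---is essentially the paper's. But one step is not justified as you state it: you claim pointwise convergence of $s\mapsto \partial _{s}^{2}\mathrm{J}_{\{\Lambda _{L}\},\{\Lambda _{L_{\varrho }}\},\{\Lambda _{L_{\tau }}\}}^{(\omega ,s\mathcal{E})}$ because the box-decomposition/Akcoglu--Krengel argument of \cite[Propositions 4.8, 4.10, Corollary 4.20]{LDP} ``goes through verbatim with $\partial _{s}\mathrm{J}$ replaced by $\partial _{s}^{2}\mathrm{J}$''. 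The (approximate) superadditivity that feeds the Akcoglu--Krengel theorem in \cite{LDP} is a property of the finite-volume generating functions themselves (logarithms of traces of exponentials, controlled via Bogoliubov-type inequalities); it is not inherited by their $s$-derivatives, and a variance-type quantity such as $\partial _{s}^{2}\mathrm{J}_{\cdots }$ has no obvious superadditive structure. Indeed, even the convergence of the \emph{first} derivatives in \cite[Corollary 4.20]{LDP} is not obtained by running the ergodic theorem on $\partial _{s}\mathrm{J}_{\cdots }$; it is deduced, via Ascoli, from the convergence of the functions. So the ``verbatim'' transfer is a genuine gap.

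The repair is exactly what the paper does, and you already have every ingredient for it: pointwise convergence of the second derivatives is not needed as an input. Almost-sure convergence of the functions themselves, $\mathrm{J}^{(s\mathcal{E})}=\lim \mathrm{J}_{\{\Lambda _{L}\},\{\Lambda _{L_{\varrho }}\},\{\Lambda _{L_{\tau }}\}}^{(\omega ,s\mathcal{E})}$ (\cite[Corollary 4.19]{LDP}, i.e.\ Equation (\ref{dddddddddvvvvvvvv})), combined with the uniform bounds (\ref{proposition 4.9 new}) and Proposition \ref{3eme_derivee} (which give, via the mean value theorem, equicontinuity of the first and second derivatives), lets Ascoli produce subsequences along which $\mathrm{J}_{\cdots }$, $\partial _{s}\mathrm{J}_{\cdots }$ and $\partial _{s}^{2}\mathrm{J}_{\cdots }$ converge uniformly on compacts; the standard ``uniform limit of derivatives is the derivative of the limit'' lemma then identifies these limits as $\partial _{s}\mathrm{J}^{(s\mathcal{E})}$ and $\partial _{s}^{2}\mathrm{J}^{(s\mathcal{E})}$, and since they are determined by the subsequence-independent limit function, the full limits exist. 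With this substitution your identification at $s=0$ goes through, modulo the small imprecision that $\partial _{s}\mathrm{J}_{\{\Lambda _{L}\},\{\Lambda _{L_{\varrho }}\},\{\Lambda _{L_{\tau }}\}}^{(\omega ,s\mathcal{E})}|_{s=0}$ equals $|\Lambda _{L}|^{-1}\varrho _{\{\Lambda _{L_{\varrho }}\}}^{(\omega )}\left( \mathfrak{K}_{\{\Lambda _{L}\},\{\Lambda _{L_{\tau }}\}}^{(\omega ,\mathcal{E})}\right) $, which only becomes $\varrho ^{(\omega )}\left( \mathbb{I}_{\Lambda _{L}}^{(\omega ,\mathcal{E})}\right) $ after $L_{\tau },L_{\varrho }\rightarrow \infty $ and (\ref{aa}); your following sentence does take these limits, so this is only a matter of where (\ref{aa}) is invoked.
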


\begin{proof}
\cite[Corollary 4.19]{LDP} states, among other things, the existence of a
measurable set $\tilde{\Omega}$ of full measure such that, for all $\beta
\in \mathbb{R}^{+}$, $\vartheta ,\lambda \in \mathbb{R}_{0}^{+}$, $\omega
\in \tilde{\Omega}$, $\mathcal{E}\in C_{0}^{0}(\mathbb{R};\mathbb{R}^{d})$, $%
\vec{w}\in {\ \mathbb{R}}^{d}$ with $\left\Vert \vec{w}\right\Vert _{\mathbb{%
R}^{d}}=1$ and $s\in \mathbb{R}$, 
\begin{equation}
\mathrm{J}^{(s\mathcal{E})}=\lim_{L_{\tau }\geq L_{\varrho }\geq
L\rightarrow \infty }\mathrm{J}_{\mathcal{\{}\Lambda _{L}\mathcal{\}},%
\mathcal{\{}\Lambda _{L_{\varrho }}\mathcal{\}},\mathcal{\{}\Lambda
_{L_{\tau }}\mathcal{\}}}^{(\omega ,s\mathcal{E})}.
\label{dddddddddvvvvvvvv}
\end{equation}%
Fix from now all parameters $\beta \in \mathbb{R}^{+}$, $\vartheta ,\lambda
\in \mathbb{R}_{0}^{+}$, $\omega \in \tilde{\Omega}$, $\mathcal{E}\in
C_{0}^{0}(\mathbb{R};\mathbb{R}^{d})$ and $\vec{w}\in {\ \mathbb{R}}^{d}$
with $\left\Vert \vec{w}\right\Vert _{\mathbb{R}^{d}}=1$. By combining
Equation (\ref{proposition 4.9 new}) and Proposition \ref{3eme_derivee} with
the mean value theorem and the (Arzel\`{a}-) Ascoli theorem \cite[Theorem A5]%
{Rudin}, there are three sequences 
\begin{equation}
\{L_{\tau }^{(n)}\}_{n\in \mathbb{N}},\{L_{\varrho }^{(n)}\}_{n\in \mathbb{N}%
},\{L^{(n)}\}_{n\in \mathbb{N}}\subseteq \mathbb{R}_{0}^{+},
\label{sequences}
\end{equation}%
with $L_{\tau }^{(n)}\geq L_{\varrho }^{(n)}\geq L^{(n)}$, such that, as $%
n\rightarrow \infty $, the mappings 
\begin{equation*}
s\mapsto \mathrm{J}_{\mathcal{\{}\Lambda _{L^{(n)}}\mathcal{\}},\mathcal{\{}%
\Lambda _{L_{\varrho }^{(n)}}\mathcal{\}},\mathcal{\{}\Lambda _{L_{\tau
}^{(n)}}\mathcal{\}}}^{(\omega ,s\mathcal{E})},\quad s\mapsto \partial _{s}%
\mathrm{J}_{\mathcal{\{}\Lambda _{L^{(n)}}\mathcal{\}},\mathcal{\{}\Lambda
_{L_{\varrho }^{(n)}}\mathcal{\}},\mathcal{\{}\Lambda _{L_{\tau }^{(n)}}%
\mathcal{\}}}^{(\omega ,s\mathcal{E})}\quad \text{and}\quad s\mapsto
\partial _{s}^{2}\mathrm{J}_{\mathcal{\{}\Lambda _{L^{(n)}}\mathcal{\}},%
\mathcal{\{}\Lambda _{L_{\varrho }^{(n)}}\mathcal{\}},\mathcal{\{}\Lambda
_{L_{\tau }^{(n)}}\mathcal{\}}}^{(\omega ,s\mathcal{E})}
\end{equation*}%
from $\mathbb{R}$ to itself converge uniformly for $s$ in any compact subset
of $\mathbb{R}$. So, the mapping $s\mapsto \mathrm{J}^{(s\mathcal{E})}$ from 
${\mathbb{R}}$ to itself is a $C^{2}$-function with 
\begin{equation*}
\partial _{s}\mathrm{J}^{(s\mathcal{E})}=\lim_{L_{\tau }\geq L_{\varrho
}\geq L\rightarrow \infty }\partial _{s}\mathrm{J}_{\mathcal{\{}\Lambda _{L}%
\mathcal{\}},\mathcal{\{}\Lambda _{L_{\varrho }}\mathcal{\}},\mathcal{\{}%
\Lambda _{L_{\tau }}\mathcal{\}}}^{(\omega ,s\mathcal{E})}=\lim_{L%
\rightarrow \infty }\left( \frac{\varrho ^{(\omega )}\left( \mathbb{I}%
_{\Lambda _{L}}^{(\omega ,\mathcal{E})}\mathrm{e}^{s\left\vert \Lambda
_{L}\right\vert \mathbb{I}_{\Lambda _{L}}^{(\omega ,\mathcal{E})}}\right) }{%
\varrho ^{(\omega )}\left( \mathrm{e}^{s\left\vert \Lambda _{L}\right\vert 
\mathbb{I}_{\Lambda _{L}}^{(\omega ,\mathcal{E})}}\right) }\right)
\end{equation*}%
and 
\begin{eqnarray*}
\partial _{s}^{2}\mathrm{J}^{(s\mathcal{E})} &=&\lim_{L_{\tau }\geq
L_{\varrho }\geq L\rightarrow \infty }\partial _{s}^{2}\mathrm{J}_{\mathcal{%
\{}\Lambda _{L}\mathcal{\}},\mathcal{\{}\Lambda _{L_{\varrho }}\mathcal{\}},%
\mathcal{\{}\Lambda _{L_{\tau }}\mathcal{\}}}^{(\omega ,s\mathcal{E})} \\
&=&\lim_{L\rightarrow \infty }|\Lambda _{L}|\left( \frac{\varrho ^{(\omega
)}\left( \left( \mathbb{I}_{\Lambda _{L}}^{(\omega ,\mathcal{E})}\right) ^{2}%
\mathrm{e}^{s\left\vert \Lambda _{L}\right\vert \mathbb{I}_{\Lambda
_{L}}^{(\omega ,\mathcal{E})}}\right) \varrho ^{(\omega )}\left( \mathrm{e}%
^{s\left\vert \Lambda _{L}\right\vert \mathbb{I}_{\Lambda _{L}}^{(\omega ,%
\mathcal{E})}}\right) -\left( \varrho ^{(\omega )}\left( \mathbb{I}_{\Lambda
_{L}}^{(\omega ,\mathcal{E})}\mathrm{e}^{s\left\vert \Lambda _{L}\right\vert 
\mathbb{I}_{\Lambda _{L}}^{(\omega ,\mathcal{E})}}\right) \right) ^{2}}{%
\left( \varrho ^{(\omega )}\left( \mathrm{e}^{s\left\vert \Lambda
_{L}\right\vert \mathbb{I}_{\Lambda _{L}}^{(\omega ,\mathcal{E})}}\right)
\right) ^{2}}\right) .
\end{eqnarray*}%
See (\ref{aa}). Note that the above limits for the first- and second-order
derivatives do not need to be taken only along subsequences, by the (Arzel%
\`{a}-) Ascoli theorem \cite[Theorem A5]{Rudin} and (\ref{dddddddddvvvvvvvv}%
). In particular, for $s=0$, 
\begin{equation}
\partial _{s}\mathrm{J}^{(s\mathcal{E})}|_{s=0}=\lim_{L\rightarrow \infty }%
\mathbb{E}\left[ \varrho ^{(\cdot )}\left( \mathbb{I}_{\Lambda _{L}}^{(\cdot
,\mathcal{E})}\right) \right] =\lim_{L\rightarrow \infty }\varrho ^{(\omega
)}\left( \mathbb{I}_{\Lambda _{L}}^{(\omega ,\mathcal{E})}\right)
\label{current utiles}
\end{equation}%
and%
\begin{eqnarray}
\partial _{s}^{2}\mathrm{J}^{(s\mathcal{E})}|_{s=0} &=&\lim_{L\rightarrow
\infty }|\Lambda _{L}|\mathbb{E}\left[ \varrho ^{(\cdot )}\left( \left( 
\mathbb{I}_{\Lambda _{L}}^{(\cdot ,\mathcal{E})}\right) ^{2}\right) -\left(
\varrho ^{(\cdot )}\left( \mathbb{I}_{\Lambda _{L}}^{(\cdot ,\mathcal{E}%
)}\right) \right) ^{2}\right] \text{ }  \notag \\
&=&\lim_{L\rightarrow \infty }|\Lambda _{L}|\left( \varrho ^{(\omega
)}\left( \left( \mathbb{I}_{\Lambda _{L}}^{(\omega ,\mathcal{E})}\right)
^{2}\right) -\left( \varrho ^{(\omega )}\left( \mathbb{I}_{\Lambda
_{L}}^{(\omega ,\mathcal{E})}\right) \right) ^{2}\right) .
\label{current utiles_2}
\end{eqnarray}%
By Equations (\ref{qq})-(\ref{qq2}) and (\ref{current utiles_2}), $\partial
_{s}^{2}\mathrm{J}^{(s\mathcal{E})}|_{s=0}$ is the thermodynamic limit of
the quantum fluctuations of linear response currents.
\end{proof}

From the proof of Proposition \ref{3eme_derivee}, it is apparent that the $n$%
-th derivative $\partial _{s}^{n}\mathrm{J}_{\mathcal{Z},\mathcal{Z}%
^{(\varrho )},\mathcal{Z}^{(\tau )}}^{(\omega ,s\mathcal{E})}$, $n\in 
\mathbb{N}$, has the following structure:%
\begin{eqnarray*}
\partial _{s}^{n}\mathrm{J}_{\mathcal{Z},\mathcal{Z}^{(\varrho )},\mathcal{Z}%
^{(\tau )}}^{(\omega ,s\mathcal{E})} &=&\frac{1}{|\cup \mathcal{Z}|}%
\sum\limits_{k=1}^{n}\sum\limits_{x_{k},y_{k}\in \mathbb{Z}^{d}}\left\langle 
\mathfrak{e}_{x_{1}},K_{\mathcal{Z},\mathcal{Z}^{(\tau )}}^{(\omega ,%
\mathcal{E})}\mathfrak{e}_{y_{1}}\right\rangle _{\mathfrak{h}}\cdots
\left\langle \mathfrak{e}_{x_{k}},K_{\mathcal{Z},\mathcal{Z}^{(\tau
)}}^{(\omega ,\mathcal{E})}\mathfrak{e}_{y_{k}}\right\rangle _{\mathfrak{h}}
\\
&&\times \varpi _{s}^{T}(a^{\ast }\left( \mathfrak{e}_{x_{1}}\right) a\left( 
\mathfrak{e}_{y_{1}}\right) ;\cdots ;a^{\ast }\left( \mathfrak{e}%
_{x_{n}}\right) a\left( \mathfrak{e}_{x_{n}}\right) ) \\
&=&\frac{1}{|\cup \mathcal{Z}|}\sum\limits_{g\in \mathcal{G}%
_{n}^{c}}\sum\limits_{k=1}^{n}\sum\limits_{x_{k},y_{k}\in \mathbb{Z}^{d}}%
\mathrm{sign}(g)\left\langle \mathfrak{e}_{x_{1}},K_{\mathcal{Z},\mathcal{Z}%
^{(\tau )}}^{(\omega ,\mathcal{E})}\mathfrak{e}_{y_{1}}\right\rangle _{%
\mathfrak{h}}\cdots \left\langle \mathfrak{e}_{x_{k}},K_{\mathcal{Z},%
\mathcal{Z}^{(\tau )}}^{(\omega ,\mathcal{E})}\mathfrak{e}%
_{y_{k}}\right\rangle _{\mathfrak{h}} \\
&&\qquad \qquad \qquad \qquad \qquad \qquad \qquad \qquad \times
\prod\limits_{l\in g}k_{s}\left( l;x_{1},y_{1},\ldots ,x_{n},y_{n}\right) 
\text{ },
\end{eqnarray*}%
where $\mathcal{G}_{n}^{c}$ is the set of all connected oriented graphs $g$
such that, for each vertex $v\in \{1,\ldots ,n\}$ of $g\in \mathcal{G}%
_{n}^{c}$, there is exactly one line of the form $(v,\tilde{v}_{1})\in g$
and exactly one line of the form $(\tilde{v}_{2},v)\in g$, for some $\tilde{v%
}_{1},\tilde{v}_{2}\in \{1,\ldots ,n\}$. The constants $k_{s}\left(
l;x_{1},y_{1},\ldots ,x_{n},y_{n}\right) $, $l\in \{1,\ldots ,n\}^{2}$, $%
x_{1},y_{1},\ldots ,x_{n},y_{n}\in \mathbb{Z}^{d}$, are defined by%
\begin{equation*}
k_{s}\left( (i,j);x_{1},y_{1},\ldots ,x_{n},y_{n}\right) \doteq \left\{ 
\begin{array}{ccc}
\varpi _{s}(a^{\ast }\left( \mathfrak{e}_{x_{i}}\right) a\left( \mathfrak{e}%
_{y_{j}}\right) ) & \text{if} & i\leq j. \\ 
\varpi _{s}(a\left( \mathfrak{e}_{y_{j}}\right) a^{\ast }\left( \mathfrak{e}%
_{x_{i}}\right) ) & \text{if} & i>j.%
\end{array}%
\right.
\end{equation*}%
The quantity $\mathrm{sign}(g)\in \{-1,1\}$ is a sign only depending on the
graph $g\in \mathcal{G}_{n}^{c}$. By using this expression, exactly as in
the special case $n=3$, for any fixed $n\in \mathbb{N}$ and electric field $%
\mathcal{E}$ one can bound the $n$-th derivative $\partial _{s}^{n}\mathrm{J}%
_{\mathcal{Z},\mathcal{Z}^{(\varrho )},\mathcal{Z}^{(\tau )}}^{(\omega ,s%
\mathcal{E})}$ uniformly. This implies that the generating function $%
s\mapsto \mathrm{J}^{(s\mathcal{E})}$ defined by (\ref{generating_function})
is a \emph{smooth} function of $s\in \mathbb{R}$, by the (Arzel\`{a}-)
Ascoli theorem \cite[Theorem A5]{Rudin} used as in the proof of Corollary %
\ref{final corrollary}. We refrain from working out the full arguments to
prove this claim since absolutely no new conceptual ingredient would appear
in this generalization.

\subsection{Non-Vanishing Second Derivative of Generating Functions at the
Origin\label{positivite}}

We discuss necessary conditions for 
\begin{equation}
\partial _{s}^{2}\mathrm{J}^{(s\mathcal{E})}|_{s=0}\neq 0,
\label{blabla_2blabla_2}
\end{equation}%
which is a condition appearing in Theorem \ref{Quantum fluctuations and rate
function} (ii). In other words, the aim of this section is to prove Theorem %
\ref{new theorem}. To this end, it is convenient to write this quantity by
means of the one-particle Hilbert space $\mathfrak{h}$.

\begin{lemma}[Quantum fluctuations on the one-particle Hilbert space]
\label{equation inmportantelemma}\mbox{}\newline
For all $\beta \in \mathbb{R}^{+}$, $\vartheta ,\lambda \in \mathbb{R}%
_{0}^{+}$, $\mathcal{E}\in C_{0}^{0}(\mathbb{R};\mathbb{R}^{d})$ and $\vec{w}%
\in {\ \mathbb{R}}^{d}$ with $\left\Vert \vec{w}\right\Vert _{\mathbb{R}%
^{d}}=1$,%
\begin{equation*}
\partial _{s}^{2}\mathrm{J}^{(s\mathcal{E})}|_{s=0}=\lim_{L\rightarrow
\infty }\frac{1}{|\Lambda _{L}|}\mathbb{E}\left[ \mathrm{Tr}_{\mathfrak{h}%
}\left( K_{\{\Lambda _{L}\},\{\mathbb{Z}^{d}\}}^{(\cdot ,\mathcal{E})}\frac{1%
}{1+\mathrm{e}^{-\beta h^{(\cdot )}}}K_{\{\Lambda _{L}\},\{\mathbb{Z}%
^{d}\}}^{(\cdot ,\mathcal{E})}\frac{1}{1+\mathrm{e}^{\beta h^{(\cdot )}}}%
\right) \right]
\end{equation*}%
with $\mathrm{Tr}_{\mathfrak{h}}$ being the trace on $\mathfrak{h}\doteq
\ell ^{2}(\mathbb{Z}^{d};\mathbb{C})$.
\end{lemma}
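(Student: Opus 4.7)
The plan is to reduce the quantum fluctuation $\mathbf{F}_L^{(\omega,\mathcal{E})}$ to a trace on $\mathfrak{h}$ by exploiting the bilinear representation (\ref{bilinear idiot}) of the current observable together with the gauge-invariant quasi-free structure of $\varrho^{(\omega)}$. By Corollary \ref{final corrollary} combined with identity (\ref{aa}) and the definition (\ref{qq2}), one has, for $\omega$ in the measurable set of full measure given by that corollary,
\begin{equation*}
\partial_{s}^{2}\mathrm{J}^{(s\mathcal{E})}|_{s=0}=\lim_{L\to\infty}\frac{1}{|\Lambda_{L}|}\mathbb{E}\left[\varrho^{(\cdot)}\bigl((\mathfrak{K}_{\{\Lambda_{L}\},\{\mathbb{Z}^{d}\}}^{(\cdot,\mathcal{E})})^{2}\bigr)-\varrho^{(\cdot)}(\mathfrak{K}_{\{\Lambda_{L}\},\{\mathbb{Z}^{d}\}}^{(\cdot,\mathcal{E})})^{2}\right],
\end{equation*}
so it suffices to compute the connected two-point function of $\mathfrak{K}\doteq \mathfrak{K}_{\{\Lambda_{L}\},\{\mathbb{Z}^{d}\}}^{(\omega,\mathcal{E})}$ in the KMS state and recognise the resulting sum as a trace.

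The first step is to expand $\mathfrak{K}^{2}$ into a quadruple sum of monomials $a^{\ast}(\mathfrak{e}_{x_{1}})a(\mathfrak{e}_{y_{1}})a^{\ast}(\mathfrak{e}_{x_{2}})a(\mathfrak{e}_{y_{2}})$ with coefficients $\langle \mathfrak{e}_{x_{i}},K\mathfrak{e}_{y_{i}}\rangle_{\mathfrak{h}}$, where $K\doteq K_{\{\Lambda_{L}\},\{\mathbb{Z}^{d}\}}^{(\omega,\mathcal{E})}$ has finite-dimensional range so that all sums are finite. Applying the CAR (\ref{CAR}) and then the gauge-invariant quasi-free identities (\ref{ass O0-00})--(\ref{ass O0-00bis}) for $\varrho^{(\omega)}$ (exactly as in the $n=3$ Wick-type computation of Proposition \ref{3eme_derivee}), the four-point correlation splits into a disconnected piece $\varrho^{(\omega)}(a^{\ast}(\mathfrak{e}_{x_{1}})a(\mathfrak{e}_{y_{1}}))\,\varrho^{(\omega)}(a^{\ast}(\mathfrak{e}_{x_{2}})a(\mathfrak{e}_{y_{2}}))$, which cancels $\varrho^{(\omega)}(\mathfrak{K})^{2}$, and a single connected (truncated) contribution
\begin{equation*}
\varrho^{(\omega)}\bigl(a^{\ast}(\mathfrak{e}_{x_{1}})a(\mathfrak{e}_{y_{2}})\bigr)\,\varrho^{(\omega)}\bigl(a(\mathfrak{e}_{y_{1}})a^{\ast}(\mathfrak{e}_{x_{2}})\bigr).
\end{equation*}

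The second step is to identify the surviving quadruple sum with a trace. Using (\ref{2-point correlation function}) together with the CAR, the two two-point functions above equal respectively
\begin{equation*}
\left\langle \mathfrak{e}_{y_{2}},\tfrac{1}{1+\mathrm{e}^{\beta h^{(\omega)}}}\mathfrak{e}_{x_{1}}\right\rangle_{\mathfrak{h}}\qquad\text{and}\qquad \left\langle \mathfrak{e}_{y_{1}},\tfrac{1}{1+\mathrm{e}^{-\beta h^{(\omega)}}}\mathfrak{e}_{x_{2}}\right\rangle_{\mathfrak{h}}.
\end{equation*}
Inserting resolutions of the identity in the canonical basis $\{\mathfrak{e}_{x}\}_{x\in\mathbb{Z}^{d}}$ and reordering the four inner products as a cyclic product, the sum collapses exactly to $\mathrm{Tr}_{\mathfrak{h}}\bigl(K(1+\mathrm{e}^{\beta h^{(\omega)}})^{-1}K(1+\mathrm{e}^{-\beta h^{(\omega)}})^{-1}\bigr)$, since $K$ has finite rank. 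Dividing by $|\Lambda_{L}|$, taking expectations and passing to the limit $L\to\infty$ yields the stated formula.

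The only genuine obstacle is the bookkeeping of signs and index permutations when passing from (\ref{ass O0-00bis}) to the truncated 2-point function and then to the cyclic form of the trace; this is the analogue, for $n=2$, of the connected-graph identification carried out in the proof of Proposition \ref{3eme_derivee}, and no new analytic input is required beyond the finite-rank property of $K$ (which makes the trace well-defined and the sums absolutely convergent).
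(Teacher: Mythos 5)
Your proposal is correct and follows essentially the same route as the paper: it starts from Corollary \ref{final corrollary} (i.e., (\ref{current utiles_2})) together with (\ref{aa})--(\ref{bilinear idiot}), uses the quasi-free/CAR factorization of the four-point function so that the disconnected part cancels $\varrho^{(\omega)}(\mathfrak{K})^{2}$, and then identifies the surviving connected sum, via (\ref{2-point correlation function}), with the trace of $K\frac{1}{1+\mathrm{e}^{-\beta h^{(\omega)}}}K\frac{1}{1+\mathrm{e}^{\beta h^{(\omega)}}}$ (your ordering agrees with the stated one by cyclicity, $K$ being finite rank). This is precisely the paper's argument, just written out in more detail.
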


\begin{proof}
Fix all parameters of the lemma. Using Equations (\ref{aa}) and (\ref%
{bilinear idiot}) together with the quasi-free property of $\varrho
^{(\omega )}$, one obtains from (\ref{current utiles_2}) that%
\begin{multline*}
\partial _{s}^{2}\mathrm{J}^{(s\mathcal{E})}|_{s=0}=\lim_{L\rightarrow
\infty }\frac{1}{|\Lambda _{L}|}\sum\limits_{x,y,u,v\in \mathbb{Z}%
^{d}}\left\langle \mathfrak{e}_{x},K_{\{\Lambda _{L}\},\{\mathbb{Z}%
^{d}\}}^{(\omega ,\mathcal{E})}\mathfrak{e}_{y}\right\rangle _{\mathfrak{h}%
}\left\langle \mathfrak{e}_{u},K_{\{\Lambda _{L}\},\{\mathbb{Z}%
^{d}\}}^{(\omega ,\mathcal{E})}\mathfrak{e}_{v}\right\rangle _{\mathfrak{h}}
\\
\times \varrho ^{(\omega )}\left( a\left( \mathfrak{e}_{y}\right) a\left( 
\mathfrak{e}_{u}\right) ^{\ast }\right) \varrho ^{(\omega )}\left( a\left( 
\mathfrak{e}_{x}\right) ^{\ast }a\left( \mathfrak{e}_{v}\right) \right) ,
\end{multline*}
because of the identity%
\begin{equation*}
\rho \left( a(\mathfrak{e}_{x})^{\ast }a(\mathfrak{e}_{y})a(\mathfrak{e}%
_{u})^{\ast }a(\mathfrak{e}_{v})\right) =\rho \left( a(\mathfrak{e}%
_{x})^{\ast }a(\mathfrak{e}_{y})\right) \rho \left( a(\mathfrak{e}%
_{u})^{\ast }a(\mathfrak{e}_{v})\right) +\rho \left( a(\mathfrak{e}_{y})a(%
\mathfrak{e}_{u})^{\ast }\right) \rho \left( a(\mathfrak{e}_{x})^{\ast }a(%
\mathfrak{e}_{v})\right)
\end{equation*}%
for any $x,y,u,v\in \mathbb{Z}^{d}$ and quasi-free state $\rho $ on $%
\mathcal{U}$, see (\ref{CAR}) and (\ref{ass O0-00bis}). By Equation (\ref%
{2-point correlation function}) and straightforward computations, the
assertion follows.
\end{proof}

Therefore, (\ref{blabla_2blabla_2}) holds true if 
\begin{equation*}
\lim_{L\rightarrow \infty }\left\{ \frac{1}{|\Lambda _{L}|}\left\vert 
\mathrm{Tr}_{\mathfrak{h}}\left( K_{\{\Lambda _{L}\},\{\mathbb{Z}%
^{d}\}}^{(\omega ,\mathcal{E})}\frac{1}{1+\mathrm{e}^{-\beta h^{(\omega )}}}%
K_{\{\Lambda _{L}\},\{\mathbb{Z}^{d}\}}^{(\omega ,\mathcal{E})}\frac{1}{1+%
\mathrm{e}^{\beta h^{(\omega )}}}\right) \right\vert \right\} \geq
\varepsilon >0
\end{equation*}%
for some strictly positive constant $\varepsilon \in \mathbb{R}^{+}$. In
order to verify this bound, we start with an elementary observation:

\begin{lemma}[Quantum fluctuations and the Hilbert-Schmidt norm of $%
K_{\{\Lambda _{L}\},\{\mathbb{Z}^{d}\}}^{(\protect\omega ,\mathcal{E})}$]
\label{long_lemme copy(2)}\mbox{}\newline
For all $\beta \in \mathbb{R}^{+}$, $\vartheta ,\lambda \in \mathbb{R}%
_{0}^{+}$, $\omega \in \Omega $, $\mathcal{E}\in C_{0}^{0}(\mathbb{R};%
\mathbb{R}^{d})$ and $\vec{w}\in {\mathbb{R}}^{d}$ with $\left\Vert \vec{w}%
\right\Vert _{\mathbb{R}^{d}}=1$, 
\begin{multline*}
\mathrm{Tr}_{\mathfrak{h}}\left( K_{\{\Lambda _{L}\},\{\mathbb{Z}%
^{d}\}}^{(\omega ,\mathcal{E})}\frac{1}{1+\mathrm{e}^{-\beta h^{(\omega )}}}%
K_{\{\Lambda _{L}\},\{\mathbb{Z}^{d}\}}^{(\omega ,\mathcal{E})}\frac{1}{1+%
\mathrm{e}^{\beta h^{(\omega )}}}\right) \\
\geq \frac{1}{\left( 1+\mathrm{e}^{\beta \left( 2d\left( 2+\vartheta \right)
+\lambda \right) }\right) ^{2}}\mathrm{Tr}_{\mathfrak{h}}\left( \left(
K_{\{\Lambda _{L}\},\{\mathbb{Z}^{d}\}}^{(\omega ,\mathcal{E})}\right)
^{\ast }K_{\{\Lambda _{L}\},\{\mathbb{Z}^{d}\}}^{(\omega ,\mathcal{E}%
)}\right) .
\end{multline*}
\end{lemma}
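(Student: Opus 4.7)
The plan is to reduce the bound to two elementary facts: that $K:=K_{\{\Lambda_L\},\{\mathbb{Z}^d\}}^{(\omega,\mathcal{E})}$ is self-adjoint on $\mathfrak{h}$, and that the one-particle Hamiltonian $h^{(\omega)}$ has norm uniformly bounded by $M:=2d(2+\vartheta)+\lambda$. Self-adjointness of $K$ is read off from its defining formula (\ref{definition K}): the diamagnetic piece involves only $\mathrm{Re}\{S_{x+e_k,x}^{(\omega)}\}$, while the paramagnetic piece is an integral of expressions of the form $i[A,B]$ with $A,B$ self-adjoint (hence $i[A,B]$ self-adjoint). The norm estimate $\|h^{(\omega)}\|_{\mathcal{B}(\mathfrak{h})}\le M$ follows from a routine Schur-test bound on $\Delta_{\omega,\vartheta}$ as defined in (\ref{equation sup}), together with $\|\lambda\omega_1\|_\infty\le\lambda$; equivalently, it can be read off the estimate (\ref{Combes-ThomasCombes-Thomas}) at $t=0$.

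I would next invoke the spectral theorem for $h^{(\omega)}$, whose spectrum lies in $[-M,M]$. The two Fermi-Dirac weights $x\mapsto(1+\mathrm{e}^{\mp\beta x})^{-1}$ are monotone and each attains its minimum on $[-M,M]$ equal to $a:=(1+\mathrm{e}^{\beta M})^{-1}$. Consequently both $(1+\mathrm{e}^{-\beta h^{(\omega)}})^{-1}$ and $(1+\mathrm{e}^{\beta h^{(\omega)}})^{-1}$ dominate $a\mathbf{1}$ in the operator order, and I may write them as $a\mathbf{1}+R_1$ and $a\mathbf{1}+R_2$ with $R_1,R_2\geq 0$ bounded and self-adjoint.

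The last step expands the trace into four pieces:
\begin{equation*}
\mathrm{Tr}_{\mathfrak{h}}\bigl(K(a\mathbf{1}+R_1)K(a\mathbf{1}+R_2)\bigr)=a^{2}\,\mathrm{Tr}_{\mathfrak{h}}(K^{2})+a\,\mathrm{Tr}_{\mathfrak{h}}(K^{2}R_{2})+a\,\mathrm{Tr}_{\mathfrak{h}}(KR_{1}K)+\mathrm{Tr}_{\mathfrak{h}}(KR_{1}KR_{2}).
\end{equation*}
The leading piece equals $a^{2}\,\mathrm{Tr}_{\mathfrak{h}}(K^{\ast }K)$ and already supplies the announced lower bound with constant $a^{2}=(1+\mathrm{e}^{\beta(2d(2+\vartheta)+\lambda)})^{-2}$. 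The main (and really the only) point to verify is that the three remaining pieces are nonnegative. Using $K=K^{\ast}$, cyclicity of the trace, and the existence of positive square roots for $R_{1},R_{2}$, each such piece factorizes as a squared Hilbert-Schmidt norm, e.g.\ $\mathrm{Tr}_{\mathfrak{h}}(KR_{1}KR_{2})=\|R_{1}^{1/2}KR_{2}^{1/2}\|_{\mathrm{HS}}^{2}\geq 0$ and $\mathrm{Tr}_{\mathfrak{h}}(K^{2}R_{j})=\|KR_{j}^{1/2}\|_{\mathrm{HS}}^{2}\geq 0$. I do not foresee any further obstacle; trace-class questions are automatic here because $\Lambda_{L}$ is finite, so $K$ has finite-dimensional range, and the remaining factors are bounded.
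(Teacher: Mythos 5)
Your proof is correct and takes essentially the same route as the paper: the uniform bound $\Vert h^{(\omega )}\Vert _{\mathcal{B}(\mathfrak{h})}\leq 2d(2+\vartheta )+\lambda $, the resulting operator lower bound $(1+\mathrm{e}^{\pm \beta h^{(\omega )}})^{-1}\geq (1+\mathrm{e}^{\beta (2d(2+\vartheta )+\lambda )})^{-1}\mathbf{1}_{\mathfrak{h}}$, self-adjointness and finite rank of $K$, and positivity/cyclicity of the trace; your decomposition into $a\mathbf{1}+R_{j}$ with nonnegative cross terms merely spells out the step the paper compresses into \textquotedblleft it suffices to use the cyclicity of the trace\textquotedblright . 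The only slip is the aside that the norm bound \textquotedblleft can be read off (\ref{Combes-ThomasCombes-Thomas}) at $t=0$\textquotedblright\ (at $t=0$ that estimate is trivial and yields no bound on $h^{(\omega )}$), but this is harmless since your Schur-test argument is exactly the paper's bound (\ref{borne trivial}).
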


\begin{proof}
Fix all parameters of the lemma. By the functional calculus, $(1+\mathrm{e}%
^{\pm \beta h^{(\omega )}})^{-1}\ $are positive operators satisfying 
\begin{equation*}
\frac{1}{1+\mathrm{e}^{\pm \beta h^{(\omega )}}}\geq \frac{1}{1+\mathrm{e}%
^{\beta \sup_{\omega \in \Omega }\Vert h^{(\omega )}\Vert _{\mathcal{B}%
\left( \mathfrak{h}\right) }}}\mathbf{1}_{\mathfrak{h}},
\end{equation*}%
while, for any $\omega =\left( \omega _{1},\omega _{2}\right) \in \Omega $
and $\lambda ,\vartheta \in \mathbb{R}_{0}^{+}$,%
\begin{equation}
\Vert h^{(\omega )}\Vert _{\mathcal{B}\left( \mathfrak{h}\right) }\leq \Vert
\Delta _{\omega ,\vartheta }\Vert _{\mathcal{B}\left( \mathfrak{h}\right)
}+\lambda \Vert \omega _{1}\Vert _{\mathcal{B}\left( \mathfrak{h}\right)
}\leq 2d\left( 2+\vartheta \right) +\lambda ,  \label{borne trivial}
\end{equation}%
see (\ref{equation sup})-(\ref{eq:Ham_lap_pot}). Since $K_{\left\{ \Lambda
_{L}\right\} ,\left\{ \mathbb{Z}^{d}\right\} }^{(\omega ,\mathcal{E})}$ is a
self-adjoint operator (see (\ref{definition K}) or (\ref{definition
Kbisdefinition Kbis}) below), it thus suffices to use the cyclicity of the
trace to prove the lemma.
\end{proof}

Recall that $K_{\left\{ \Lambda _{L}\right\} ,\left\{ \mathbb{Z}^{d}\right\}
}^{(\omega ,\mathcal{E})}$ is defined by (\ref{definition K}), that is in
this case, 
\begin{equation}
K_{\{\Lambda _{L}\},\left\{ \mathbb{Z}^{d}\right\} }^{(\omega ,\mathcal{E}%
)}\doteq \underset{k,q=1}{\sum^{d}}w_{k}\int_{-\infty }^{0}\left\{ \mathcal{E%
}\left( \alpha \right) \right\} _{q}\left( \delta _{k,q}\mathbf{M}%
_{k}^{(L,\omega )}+\int\nolimits_{0}^{-\alpha }\mathbf{N}_{\gamma
,q,k}^{(L,\omega )}\mathrm{d}\gamma \right) \mathrm{d}\alpha ,
\label{definition Kbisdefinition Kbis}
\end{equation}%
where, for any $k,q\in \{1,\ldots ,d\}$, $\gamma \in \mathbb{R}$, $\vartheta
,\lambda \in \mathbb{R}_{0}^{+}$, $\omega \in \Omega $ and $L\in \mathbb{R}%
^{+}$, 
\begin{eqnarray}
\mathbf{M}_{k}^{(L,\omega )} &\doteq &\underset{x,x+e_{k}\in \Lambda _{L}}{%
\sum }2\Re \mathrm{e}\{S_{x+e_{k},x}^{(\omega )}\}  \label{M} \\
\mathbf{N}_{\gamma ,q,k}^{(L,\omega )} &\doteq &\underset{%
x,y,x+e_{k},y+e_{q}\in \Lambda _{L}}{\sum }4i\left[ \mathrm{e}^{-i\gamma
h^{(\omega )}}\Im \mathrm{m}\{S_{y+e_{q},y}^{(\omega )}\}\mathrm{e}^{i\gamma
h^{(\omega )}},\Im \mathrm{m}\{S_{x+e_{k},x}^{(\omega )}\}\right]  \label{N}
\end{eqnarray}%
with $S_{x,y}^{(\omega )}$ being the single-hopping operators defined by (%
\ref{shift})-(\ref{shift2}) for any $x,y\in \mathbb{Z}^{d}$.

The square of the Hilbert-Schmidt norm of $K_{\{\Lambda _{L}\},\{\mathbb{Z}%
^{d}\}}^{(\omega ,\mathcal{E})}$ is obviously equal to%
\begin{equation*}
\mathrm{Tr}_{\mathfrak{h}}\left( \left( K_{\{\Lambda _{L}\},\{\mathbb{Z}%
^{d}\}}^{(\omega ,\mathcal{E})}\right) ^{\ast }K_{\{\Lambda _{L}\},\{\mathbb{%
Z}^{d}\}}^{(\omega ,\mathcal{E})}\right) =\sum_{z\in \mathbb{Z}%
^{d}}\left\Vert K_{\{\Lambda _{L}\},\{\mathbb{Z}^{d}\}}^{(\omega ,\mathcal{E}%
)}\mathfrak{e}_{z}\right\Vert _{\mathfrak{h}}^{2}
\end{equation*}%
and, consequently, we derive an explicit expression for the vectors 
\begin{equation*}
K_{\{\Lambda _{L}\},\{\mathbb{Z}^{d}\}}^{(\omega ,\mathcal{E})}\mathfrak{e}%
_{z}\in \mathfrak{h},\qquad z\in \mathbb{Z}^{d}.
\end{equation*}%
This can be directly obtained from Equation (\ref{definition Kbisdefinition
Kbis}) together with the following assertion:

\begin{lemma}[Explicit computations of $\mathbf{M}_{k}^{(L,\protect\omega )}$%
\textbf{\ and }$\mathbf{N}_{\protect\gamma ,q,k}^{(L,\protect\omega )}$ in
the canonical basis]
\label{long_lemme copy(3)}\mbox{}\newline
For all $k,q\in \{1,\ldots ,d\}$, $\gamma \in \mathbb{R}$, $\vartheta
,\lambda \in \mathbb{R}_{0}^{+}$, $\omega \in \Omega $, $\gamma \in \mathbb{R%
}$, $L\geq 2$ and $z\in \Lambda _{L/2}$,%
\begin{equation*}
\mathbf{M}_{k}^{(L,\omega )}\mathfrak{e}_{z}=\langle \mathfrak{e}%
_{z-e_{k}},\Delta _{\omega ,\vartheta }\mathfrak{e}_{z}\rangle _{\mathfrak{h}%
}\mathfrak{e}_{z-e_{k}}+\langle \mathfrak{e}_{z+e_{k}},\Delta _{\omega
,\vartheta }\mathfrak{e}_{z}\rangle _{\mathfrak{h}}\mathfrak{e}_{z+e_{k}}
\end{equation*}%
and, in the limit $L\rightarrow \infty $, 
\begin{equation*}
\mathbf{N}_{\gamma ,q,k}^{(L,\omega )}\mathfrak{e}_{z}=\sum_{x,y\in \mathbb{Z%
}^{d}}\zeta _{x,y,z}\mathfrak{e}_{x}+\mathbf{R}_{\gamma ,q,k}^{(L,\omega )}%
\mathfrak{e}_{z}\ ,\qquad \sum_{x,y\in \mathbb{Z}^{d}}\left\vert \zeta
_{x,y,z}\right\vert ^{2}<\infty \ ,
\end{equation*}%
with $\mathbf{R}_{\gamma ,q,k}^{(L,\omega )}\in \mathcal{B}\left( \mathfrak{h%
}\right) $ satisfying%
\begin{equation}
\lim_{L\rightarrow \infty }\left\Vert \mathbf{R}_{\gamma ,q,k}^{(L,\omega
)}\right\Vert _{\mathcal{B}\left( \mathfrak{h}\right) }=0,  \label{limit}
\end{equation}%
uniformly with respect to $\omega \in \Omega $, $\lambda \in \mathbb{R}%
_{0}^{+}$ and $\vartheta $,$\gamma $ in compact subsets of $\mathbb{R}%
_{0}^{+}$ and $\mathbb{R}$, respectively, and where, for any $x,y,z\in 
\mathbb{Z}^{d}$,%
\begin{eqnarray*}
\zeta _{x,y,z} &\doteq &i(1+\vartheta \omega
_{2}(\{x-e_{k},x\}))(1+\vartheta \omega _{2}(\{y,y+e_{q}\}))\langle 
\mathfrak{e}_{x-e_{k}},\mathrm{e}^{-i\gamma h^{(\omega )}}\mathfrak{e}%
_{y+e_{q}}\rangle _{\mathfrak{h}}\langle \mathfrak{e}_{y},\mathrm{e}%
^{i\gamma h^{(\omega )}}\mathfrak{e}_{z}\rangle _{\mathfrak{h}} \\
&&-i(1+\vartheta \omega _{2}(\{x-e_{k},x\}))(1+\vartheta \overline{\omega
_{2}(\{y+e_{q},y\})})\langle \mathfrak{e}_{x-e_{k}},\mathrm{e}^{-i\gamma
h^{(\omega )}}\mathfrak{e}_{y}\rangle _{\mathfrak{h}}\langle \mathfrak{e}%
_{y+e_{q}},\mathrm{e}^{i\gamma h^{(\omega )}}\mathfrak{e}_{z}\rangle _{%
\mathfrak{h}} \\
&&-i(1+\vartheta \overline{\omega _{2}(\{x+e_{k},x\})})(1+\vartheta \omega
_{2}(\{y,y+e_{q}\}))\langle \mathfrak{e}_{x+e_{k}},\mathrm{e}^{-i\gamma
h^{(\omega )}}\mathfrak{e}_{y+e_{q}}\rangle _{\mathfrak{h}}\langle \mathfrak{%
e}_{y},\mathrm{e}^{i\gamma h^{(\omega )}}\mathfrak{e}_{z}\rangle _{\mathfrak{%
h}} \\
&&+i(1+\vartheta \overline{\omega _{2}(\{x+e_{k},x\})})(1+\vartheta 
\overline{\omega _{2}(\{y+e_{q},y\})})\langle \mathfrak{e}_{x+e_{k}},\mathrm{%
e}^{-i\gamma h^{(\omega )}}\mathfrak{e}_{y}\rangle _{\mathfrak{h}}\langle 
\mathfrak{e}_{y+e_{q}},\mathrm{e}^{i\gamma h^{(\omega )}}\mathfrak{e}%
_{z}\rangle _{\mathfrak{h}} \\
&&-i(1+\vartheta \omega _{2}(\{y,y+e_{q}\}))(1+\vartheta \omega
_{2}(\{z,z+e_{k}\}))\langle \mathfrak{e}_{y},\mathrm{e}^{i\gamma h^{(\omega
)}}\mathfrak{e}_{z+e_{k}}\rangle _{\mathfrak{h}}\langle \mathfrak{e}_{x},%
\mathrm{e}^{-i\gamma h^{(\omega )}}\mathfrak{e}_{y+e_{q}}\rangle _{\mathfrak{%
h}} \\
&&+i(1+\vartheta \omega _{2}(\{y,y+e_{q}\}))(1+\vartheta \overline{\omega
_{2}(\{z,z-e_{k}\})})\langle \mathfrak{e}_{y},\mathrm{e}^{i\gamma h^{(\omega
)}}\mathfrak{e}_{z-e_{k}}\rangle _{\mathfrak{h}}\langle \mathfrak{e}_{x},%
\mathrm{e}^{-i\gamma h^{(\omega )}}\mathfrak{e}_{y+e_{q}}\rangle _{\mathfrak{%
h}} \\
&&+i(1+\vartheta \overline{\omega _{2}(\{y+e_{q},y\})})(1+\vartheta \omega
_{2}(\{z,z+e_{k}\}))\langle \mathfrak{e}_{y+e_{q}},\mathrm{e}^{i\gamma
h^{(\omega )}}\mathfrak{e}_{z+e_{k}}\rangle _{\mathfrak{h}}\langle \mathfrak{%
e}_{x},\mathrm{e}^{-i\gamma h^{(\omega )}}\mathfrak{e}_{y}\rangle _{%
\mathfrak{h}} \\
&&-i(1+\vartheta \overline{\omega _{2}(\{y+e_{q},y\})})(1+\vartheta 
\overline{\omega _{2}(\{z,z-e_{k}\})})\langle \mathfrak{e}_{y+e_{q}},\mathrm{%
e}^{i\gamma h^{(\omega )}}\mathfrak{e}_{z-e_{k}}\rangle _{\mathfrak{h}%
}\langle \mathfrak{e}_{x},\mathrm{e}^{-i\gamma h^{(\omega )}}\mathfrak{e}%
_{y}\rangle _{\mathfrak{h}}.
\end{eqnarray*}
\end{lemma}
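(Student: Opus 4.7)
My plan is a direct computation in the canonical basis $\{\mathfrak{e}_u\}_{u\in\mathbb{Z}^d}$, exploiting the rank-one structure $S_{a,b}^{(\omega)}\mathfrak{e}_u=\delta_{b,u}\langle\mathfrak{e}_a,\Delta_{\omega,\vartheta}\mathfrak{e}_b\rangle_{\mathfrak{h}}\mathfrak{e}_a$, which follows immediately from (\ref{shift})--(\ref{shift2}), together with the identity $(S_{a,b}^{(\omega)})^{\ast}=S_{b,a}^{(\omega)}$ coming from the self-adjointness of $\Delta_{\omega,\vartheta}$. For $\mathbf{M}_k^{(L,\omega)}$ I will expand $2\Re\mathrm{e}\{S_{x+e_k,x}^{(\omega)}\}=S_{x+e_k,x}^{(\omega)}+S_{x,x+e_k}^{(\omega)}$, apply the rank-one rule to $\mathfrak{e}_z$, and observe that for $z\in\Lambda_{L/2}$ with $L\geq 2$ both lattice neighbours $z\pm e_k$ lie in $\Lambda_L$. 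Hence the two nonzero contributions (at $x=z$ and $x=z-e_k$) survive the restriction $x,x+e_k\in\Lambda_L$ with no boundary remainder, yielding the claimed formula for $\mathbf{M}_k^{(L,\omega)}\mathfrak{e}_z$.

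For $\mathbf{N}_{\gamma,q,k}^{(L,\omega)}$ I will first rewrite $\Im\mathrm{m}\{S_{a,b}^{(\omega)}\}=(S_{a,b}^{(\omega)}-S_{b,a}^{(\omega)})/(2i)$, so that the prefactor $4i$ combines with the two factors $(2i)^{-1}$ to produce an overall $-i$ in front of the commutator of two differences. Expanding this commutator yields eight monomials of the forms $\pm\, e^{-i\gamma h^{(\omega)}}S_{\cdot,\cdot}e^{i\gamma h^{(\omega)}}S_{\cdot,\cdot}$ and $\pm\, S_{\cdot,\cdot}e^{-i\gamma h^{(\omega)}}S_{\cdot,\cdot}e^{i\gamma h^{(\omega)}}$. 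Each monomial applied to $\mathfrak{e}_z$ collapses via the rank-one action of $S$ and the resolution $e^{\mp i\gamma h^{(\omega)}}\mathfrak{e}_u=\sum_v\langle\mathfrak{e}_v,e^{\mp i\gamma h^{(\omega)}}\mathfrak{e}_u\rangle_{\mathfrak{h}}\mathfrak{e}_v$ into a double sum indexed by $x,y$ of a coefficient times $\mathfrak{e}_x$. Identifying the hopping amplitudes $\langle\mathfrak{e}_a,\Delta_{\omega,\vartheta}\mathfrak{e}_b\rangle$ with the factors $(1+\vartheta\omega_2(\{\cdot,\cdot\}))$, or their conjugates via $\overline{\langle\mathfrak{e}_a,\Delta_{\omega,\vartheta}\mathfrak{e}_b\rangle}=\langle\mathfrak{e}_b,\Delta_{\omega,\vartheta}\mathfrak{e}_a\rangle$, will reproduce the eight $\zeta_{x,y,z}$ contributions listed in the statement.

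The remainder $\mathbf{R}_{\gamma,q,k}^{(L,\omega)}$ is forced by the discrepancy between the finite-volume restriction $x,x+e_k,y,y+e_q\in\Lambda_L$ in the definition of $\mathbf{N}_{\gamma,q,k}^{(L,\omega)}$ and the infinite-volume summation $x,y\in\mathbb{Z}^d$ used in the $\zeta_{x,y,z}$. It therefore consists of boundary contributions localized in a one-layer-thick strip near $\partial\Lambda_L$, sandwiched between propagators $e^{\pm i\gamma h^{(\omega)}}$. Both the summability $\sum_{x,y}|\zeta_{x,y,z}|^{2}<\infty$ and the operator-norm decay $\lim_{L\to\infty}\|\mathbf{R}_{\gamma,q,k}^{(L,\omega)}\|_{\mathcal{B}(\mathfrak{h})}=0$ will follow from the Combes--Thomas estimate (\ref{Combes-ThomasCombes-Thomas}), which supplies exponential decay in $|u-v|$ of every factor $|\langle\mathfrak{e}_u,e^{\pm i\gamma h^{(\omega)}}\mathfrak{e}_v\rangle|$, uniformly in $\omega\in\Omega$ and $\lambda\in\mathbb{R}_0^{+}$. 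The uniformity in $\vartheta,\gamma$ on compact sets then follows from the explicit dependence of the rate $\mu_{\eta}$ in (\ref{Combes-ThomasCombes-Thomasbis}) on these parameters.

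The main obstacle is not the combinatorial bookkeeping of the eight monomials, which is routine once the rank-one rule for $S_{a,b}^{(\omega)}$ is in place, but the operator-norm control of $\mathbf{R}_{\gamma,q,k}^{(L,\omega)}$: since this remainder is a sum of rank-one operators decorated by two propagators, I will have to apply a Schur-test argument verifying that both $\sup_{v}\sum_{u}|\langle\mathfrak{e}_u,\mathbf{R}_{\gamma,q,k}^{(L,\omega)}\mathfrak{e}_v\rangle|$ and $\sup_{u}\sum_{v}|\langle\mathfrak{e}_u,\mathbf{R}_{\gamma,q,k}^{(L,\omega)}\mathfrak{e}_v\rangle|$ vanish as $L\to\infty$. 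This is where the Combes--Thomas exponential decay is indispensable: the convolution-type sums produced by the boundary strip are bounded by the ratio of its surface volume to $|\Lambda_L|$, which tends to zero.
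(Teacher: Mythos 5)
Your treatment of $\mathbf{M}_{k}^{(L,\omega )}$ and your eight-monomial expansion of the commutator in $\mathbf{N}_{\gamma ,q,k}^{(L,\omega )}$ follow the paper's own proof essentially verbatim: rank-one action of $S_{a,b}^{(\omega )}$, the observation that $z\in \Lambda _{L/2}$, $L\geq 2$ puts $z\pm e_{k}$ inside $\Lambda _{L}$, the resolution (\ref{bn}) of the propagators in the canonical basis, and the Combes--Thomas estimate (\ref{Combes-ThomasCombes-Thomas})--(\ref{Combes-ThomasCombes-Thomasbis}) for the absolute (hence square-) summability of the $\zeta _{x,y,z}$. That part is correct; only note in passing that $\langle \mathfrak{e}_{x},\Delta _{\omega ,\vartheta }\mathfrak{e}_{x\pm e_{k}}\rangle _{\mathfrak{h}}$ equals \emph{minus} the factor $(1+\vartheta \omega _{2}(\cdot ))$ (resp.\ its conjugate), the signs cancelling only because these amplitudes always occur in pairs in $\zeta _{x,y,z}$.

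The genuine weak point is your justification of (\ref{limit}). The discarded contribution is not a one-layer strip at $\partial \Lambda _{L}$ (it consists of all pairs $(x,y)$ with one of $x$, $x+e_{k}$, $y$, $y+e_{q}$ anywhere outside $\Lambda _{L}$), and the operator norm of $\mathbf{R}_{\gamma ,q,k}^{(L,\omega )}$ carries no $1/|\Lambda _{L}|$ normalization, so a ``surface volume over $|\Lambda _{L}|$'' bound is not the operative mechanism and would not make your Schur sums vanish; indeed, for $z$ near $\partial \Lambda _{L}$ the discrepancy between the restricted sum in (\ref{N}) and the sum over all $x,y\in \mathbb{Z}^{d}$ is of order one, uniformly in $L$. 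What actually kills the remainder is the hypothesis $z\in \Lambda _{L/2}$, which your plan never uses at this step: every discarded term carries, via Combes--Thomas, a weight of the form $\mathrm{e}^{-2\mu _{\eta }(|x-y|+|y-z|)}$ (up to unit shifts of the indices, and with $x$ pinned to $z$ by a Kronecker delta in four of the eight monomials), and since one of the relevant sites lies outside $\Lambda _{L}$ while $z\in \Lambda _{L/2}$, one has $|x-y|+|y-z|\geq |x-z|\geq \mathrm{dist}\left( z,\mathbb{Z}^{d}\setminus \Lambda _{L}\right) \geq L/2$ (or directly $|y-z|\geq L/2$). Summing with $\sum_{z}\mathrm{e}^{-2\mu _{\eta }(|x-z|+|y-z|)}\leq \mathrm{e}^{-\mu _{\eta }|x-y|}\sum_{z}\mathrm{e}^{-2\mu _{\eta }|z|}$, or by dominated convergence as the paper does, the Schur sums are then $\mathcal{O}(\mathrm{e}^{-cL})$ uniformly in $\omega \in \Omega $, $\lambda \in \mathbb{R}_{0}^{+}$, $z\in \Lambda _{L/2}$ and locally uniformly in $\vartheta ,\gamma $; combined with $\Vert C\Vert _{\mathcal{B}(\mathfrak{h})}\leq \sup_{x}\sum_{y}\left\vert \left\langle \mathfrak{e}_{x},C\mathfrak{e}_{y}\right\rangle _{\mathfrak{h}}\right\vert $ (see \cite[Lemma 4.1]{LDP}) this yields (\ref{limit}). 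The fix is short and uses only tools you already invoke, but as written the last step of your plan rests on the wrong mechanism.
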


\begin{proof}
Fix in all the proof $k,q\in \{1,\ldots ,d\}$, $\vartheta ,\lambda \in 
\mathbb{R}_{0}^{+}$, $\omega \in \Omega $, $\gamma \in \mathbb{R}$, $L\geq 2$
and $z\in \Lambda _{L/2}$. Since, by (\ref{shift})-(\ref{shift2}), for any $%
x,y\in \mathbb{Z}^{d}$, 
\begin{equation*}
2\Re \mathrm{e}\{S_{x,y}^{(\omega )}\}=\langle \mathfrak{e}_{y},\Delta
_{\omega ,\vartheta }\mathfrak{e}_{x}\rangle _{\mathfrak{h}}P_{\left\{
y\right\} }s_{y-x}P_{\left\{ x\right\} }+\langle \mathfrak{e}_{x},\Delta
_{\omega ,\vartheta }\mathfrak{e}_{y}\rangle _{\mathfrak{h}}P_{\left\{
x\right\} }s_{x-y}P_{\left\{ y\right\} },
\end{equation*}%
we deduce from (\ref{M}) together with (\ref{orthogonal projection}) and (%
\ref{shift}) that 
\begin{align*}
\mathbf{M}_{k}^{(L,\omega )}\mathfrak{e}_{z}& =\underset{x,x+e_{k}\in
\Lambda _{L}}{\sum }\left( \delta _{z,x+e_{k}}\langle \mathfrak{e}%
_{x},\Delta _{\omega ,\vartheta }\mathfrak{e}_{x+e_{k}}\rangle _{\mathfrak{h}%
}\mathfrak{e}_{x}+\delta _{z,x}\langle \mathfrak{e}_{x+e_{k}},\Delta
_{\omega ,\vartheta }\mathfrak{e}_{x}\rangle _{\mathfrak{h}}\mathfrak{e}%
_{x+e_{k}}\right) \\
& =\mathbf{1}\left[ z\in \Lambda _{L}\right] \mathbf{1}\left[ \left(
z-e_{k}\right) \in \Lambda _{L}\right] \langle \mathfrak{e}_{z-e_{k}},\Delta
_{\omega ,\vartheta }\mathfrak{e}_{z}\rangle _{\mathfrak{h}}\mathfrak{e}%
_{z-e_{k}} \\
& \qquad \qquad \qquad \qquad +\mathbf{1}\left[ z\in \Lambda _{L}\right] 
\mathbf{1}\left[ (z+e_{k})\in \Lambda _{L}\right] \langle \mathfrak{e}%
_{z+e_{k}},\Delta _{\omega ,\vartheta }\mathfrak{e}_{z}\rangle _{\mathfrak{h}%
}\mathfrak{e}_{z+e_{k}}.
\end{align*}%
If $z\in \Lambda _{L/2}\subseteq \Lambda _{L}$ and $L\geq 2$ then,
obviously, $z,\left( z-e_{k}\right) ,\left( z+e_{k}\right) \in \Lambda _{L}$
and the last equality yields the first assertion.

Since, again by (\ref{shift})-(\ref{shift2}), for any $x,y\in \mathbb{Z}^{d}$%
, 
\begin{equation*}
2\Im \mathrm{m}\{S_{x,y}^{(\omega )}\}=i\left( \langle \mathfrak{e}%
_{y},\Delta _{\omega ,\vartheta }\mathfrak{e}_{x}\rangle _{\mathfrak{h}%
}P_{\left\{ y\right\} }s_{y-x}P_{\left\{ x\right\} }-\langle \mathfrak{e}%
_{x},\Delta _{\omega ,\vartheta }\mathfrak{e}_{y}\rangle _{\mathfrak{h}%
}P_{\left\{ x\right\} }s_{x-y}P_{\left\{ y\right\} }\right) ,
\end{equation*}%
we compute that, for any $x,y\in \mathbb{Z}^{d}$, 
\begin{eqnarray*}
&&4i\left[ \mathrm{e}^{-i\gamma h^{(\omega )}}\Im \mathrm{m}%
\{S_{y+e_{q},y}^{(\omega )}\}\mathrm{e}^{i\gamma h^{(\omega )}},\Im \mathrm{m%
}\{S_{x+e_{k},x}^{(\omega )}\}\right] \\
&=&i\langle \mathfrak{e}_{x+e_{k}},\Delta _{\omega ,\vartheta }\mathfrak{e}%
_{x}\rangle _{\mathfrak{h}}\langle \mathfrak{e}_{y+e_{q}},\Delta _{\omega
,\vartheta }\mathfrak{e}_{y}\rangle _{\mathfrak{h}}s_{e_{k}}P_{\left\{
x\right\} }\mathrm{e}^{-i\gamma h^{(\omega )}}s_{e_{q}}P_{\left\{ y\right\} }%
\mathrm{e}^{i\gamma h^{(\omega )}} \\
&&-i\langle \mathfrak{e}_{x+e_{k}},\Delta _{\omega ,\vartheta }\mathfrak{e}%
_{x}\rangle _{\mathfrak{h}}\langle \mathfrak{e}_{y},\Delta _{\omega
,\vartheta }\mathfrak{e}_{y+e_{q}}\rangle _{\mathfrak{h}}s_{e_{k}}P_{\left\{
x\right\} }\mathrm{e}^{-i\gamma h^{(\omega )}}s_{-e_{q}}P_{\left\{
y+e_{q}\right\} }\mathrm{e}^{i\gamma h^{(\omega )}} \\
&&-i\langle \mathfrak{e}_{x},\Delta _{\omega ,\vartheta }\mathfrak{e}%
_{x+e_{k}}\rangle _{\mathfrak{h}}\langle \mathfrak{e}_{y+e_{q}},\Delta
_{\omega ,\vartheta }\mathfrak{e}_{y}\rangle _{\mathfrak{h}%
}s_{-e_{k}}P_{\left\{ x+e_{k}\right\} }\mathrm{e}^{-i\gamma h^{(\omega
)}}s_{e_{q}}P_{\left\{ y\right\} }\mathrm{e}^{i\gamma h^{(\omega )}} \\
&&+i\langle \mathfrak{e}_{x},\Delta _{\omega ,\vartheta }\mathfrak{e}%
_{x+e_{k}}\rangle _{\mathfrak{h}}\langle \mathfrak{e}_{y},\Delta _{\omega
,\vartheta }\mathfrak{e}_{y+e_{q}}\rangle _{\mathfrak{h}}s_{-e_{k}}P_{\left%
\{ x+e_{k}\right\} }\mathrm{e}^{-i\gamma h^{(\omega )}}s_{-e_{q}}P_{\left\{
y+e_{q}\right\} }\mathrm{e}^{i\gamma h^{(\omega )}} \\
&&-i\langle \mathfrak{e}_{y+e_{q}},\Delta _{\omega ,\vartheta }\mathfrak{e}%
_{y}\rangle _{\mathfrak{h}}\langle \mathfrak{e}_{x+e_{k}},\Delta _{\omega
,\vartheta }\mathfrak{e}_{x}\rangle _{\mathfrak{h}}\mathrm{e}^{-i\gamma
h^{(\omega )}}s_{e_{q}}P_{\left\{ y\right\} }\mathrm{e}^{i\gamma h^{(\omega
)}}s_{e_{k}}P_{\left\{ x\right\} } \\
&&+i\langle \mathfrak{e}_{y+e_{q}},\Delta _{\omega ,\vartheta }\mathfrak{e}%
_{y}\rangle _{\mathfrak{h}}\langle \mathfrak{e}_{x},\Delta _{\omega
,\vartheta }\mathfrak{e}_{x+e_{k}}\rangle _{\mathfrak{h}}\mathrm{e}%
^{-i\gamma h^{(\omega )}}s_{e_{q}}P_{\left\{ y\right\} }\mathrm{e}^{i\gamma
h^{(\omega )}}s_{-e_{k}}P_{\left\{ x+e_{k}\right\} } \\
&&+i\langle \mathfrak{e}_{y},\Delta _{\omega ,\vartheta }\mathfrak{e}%
_{y+e_{q}}\rangle _{\mathfrak{h}}\langle \mathfrak{e}_{x+e_{k}},\Delta
_{\omega ,\vartheta }\mathfrak{e}_{x}\rangle _{\mathfrak{h}}\mathrm{e}%
^{-i\gamma h^{(\omega )}}s_{-e_{q}}P_{\left\{ y+e_{q}\right\} }\mathrm{e}%
^{i\gamma h^{(\omega )}}s_{e_{k}}P_{\left\{ x\right\} } \\
&&-i\langle \mathfrak{e}_{y},\Delta _{\omega ,\vartheta }\mathfrak{e}%
_{y+e_{q}}\rangle _{\mathfrak{h}}\langle \mathfrak{e}_{x},\Delta _{\omega
,\vartheta }\mathfrak{e}_{x+e_{k}}\rangle _{\mathfrak{h}}\mathrm{e}%
^{-i\gamma h^{(\omega )}}s_{-e_{q}}P_{\left\{ y+e_{q}\right\} }\mathrm{e}%
^{i\gamma h^{(\omega )}}s_{-e_{k}}P_{\left\{ x+e_{k}\right\} }.
\end{eqnarray*}%
Using this last equality together with (\ref{shift})-(\ref{shift2}) and (\ref%
{N}), we thus get that%
\begin{eqnarray*}
\mathbf{N}_{\gamma ,q,k}^{(L,\omega )}\mathfrak{e}_{z} &=&\underset{%
x,y,x+e_{k},y+e_{q}\in \Lambda _{L}}{\sum } \\
&&\left\{ i\langle \mathfrak{e}_{x+e_{k}},\Delta _{\omega ,\vartheta }%
\mathfrak{e}_{x}\rangle _{\mathfrak{h}}\langle \mathfrak{e}_{y+e_{q}},\Delta
_{\omega ,\vartheta }\mathfrak{e}_{y}\rangle _{\mathfrak{h}}\langle 
\mathfrak{e}_{x},\mathrm{e}^{-i\gamma h^{(\omega )}}\mathfrak{e}%
_{y+e_{q}}\rangle _{\mathfrak{h}}\langle \mathfrak{e}_{y},\mathrm{e}%
^{i\gamma h^{(\omega )}}\mathfrak{e}_{z}\rangle _{\mathfrak{h}}\mathfrak{e}%
_{x+e_{k}}\right. \\
&&-i\langle \mathfrak{e}_{x+e_{k}},\Delta _{\omega ,\vartheta }\mathfrak{e}%
_{x}\rangle _{\mathfrak{h}}\langle \mathfrak{e}_{y},\Delta _{\omega
,\vartheta }\mathfrak{e}_{y+e_{q}}\rangle _{\mathfrak{h}}\langle \mathfrak{e}%
_{x},\mathrm{e}^{-i\gamma h^{(\omega )}}\mathfrak{e}_{y}\rangle _{\mathfrak{h%
}}\langle \mathfrak{e}_{y+e_{q}},\mathrm{e}^{i\gamma h^{(\omega )}}\mathfrak{%
e}_{z}\rangle _{\mathfrak{h}}\mathfrak{e}_{x+e_{k}} \\
&&-i\langle \mathfrak{e}_{x},\Delta _{\omega ,\vartheta }\mathfrak{e}%
_{x+e_{k}}\rangle _{\mathfrak{h}}\langle \mathfrak{e}_{y+e_{q}},\Delta
_{\omega ,\vartheta }\mathfrak{e}_{y}\rangle _{\mathfrak{h}}\langle 
\mathfrak{e}_{x+e_{k}},\mathrm{e}^{-i\gamma h^{(\omega )}}\mathfrak{e}%
_{y+e_{q}}\rangle _{\mathfrak{h}}\langle \mathfrak{e}_{y},\mathrm{e}%
^{i\gamma h^{(\omega )}}\mathfrak{e}_{z}\rangle _{\mathfrak{h}}\mathfrak{e}%
_{x} \\
&&+i\langle \mathfrak{e}_{x},\Delta _{\omega ,\vartheta }\mathfrak{e}%
_{x+e_{k}}\rangle _{\mathfrak{h}}\langle \mathfrak{e}_{y},\Delta _{\omega
,\vartheta }\mathfrak{e}_{y+e_{q}}\rangle _{\mathfrak{h}}\langle \mathfrak{e}%
_{x+e_{k}},\mathrm{e}^{-i\gamma h^{(\omega )}}\mathfrak{e}_{y}\rangle _{%
\mathfrak{h}}\langle \mathfrak{e}_{y+e_{q}},\mathrm{e}^{i\gamma h^{(\omega
)}}\mathfrak{e}_{z}\rangle _{\mathfrak{h}}\mathfrak{e}_{x} \\
&&-i\delta _{x,z}\langle \mathfrak{e}_{y+e_{q}},\Delta _{\omega ,\vartheta }%
\mathfrak{e}_{y}\rangle _{\mathfrak{h}}\langle \mathfrak{e}_{x+e_{k}},\Delta
_{\omega ,\vartheta }\mathfrak{e}_{x}\rangle _{\mathfrak{h}}\langle 
\mathfrak{e}_{y},\mathrm{e}^{i\gamma h^{(\omega )}}\mathfrak{e}%
_{x+e_{k}}\rangle _{\mathfrak{h}}\mathrm{e}^{-i\gamma h^{(\omega )}}%
\mathfrak{e}_{y+e_{q}} \\
&&+i\delta _{x+e_{k},z}\langle \mathfrak{e}_{y+e_{q}},\Delta _{\omega
,\vartheta }\mathfrak{e}_{y}\rangle _{\mathfrak{h}}\langle \mathfrak{e}%
_{x},\Delta _{\omega ,\vartheta }\mathfrak{e}_{x+e_{k}}\rangle _{\mathfrak{h}%
}\langle \mathfrak{e}_{y},\mathrm{e}^{i\gamma h^{(\omega )}}\mathfrak{e}%
_{x}\rangle _{\mathfrak{h}}\mathrm{e}^{-i\gamma h^{(\omega )}}\mathfrak{e}%
_{y+e_{q}} \\
&&+i\delta _{x,z}\langle \mathfrak{e}_{y},\Delta _{\omega ,\vartheta }%
\mathfrak{e}_{y+e_{q}}\rangle _{\mathfrak{h}}\langle \mathfrak{e}%
_{x+e_{k}},\Delta _{\omega ,\vartheta }\mathfrak{e}_{x}\rangle _{\mathfrak{h}%
}\langle \mathfrak{e}_{y+e_{q}},\mathrm{e}^{i\gamma h^{(\omega )}}\mathfrak{e%
}_{x+e_{k}}\rangle _{\mathfrak{h}}\mathrm{e}^{-i\gamma h^{(\omega )}}%
\mathfrak{e}_{y} \\
&&\left. -i\delta _{x+e_{k},z}\langle \mathfrak{e}_{y},\Delta _{\omega
,\vartheta }\mathfrak{e}_{y+e_{q}}\rangle _{\mathfrak{h}}\langle \mathfrak{e}%
_{x},\Delta _{\omega ,\vartheta }\mathfrak{e}_{x+e_{k}}\rangle _{\mathfrak{h}%
}\langle \mathfrak{e}_{y+e_{q}},\mathrm{e}^{i\gamma h^{(\omega )}}\mathfrak{e%
}_{x}\rangle _{\mathfrak{h}}\mathrm{e}^{-i\gamma h^{(\omega )}}\mathfrak{e}%
_{y}\right\} .
\end{eqnarray*}%
By using (\ref{equation sup}) and (\ref{Combes-ThomasCombes-Thomas})-(\ref%
{Combes-ThomasCombes-Thomasbis}) together with%
\begin{equation*}
\sum_{z\in \mathbb{Z}^{d}}\mathrm{e}^{-2\mu _{\eta }\left(
|x-z|+|y-z|\right) }\leq \mathrm{e}^{-\mu _{\eta }|x-y|}\sum_{z\in \mathbb{Z}%
^{d}}\mathrm{e}^{-\mu _{\eta }\left( |x-z|+|y-z|\right) }\leq \mathrm{e}%
^{-\mu _{\eta }|x-y|}\sum_{z\in \mathbb{Z}^{d}}\mathrm{e}^{-2\mu _{\eta
}|z|}.
\end{equation*}%
(which are simple consequences of Cauchy-Schwarz and triangle inequalities),
all the above summands are absolutely summable, uniformly with respect to $%
L\in \mathbb{R}^{+}$, $\omega \in \Omega $, $\lambda \in \mathbb{R}_{0}^{+}$
and $\vartheta $,$\gamma $ in compact subsets of $\mathbb{R}_{0}^{+}$ and $%
\mathbb{R}$, respectively. For instance, for any (characteristic) functions $%
f,g:\mathbb{Z}^{d}\rightarrow \left\{ 0,1\right\} $, one estimates that%
\begin{multline*}
\underset{x,y\in \mathbb{Z}^{d}}{\sum }f\left( x\right) ^{2}g\left( y\right)
^{2}\left\vert \langle \mathfrak{e}_{x+e_{k}},\Delta _{\omega ,\vartheta }%
\mathfrak{e}_{x}\rangle _{\mathfrak{h}}\langle \mathfrak{e}_{y+e_{q}},\Delta
_{\omega ,\vartheta }\mathfrak{e}_{y}\rangle _{\mathfrak{h}}\langle 
\mathfrak{e}_{x},\mathrm{e}^{-i\gamma h^{(\omega )}}\mathfrak{e}%
_{y+e_{q}}\rangle _{\mathfrak{h}}\langle \mathfrak{e}_{y},\mathrm{e}%
^{i\gamma h^{(\omega )}}\mathfrak{e}_{z}\rangle _{\mathfrak{h}}\right\vert
\left\Vert \mathfrak{e}_{x+e_{k}}\right\Vert _{\mathfrak{h}} \\
\leq 36^{2}\left( 1+\vartheta \right) ^{2}\mathrm{e}^{2\left\vert \gamma
\eta \right\vert }\underset{x,y\in \mathbb{Z}^{d}}{\sum }f\left( x\right)
^{2}g\left( y\right) ^{2}\mathrm{e}^{-2\mu _{\eta }\left(
|x-e_{q}-y|+|z-y|\right) } \\
\leq 36^{2}\left( 1+\vartheta \right) ^{2}\mathrm{e}^{2\left\vert \gamma
\eta \right\vert }\left( \sum_{u\in \mathbb{Z}^{d}}g\left( u+z\right) ^{2}%
\mathrm{e}^{-2\mu _{\eta }|u|}\right) ^{1/2} \\
\times \underset{x\in \mathbb{Z}^{d}}{\sum }f\left( x\right) ^{2}\mathrm{e}%
^{-\mu _{\eta }|x-e_{q}-z|}\left( \sum_{y\in \mathbb{Z}^{d}}g\left(
y+x-e_{q}\right) ^{2}\mathrm{e}^{-2\mu _{\eta }|y|}\right) ^{1/2}<\infty .
\end{multline*}%
(Recall that $\mu _{\eta }>0$, by (\ref{Combes-ThomasCombes-Thomasbis}).) In
fact, by the same arguments combined with%
\begin{equation*}
\left\Vert C\right\Vert _{\mathcal{B}(\mathfrak{h})}\leq \sup_{x\in \mathbb{Z%
}^{d}}\sum_{z\in \mathbb{Z}^{d}}\left\vert \left\langle \mathfrak{e}_{x},C%
\mathfrak{e}_{z}\right\rangle _{\mathfrak{h}}\right\vert ,\qquad C\in 
\mathcal{B}(\mathfrak{h}),
\end{equation*}%
(see \cite[Lemma 4.1]{LDP}), the absolutely summable sum 
\begin{equation}
\mathrm{e}^{-i\gamma h^{(\omega )}}\mathfrak{e}_{w}=\sum_{u\in \mathbb{Z}%
^{d}}\mathfrak{e}_{u}\langle \mathfrak{e}_{u},\mathrm{e}^{-i\gamma
h^{(\omega )}}\mathfrak{e}_{w}\rangle _{\mathfrak{h}},\qquad w\in \mathbb{Z}%
^{d},  \label{bn}
\end{equation}%
(see (\ref{Combes-ThomasCombes-Thomas})-(\ref{Combes-ThomasCombes-Thomasbis}%
)) and Lebesgue's dominated convergence theorem, in the limit $L\rightarrow
\infty $ and for any $z\in \Lambda _{L/2}$, there is an operator $\mathbf{R}%
_{\gamma ,q,k}^{(L,\omega )}\in \mathcal{B}\left( \mathfrak{h}\right) $ with
vanishing operator norm as $L\rightarrow \infty $, uniformly with respect to 
$\omega \in \Omega $, $\lambda \in \mathbb{R}_{0}^{+}$ and $\vartheta $,$%
\gamma $ in compact subsets of $\mathbb{R}_{0}^{+}$ and $\mathbb{R}$,
respectively, such that 
\begin{equation*}
\mathbf{N}_{\gamma ,q,k}^{(L,\omega )}\mathfrak{e}_{z}=\left( \mathbf{N}%
_{\gamma ,q,k}^{(\infty ,\omega )}+\mathbf{R}_{\gamma ,q,k}^{(L,\omega
)}\right) \mathfrak{e}_{z},
\end{equation*}%
where 
\begin{eqnarray*}
\mathbf{N}_{\gamma ,q,k}^{(\infty ,\omega )}\mathfrak{e}_{z} &\doteq &%
\underset{x,y\in \mathbb{Z}^{d}}{\sum }\left\{ i\langle \mathfrak{e}%
_{x+e_{k}},\Delta _{\omega ,\vartheta }\mathfrak{e}_{x}\rangle _{\mathfrak{h}%
}\langle \mathfrak{e}_{y+e_{q}},\Delta _{\omega ,\vartheta }\mathfrak{e}%
_{y}\rangle _{\mathfrak{h}}\langle \mathfrak{e}_{x},\mathrm{e}^{-i\gamma
h^{(\omega )}}\mathfrak{e}_{y+e_{q}}\rangle _{\mathfrak{h}}\langle \mathfrak{%
e}_{y},\mathrm{e}^{i\gamma h^{(\omega )}}\mathfrak{e}_{z}\rangle _{\mathfrak{%
h}}\mathfrak{e}_{x+e_{k}}\right. \\
&&-i\langle \mathfrak{e}_{x+e_{k}},\Delta _{\omega ,\vartheta }\mathfrak{e}%
_{x}\rangle _{\mathfrak{h}}\langle \mathfrak{e}_{y},\Delta _{\omega
,\vartheta }\mathfrak{e}_{y+e_{q}}\rangle _{\mathfrak{h}}\langle \mathfrak{e}%
_{x},\mathrm{e}^{-i\gamma h^{(\omega )}}\mathfrak{e}_{y}\rangle _{\mathfrak{h%
}}\langle \mathfrak{e}_{y+e_{q}},\mathrm{e}^{i\gamma h^{(\omega )}}\mathfrak{%
e}_{z}\rangle _{\mathfrak{h}}\mathfrak{e}_{x+e_{k}} \\
&&-i\langle \mathfrak{e}_{x},\Delta _{\omega ,\vartheta }\mathfrak{e}%
_{x+e_{k}}\rangle _{\mathfrak{h}}\langle \mathfrak{e}_{y+e_{q}},\Delta
_{\omega ,\vartheta }\mathfrak{e}_{y}\rangle _{\mathfrak{h}}\langle 
\mathfrak{e}_{x+e_{k}},\mathrm{e}^{-i\gamma h^{(\omega )}}\mathfrak{e}%
_{y+e_{q}}\rangle _{\mathfrak{h}}\langle \mathfrak{e}_{y},\mathrm{e}%
^{i\gamma h^{(\omega )}}\mathfrak{e}_{z}\rangle _{\mathfrak{h}}\mathfrak{e}%
_{x} \\
&&+i\langle \mathfrak{e}_{x},\Delta _{\omega ,\vartheta }\mathfrak{e}%
_{x+e_{k}}\rangle _{\mathfrak{h}}\langle \mathfrak{e}_{y},\Delta _{\omega
,\vartheta }\mathfrak{e}_{y+e_{q}}\rangle _{\mathfrak{h}}\langle \mathfrak{e}%
_{x+e_{k}},\mathrm{e}^{-i\gamma h^{(\omega )}}\mathfrak{e}_{y}\rangle _{%
\mathfrak{h}}\langle \mathfrak{e}_{y+e_{q}},\mathrm{e}^{i\gamma h^{(\omega
)}}\mathfrak{e}_{z}\rangle _{\mathfrak{h}}\mathfrak{e}_{x} \\
&&-i\delta _{x,z}\langle \mathfrak{e}_{y+e_{q}},\Delta _{\omega ,\vartheta }%
\mathfrak{e}_{y}\rangle _{\mathfrak{h}}\langle \mathfrak{e}_{x+e_{k}},\Delta
_{\omega ,\vartheta }\mathfrak{e}_{x}\rangle _{\mathfrak{h}}\langle 
\mathfrak{e}_{y},\mathrm{e}^{i\gamma h^{(\omega )}}\mathfrak{e}%
_{x+e_{k}}\rangle _{\mathfrak{h}}\mathrm{e}^{-i\gamma h^{(\omega )}}%
\mathfrak{e}_{y+e_{q}} \\
&&+i\delta _{x+e_{k},z}\langle \mathfrak{e}_{y+e_{q}},\Delta _{\omega
,\vartheta }\mathfrak{e}_{y}\rangle _{\mathfrak{h}}\langle \mathfrak{e}%
_{x},\Delta _{\omega ,\vartheta }\mathfrak{e}_{x+e_{k}}\rangle _{\mathfrak{h}%
}\langle \mathfrak{e}_{y},\mathrm{e}^{i\gamma h^{(\omega )}}\mathfrak{e}%
_{x}\rangle _{\mathfrak{h}}\mathrm{e}^{-i\gamma h^{(\omega )}}\mathfrak{e}%
_{y+e_{q}} \\
&&+i\delta _{x,z}\langle \mathfrak{e}_{y},\Delta _{\omega ,\vartheta }%
\mathfrak{e}_{y+e_{q}}\rangle _{\mathfrak{h}}\langle \mathfrak{e}%
_{x+e_{k}},\Delta _{\omega ,\vartheta }\mathfrak{e}_{x}\rangle _{\mathfrak{h}%
}\langle \mathfrak{e}_{y+e_{q}},\mathrm{e}^{i\gamma h^{(\omega )}}\mathfrak{e%
}_{x+e_{k}}\rangle _{\mathfrak{h}}\mathrm{e}^{-i\gamma h^{(\omega )}}%
\mathfrak{e}_{y} \\
&&\left. -i\delta _{x+e_{k},z}\langle \mathfrak{e}_{y},\Delta _{\omega
,\vartheta }\mathfrak{e}_{y+e_{q}}\rangle _{\mathfrak{h}}\langle \mathfrak{e}%
_{x},\Delta _{\omega ,\vartheta }\mathfrak{e}_{x+e_{k}}\rangle _{\mathfrak{h}%
}\langle \mathfrak{e}_{y+e_{q}},\mathrm{e}^{i\gamma h^{(\omega )}}\mathfrak{e%
}_{x}\rangle _{\mathfrak{h}}\mathrm{e}^{-i\gamma h^{(\omega )}}\mathfrak{e}%
_{y}\right\} .
\end{eqnarray*}%
It suffices now to use again (\ref{equation sup}) and (\ref{bn}) together
with elementary manipulations in each sum of $\mathbf{N}_{\gamma
,q,k}^{(\infty ,\omega )}$ in order to arrive at the second assertion.
\end{proof}

We are now in a position to show (\ref{blabla_2blabla_2}), at least for $%
\left\vert \gamma \right\vert ,\vartheta \ll 1$, as a consequence of the
next two lemmata:\ 

\begin{lemma}[Asymptotics for $\protect\vartheta \ll 1$]
\label{les termes en eta et eta^2 sont nuls copy(4)}\mbox{}\newline
\label{long_lemme copy(4)}For all $k,q\in \{1,\ldots ,d\}$, $\vartheta
,\lambda \in \mathbb{R}_{0}^{+}$, $\omega \in \Omega $, $\gamma \in \mathbb{R%
}$ and $z\in \mathbb{Z}^{d}$,%
\begin{equation*}
\sum_{y\in \mathbb{Z}^{d}}\zeta _{z,y,z}=2\Im \mathrm{m}\left\langle \left(
s_{e_{k}}-s_{-e_{k}}\right) \mathfrak{e}_{z},\mathrm{e}^{-i\gamma h^{(\omega
)}}\left( s_{e_{q}}-s_{-e_{q}}\right) \mathrm{e}^{i\gamma h^{(\omega )}}%
\mathfrak{e}_{z}\right\rangle _{\mathfrak{h}}+\mathcal{O}\left( \vartheta
\right) ,\qquad \text{as }\vartheta \rightarrow 0,
\end{equation*}%
uniformly with respect to $\omega \in \Omega $, $\lambda \in \mathbb{R}%
_{0}^{+}$ and $\gamma $ in compact subsets of $\mathbb{R}$. Note that $%
\vartheta $ is not necessarily $0$ in definition of $h^{(\omega )}$.
\end{lemma}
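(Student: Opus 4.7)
The plan is to substitute $x=z$ into the eight-term definition of $\zeta_{x,y,z}$ given in Lemma~\ref{long_lemme copy(3)} and analyze the sum $\sum_{y\in\mathbb{Z}^d}\zeta_{z,y,z}$ by separating out the $\vartheta$-independent part of each prefactor. Concretely, each factor $(1+\vartheta\omega_2(\cdot))$ (or its complex conjugate) is written as $1$ plus a remainder bounded in absolute value by $\vartheta$, uniformly in $\omega\in\Omega$. The leading term -- obtained by replacing every such factor by $1$ -- should reproduce the claimed inner product, while all remaining terms combine into an $\mathcal{O}(\vartheta)$ correction that is bounded uniformly in $\omega$, $\lambda$, and $\gamma$ (in compact sets). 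Crucially, the operator $h^{(\omega)}$ inside the propagators is \emph{not} expanded in $\vartheta$; only the explicit $(1+\vartheta\omega_2)$ prefactors coming from $\Delta_{\omega,\vartheta}$ (see (\ref{equation sup})) are.

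For the leading-order identification, I would perform the summation over $y\in\mathbb{Z}^d$ term by term, using the completeness relation $\sum_{y}|\mathfrak{e}_y\rangle\langle\mathfrak{e}_y|=\mathbf{1}_{\mathfrak{h}}$ together with the elementary identity $\mathfrak{e}_{y+e_q}=s_{-e_q}\mathfrak{e}_y$ (which follows from (\ref{shift})). This collapses each summand of the form $\sum_y\langle\mathfrak{e}_a,\mathrm{e}^{-i\gamma h^{(\omega)}}\mathfrak{e}_{y+\epsilon_1}\rangle\langle\mathfrak{e}_{y+\epsilon_2},\mathrm{e}^{i\gamma h^{(\omega)}}\mathfrak{e}_b\rangle$ into a single matrix element $\langle\mathfrak{e}_a,\mathrm{e}^{-i\gamma h^{(\omega)}}s_{-\epsilon_1}s_{\epsilon_2}\mathrm{e}^{i\gamma h^{(\omega)}}\mathfrak{e}_b\rangle$. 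Introducing the anti-self-adjoint operators $Y\doteq s_{e_k}-s_{-e_k}$ and $X\doteq\mathrm{e}^{-i\gamma h^{(\omega)}}(s_{e_q}-s_{-e_q})\mathrm{e}^{i\gamma h^{(\omega)}}$ (both anti-self-adjoint because $s_{e_j}^\ast=s_{-e_j}$), the first four summands of $\zeta_{z,y,z}|_{\vartheta\to 0\text{ in prefactors}}$ regroup into $-i\langle Y\mathfrak{e}_z,X\mathfrak{e}_z\rangle$, while the last four regroup into $-i\langle\mathfrak{e}_z,XY\mathfrak{e}_z\rangle=+i\overline{\langle Y\mathfrak{e}_z,X\mathfrak{e}_z\rangle}$, using $X^\ast=-X$. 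Their sum is exactly $2\Im\mathrm{m}\langle Y\mathfrak{e}_z,X\mathfrak{e}_z\rangle$, matching the right-hand side of the asserted asymptotics.

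For the error term, I would invoke the Combes--Thomas estimate (\ref{Combes-ThomasCombes-Thomas})--(\ref{Combes-ThomasCombes-Thomasbis}) to bound each propagator $|\langle\mathfrak{e}_a,\mathrm{e}^{\pm i\gamma h^{(\omega)}}\mathfrak{e}_b\rangle|$ by $36\mathrm{e}^{|\gamma\eta|-2\mu_\eta|a-b|}$, uniformly in $\omega\in\Omega$, $\lambda\in\mathbb{R}_0^+$, and $\vartheta$ in a compact set. The difference between any product $\prod_j(1+\vartheta\omega_2^{(j)})$ and $1$ is pointwise bounded by $C\vartheta$ with $C$ depending only on the number of factors; multiplying by four bounded propagators and summing over $y$ produces, via the same Cauchy--Schwarz/triangle scheme already used in the proof of Lemma~\ref{long_lemme copy(3)}, an absolutely convergent double sum of the form $\sum_y\mathrm{e}^{-2\mu_\eta(|z'-y|+|z''-y|)}\le\mathrm{e}^{-\mu_\eta|z'-z''|}\sum_u\mathrm{e}^{-2\mu_\eta|u|}<\infty$. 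Hence the correction to the leading term is $\mathcal{O}(\vartheta)$, with the implicit constant uniform in $\omega$, $\lambda$, and in $\gamma$ over compact subsets of $\mathbb{R}$.

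The main obstacle is bookkeeping: verifying that the eight terms really do combine into the neat anti-self-adjoint structure giving $2\Im\mathrm{m}\langle Y\mathfrak{e}_z,X\mathfrak{e}_z\rangle$, rather than some other expression with the correct magnitude but wrong sign or wrong commutation pattern. In particular, one must be careful with the convention $s_x\mathfrak{e}_a=\mathfrak{e}_{a-x}$ (so that $\mathfrak{e}_{z+e_k}-\mathfrak{e}_{z-e_k}=-(s_{e_k}-s_{-e_k})\mathfrak{e}_z$) and with the fact that the four $\langle\mathfrak{e}_{y+\cdot},\mathrm{e}^{i\gamma h^{(\omega)}}\mathfrak{e}_z\rangle$-type factors in terms~5--8 lie on the \emph{right} of the propagator, which is what produces the complex-conjugate contribution needed to form an imaginary part. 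Once this combinatorial check is carried out, the rest of the argument is a direct application of the uniform Combes--Thomas bounds already deployed elsewhere in the paper.
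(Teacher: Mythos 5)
Your proposal is correct and follows essentially the same route as the paper's (much terser) proof: evaluate $\zeta_{z,y,z}$ with the $(1+\vartheta\omega_2)$ prefactors replaced by $1$ while keeping $\vartheta$ inside $h^{(\omega)}$, collapse the $y$-sum via the shift operators to obtain exactly $2\Im\mathrm{m}\langle(s_{e_k}-s_{-e_k})\mathfrak{e}_z,\mathrm{e}^{-i\gamma h^{(\omega)}}(s_{e_q}-s_{-e_q})\mathrm{e}^{i\gamma h^{(\omega)}}\mathfrak{e}_z\rangle_{\mathfrak{h}}$, and control the prefactor corrections by the uniform Combes--Thomas bounds, giving $\mathcal{O}(\vartheta)$ uniformly in $\omega$, $\lambda$ and $\gamma$ on compacts. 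Your sign bookkeeping (the anti-self-adjointness of $s_{e_j}-s_{-e_j}$ and the conjugate contribution from the last four terms) checks out against the paper's computation.
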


\begin{proof}
By Lemma \ref{long_lemme copy(3)} at $\vartheta =0$, one directly computes
that, for any $k,q\in \{1,\ldots ,d\}$, $\lambda \in \mathbb{R}_{0}^{+}$, $%
\omega \in \Omega $, $\gamma \in \mathbb{R}$, $z\in \mathbb{Z}^{d}$ and $%
\vartheta =0$, 
\begin{equation*}
\sum_{y\in \mathbb{Z}^{d}}\zeta _{z,y,z}=\underset{y\in \mathbb{Z}^{d}}{\sum 
}2\Im \mathrm{m}\langle \mathfrak{e}_{z+e_{k}}-\mathfrak{e}_{z-e_{k}},%
\mathrm{e}^{-i\gamma h^{(\omega )}}\left( \mathfrak{e}_{y+e_{q}}-\mathfrak{e}%
_{y-e_{q}}\right) \rangle _{\mathfrak{h}}\langle \mathfrak{e}_{y},\mathrm{e}%
^{i\gamma h^{(\omega )}}\mathfrak{e}_{z}\rangle _{\mathfrak{h}}.
\end{equation*}%
If $\vartheta \neq 0$ then one performs the same kind of computation in
order to (trivially) deduce the assertion, by (\ref{shift}), Lemma \ref%
{long_lemme copy(3)} and (\ref{Combes-ThomasCombes-Thomas})-(\ref%
{Combes-ThomasCombes-Thomasbis}).
\end{proof}

\begin{lemma}[Asymptotics for $\left\vert \protect\gamma \right\vert \ll 1$]

\label{long_lemme copy(5)}\mbox{}\newline
For all $k,q\in \{1,\ldots ,d\}$, $\vartheta ,\lambda \in \mathbb{R}_{0}^{+}$%
, $\omega \in \Omega $, $\gamma \in \mathbb{R}$ and $z\in \mathbb{Z}^{d}$, 
\begin{multline*}
2\Im \mathrm{m}\left\langle \left( s_{e_{k}}-s_{-e_{k}}\right) \mathfrak{e}%
_{z},\mathrm{e}^{-i\gamma h^{(\omega )}}\left( s_{e_{q}}-s_{-e_{q}}\right) 
\mathrm{e}^{i\gamma h^{(\omega )}}\mathfrak{e}_{z}\right\rangle _{\mathfrak{h%
}} \\
=2\gamma \lambda \delta _{k,q}\left\{ 2\omega _{1}\left( z\right) -\omega
_{1}\left( z+e_{k}\right) -\omega _{1}\left( z-e_{k}\right) \right\} +%
\mathcal{O}\left( \gamma ^{2}\right) ,
\end{multline*}%
as $\left\vert \gamma \right\vert \rightarrow 0$, uniformly with respect to $%
\omega \in \Omega $ and $\vartheta ,\lambda $ in compact subsets of $\mathbb{%
R}_{0}^{+}$.
\end{lemma}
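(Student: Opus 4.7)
Write $A\doteq s_{e_k}-s_{-e_k}$ and $B\doteq s_{e_q}-s_{-e_q}$, both skew-adjoint (since $s_x^{\ast}=s_{-x}$). The strategy is to Taylor-expand the operator $\mathrm{e}^{-i\gamma h^{(\omega)}}B\,\mathrm{e}^{i\gamma h^{(\omega)}}$ in $\gamma$ around $0$. Using the identity $\mathrm{e}^{-i\gamma h}B\mathrm{e}^{i\gamma h}-B+i\gamma[h^{(\omega)},B]=-\int_{0}^{\gamma}\!\!\int_{0}^{s}\mathrm{e}^{-iu h^{(\omega)}}[h^{(\omega)},[h^{(\omega)},B]]\mathrm{e}^{iu h^{(\omega)}}\mathrm{d}u\,\mathrm{d}s$, the remainder in $\mathcal{B}(\mathfrak{h})$ is bounded by $2\gamma^{2}\|h^{(\omega)}\|_{\mathcal{B}(\mathfrak{h})}^{2}\|B\|_{\mathcal{B}(\mathfrak{h})}$. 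Since $\|B\|_{\mathcal{B}(\mathfrak{h})}\leq 2$ and $\|h^{(\omega)}\|_{\mathcal{B}(\mathfrak{h})}\leq 2d(2+\vartheta)+\lambda$ by (\ref{borne trivial}), this remainder is $\mathcal{O}(\gamma^{2})$ uniformly in $\omega\in\Omega$ and $\vartheta,\lambda$ in compact subsets.

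Plugging this expansion into the left-hand side yields $2\Im\mathrm{m}\langle A\mathfrak{e}_{z},B\mathfrak{e}_{z}\rangle_{\mathfrak{h}}-2\gamma\,\Re\mathrm{e}\langle A\mathfrak{e}_{z},[h^{(\omega)},B]\mathfrak{e}_{z}\rangle_{\mathfrak{h}}+\mathcal{O}(\gamma^{2})$, where I used $\Im\mathrm{m}(iw)=\Re\mathrm{e}(w)$. The zeroth order vanishes because $A\mathfrak{e}_{z}=\mathfrak{e}_{z-e_{k}}-\mathfrak{e}_{z+e_{k}}$ and $B\mathfrak{e}_{z}=\mathfrak{e}_{z-e_{q}}-\mathfrak{e}_{z+e_{q}}$ have real coordinates in the canonical basis, so the inner product is real. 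For the first-order term, using $A^{\ast}=-A$, $B^{\ast}=-B$ (hence $[h^{(\omega)},B]^{\ast}=[h^{(\omega)},B]$) I rewrite
\begin{equation*}
-2\,\Re\mathrm{e}\langle A\mathfrak{e}_{z},[h^{(\omega)},B]\mathfrak{e}_{z}\rangle_{\mathfrak{h}}=\langle \mathfrak{e}_{z},[A,[h^{(\omega)},B]]\mathfrak{e}_{z}\rangle_{\mathfrak{h}}.
\end{equation*}
Splitting $h^{(\omega)}=\Delta_{\omega,\vartheta}+\lambda\omega_{1}$, it then suffices to compute the two double-commutator expectations $\langle\mathfrak{e}_{z},[A,[\omega_{1},B]]\mathfrak{e}_{z}\rangle_{\mathfrak{h}}$ and $\langle\mathfrak{e}_{z},[A,[\Delta_{\omega,\vartheta},B]]\mathfrak{e}_{z}\rangle_{\mathfrak{h}}$.

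The first of these is elementary: a direct calculation gives $[\omega_{1},B]\mathfrak{e}_{z}=(\omega_{1}(z-e_{q})-\omega_{1}(z))\mathfrak{e}_{z-e_{q}}+(\omega_{1}(z)-\omega_{1}(z+e_{q}))\mathfrak{e}_{z+e_{q}}$, and then computing $\langle\mathfrak{e}_{z},A[\omega_{1},B]\mathfrak{e}_{z}\rangle_{\mathfrak{h}}$ and $\langle\mathfrak{e}_{z},[\omega_{1},B]A\mathfrak{e}_{z}\rangle_{\mathfrak{h}}$ via the Kronecker delta structure gives each as $\pm\delta_{k,q}\{2\omega_{1}(z)-\omega_{1}(z-e_{k})-\omega_{1}(z+e_{k})\}$ (with the sign flipping between them), so the commutator yields $2\delta_{k,q}\{2\omega_{1}(z)-\omega_{1}(z-e_{k})-\omega_{1}(z+e_{k})\}$. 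For the $\Delta_{\omega,\vartheta}$ contribution, I expand $\Delta_{\omega,\vartheta}=2dI-\sum_{j}(M_{\alpha_{j}}s_{e_{j}}+M_{\bar{\alpha}_{j}(\cdot-e_{j})}s_{-e_{j}})$ with $\alpha_{j}(x)\doteq 1+\vartheta\omega_{2}(\{x,x+e_{j}\})$, so that $[\Delta_{\omega,\vartheta},B]$ is a sum of operators of the form $M_{\phi}s_{v}$ with shift directions $v\in\{\pm e_{j}\pm e_{q}:j\in\{1,\dots,d\}\}$. A short direct computation shows the diagonal matrix element $\langle\mathfrak{e}_{z},[A,M_{\phi}s_{v}]\mathfrak{e}_{z}\rangle_{\mathfrak{h}}$ is nonzero only when $v=\pm e_{k}$. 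The key combinatorial observation, which is the main step of the proof, is that \emph{no} $v\in\{\pm e_{j}\pm e_{q}\}$ equals $\pm e_{k}$: sums $\pm e_{j}\pm e_{q}$ with the same sign have entries summing to $\pm 2$, while with opposite sign they either vanish (if $j=q$) or have two nonzero entries of opposite sign (if $j\neq q$), none of which matches a single canonical basis vector. Hence $\langle\mathfrak{e}_{z},[A,[\Delta_{\omega,\vartheta},B]]\mathfrak{e}_{z}\rangle_{\mathfrak{h}}=0$ \emph{uniformly in $\vartheta$}, and combining with the $\omega_{1}$ contribution and the Taylor remainder yields the claimed asymptotic. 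The main obstacle is precisely this cancellation in Step 6, which has to hold for arbitrary $\vartheta\in\mathbb{R}_{0}^{+}$, not only in the translation-invariant case $\vartheta=0$.
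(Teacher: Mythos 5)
Your proposal is correct and follows essentially the same route as the paper: a first-order Taylor expansion in $\gamma$ with remainder controlled uniformly via the norm bound (\ref{borne trivial}), followed by a direct computation of the first-order coefficient, in which only the random potential $\lambda \omega _{1}$ contributes. Your organization of that coefficient as the diagonal matrix element of the double commutator $[A,[h^{(\omega )},B]]$, with the parity/shift argument showing the hopping part $\Delta _{\omega ,\vartheta }$ drops out for every $\vartheta $, is a clean way of carrying out the ``direct computations'' the paper leaves implicit.
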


\begin{proof}
By (\ref{borne trivial}), for any $\gamma \in \mathbb{R}$,%
\begin{equation*}
\mathrm{e}^{i\gamma h^{(\omega )}}=\mathbf{1}_{\mathfrak{h}}+\sum_{n\in 
\mathbb{N}}\frac{\left( i\gamma h^{(\omega )}\right) ^{n}}{n!}=\mathbf{1}_{%
\mathfrak{h}}+i\gamma h^{(\omega )}+\mathcal{O}\left( \gamma ^{2}\right)
,\qquad \text{as }\left\vert \gamma \right\vert \rightarrow 0,
\end{equation*}%
in the\ Banach space $\mathcal{B}\left( \mathfrak{h}\right) $, uniformly
with respect to $\omega \in \Omega $ and $\vartheta ,\lambda $ in compact
subsets of $\mathbb{R}_{0}^{+}$. The assertion then follows by direct
computations using (\ref{equation sup})-(\ref{eq:Ham_lap_pot}), (\ref{shift}%
) and the last equality.
\end{proof}

\begin{lemma}[Lower bounds on the Hilbert-Schmidt norm of $K_{\{\Lambda
_{L}\},\{\mathbb{Z}^{d}\}}^{(\protect\omega ,\mathcal{E})}$]
\label{long_lemme copy(6)}\label{les termes en eta et eta^2 sont nuls
copy(3)}\mbox{}\newline
Take $\vartheta ,\lambda ,T\in \mathbb{R}_{0}^{+}$, $T\in \mathbb{R}^{+}$, $%
\mathcal{E}\in C_{0}^{0}(\mathbb{R};\mathbb{R}^{d})$ with support in $[-T,0]$
and $\vec{w}\doteq (w_{1},\ldots ,w_{d})\in {\mathbb{R}}^{d}$ with $%
\left\Vert \vec{w}\right\Vert _{\mathbb{R}^{d}}=1$. If $T,\vartheta $ are
sufficiently small then 
\begin{eqnarray*}
&&\lim_{L\rightarrow \infty }\frac{1}{\left\vert \Lambda _{L}\right\vert }%
\mathbb{E}\left[ \mathrm{Tr}_{\mathfrak{h}}\left( \left( K_{\{\Lambda
_{L}\},\{\mathbb{Z}^{d}\}}^{(\cdot ,\mathcal{E})}\right) ^{\ast
}K_{\{\Lambda _{L}\},\{\mathbb{Z}^{d}\}}^{(\cdot ,\mathcal{E})}\right) %
\right] \\
&\geq &\frac{\lambda ^{2}}{2}\mathrm{Var}\left[ \int_{-\infty
}^{0}\left\langle w^{(\cdot )},\mathcal{E}\left( \alpha \right)
\right\rangle _{\mathbb{R}^{d}}\alpha ^{2}\mathrm{d}\alpha \right] +\mathcal{%
O}\left( \vartheta ^{2}\right) +\mathcal{O}\left( T^{4}\right) ,
\end{eqnarray*}%
uniformly with respect to $\lambda $ in compact subsets of $\mathbb{R}%
_{0}^{+}$, where $w^{(\cdot )}\doteq (w_{1}^{(\cdot )},\ldots ,w_{d}^{(\cdot
)})\in \mathbb{R}^{d}$ is the random vector defined by 
\begin{equation}
w_{k}^{(\omega )}\doteq \left( 2\omega _{1}\left( 0\right) -\omega
_{1}\left( e_{k}\right) -\omega _{1}\left( -e_{k}\right) \right)
w_{k},\qquad k\in \{1,\ldots ,d\},\text{ }\omega \in \Omega .
\label{inequalty fluctuation 0}
\end{equation}
\end{lemma}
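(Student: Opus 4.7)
The plan is to lower-bound the Hilbert-Schmidt norm of $K^{(\omega,\mathcal{E})}\equiv K_{\{\Lambda_L\},\{\mathbb{Z}^d\}}^{(\omega,\mathcal{E})}$ by its squared diagonal matrix elements and then evaluate them using the asymptotic expansions supplied by Lemmas~\ref{long_lemme copy(4)} and \ref{long_lemme copy(5)}. Since $K^{(\omega,\mathcal{E})}$ is self-adjoint with finite-dimensional range, Parseval gives
\begin{equation*}
\mathrm{Tr}_{\mathfrak{h}}\!\bigl((K^{(\omega,\mathcal{E})})^{\ast}K^{(\omega,\mathcal{E})}\bigr)=\sum_{z\in\mathbb{Z}^{d}}\bigl\|K^{(\omega,\mathcal{E})}\mathfrak{e}_{z}\bigr\|_{\mathfrak{h}}^{2}\;\ge\;\sum_{z\in\Lambda_{L/2}}\bigl|\langle\mathfrak{e}_{z},K^{(\omega,\mathcal{E})}\mathfrak{e}_{z}\rangle_{\mathfrak{h}}\bigr|^{2},
\end{equation*}
so it suffices to control the diagonal for $z$ well inside $\Lambda_{L}$, where Lemma~\ref{long_lemme copy(3)} applies. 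Using the decomposition (\ref{definition Kbisdefinition Kbis}) and the fact that Lemma~\ref{long_lemme copy(3)} gives $\mathbf{M}_{k}^{(L,\omega)}\mathfrak{e}_{z}\in\mathrm{span}\{\mathfrak{e}_{z-e_{k}},\mathfrak{e}_{z+e_{k}}\}\perp\mathfrak{e}_{z}$, the \emph{diamagnetic} contribution to the diagonal vanishes identically, and only the paramagnetic commutator term remains, with $\langle\mathfrak{e}_{z},\mathbf{N}_{\gamma,q,k}^{(L,\omega)}\mathfrak{e}_{z}\rangle_{\mathfrak{h}}=\sum_{y\in\mathbb{Z}^{d}}\zeta_{z,y,z}+\langle\mathfrak{e}_{z},\mathbf{R}_{\gamma,q,k}^{(L,\omega)}\mathfrak{e}_{z}\rangle_{\mathfrak{h}}$, the last term being uniformly $o_{L}(1)$ in $z$ and $\omega$.

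Next, I would apply Lemma~\ref{long_lemme copy(4)} (small $\vartheta$) followed by Lemma~\ref{long_lemme copy(5)} (small $\gamma$) to obtain
\begin{equation*}
\sum_{y\in\mathbb{Z}^{d}}\zeta_{z,y,z}=2\gamma\lambda\delta_{k,q}\bigl[\,2\omega_{1}(z)-\omega_{1}(z+e_{k})-\omega_{1}(z-e_{k})\,\bigr]+\mathcal{O}(\gamma^{2})+\mathcal{O}(\vartheta),
\end{equation*}
uniformly in $z$, $\omega$, and $\lambda$ in compact sets. Inserting this back into (\ref{definition Kbisdefinition Kbis}), using $\int_{0}^{-\alpha}2\gamma\,\mathrm{d}\gamma=\alpha^{2}$, and exploiting the Kronecker $\delta_{k,q}$ to collapse the double $(k,q)$-sum, I arrive at
\begin{equation*}
\langle\mathfrak{e}_{z},K^{(\omega,\mathcal{E})}\mathfrak{e}_{z}\rangle_{\mathfrak{h}}=\lambda\int_{-T}^{0}\langle w^{(\tau_{z}\omega)},\mathcal{E}(\alpha)\rangle_{\mathbb{R}^{d}}\alpha^{2}\,\mathrm{d}\alpha+\mathbf{E}_{z}(L,\vartheta,T),
\end{equation*}
where $\tau_{z}$ is the measure-preserving spatial shift with $(\tau_{z}\omega)_{1}(\cdot)=\omega_{1}(\cdot+z)$, and $\mathbf{E}_{z}(L,\vartheta,T)=\mathcal{O}(T^{4})+\mathcal{O}(\vartheta T^{2})+o_{L}(1)$ uniformly in $z\in\Lambda_{L/2}$ and $\omega\in\Omega$.

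Finally, I would square, take the expectation $\mathbb{E}$, and exploit the stationarity of $\mathfrak{a}_{\Omega}$: since $\omega_{1}(z)$ has $z$-independent distribution, $\mathbb{E}[w^{(\omega)}]=0$, and therefore
\begin{equation*}
\mathbb{E}\!\left[\Bigl|\int_{-T}^{0}\langle w^{(\tau_{z}\omega)},\mathcal{E}(\alpha)\rangle_{\mathbb{R}^{d}}\alpha^{2}\,\mathrm{d}\alpha\Bigr|^{2}\right]=\mathrm{Var}\!\left[\int_{-\infty}^{0}\langle w^{(\cdot)},\mathcal{E}(\alpha)\rangle_{\mathbb{R}^{d}}\alpha^{2}\,\mathrm{d}\alpha\right]
\end{equation*}
is independent of $z$. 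Cross-terms between the main piece and $\mathbf{E}_{z}$ are absorbed via Cauchy-Schwarz into the total remainder. Summing over $z\in\Lambda_{L/2}$, dividing by $|\Lambda_{L}|$, and passing to $L\to\infty$ then yields the claimed lower bound, with the $\mathcal{O}(\vartheta^{2})+\mathcal{O}(T^{4})$ error coming from the integrated remainders of Step 3.

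The hard part will be the bookkeeping of all the error terms. Specifically, one must verify that (i) the operator-norm remainder $\mathbf{R}_{\gamma,q,k}^{(L,\omega)}$ from Lemma~\ref{long_lemme copy(3)}, once integrated against $\mathcal{E}$, squared, summed over $z\in\Lambda_{L/2}$, and divided by $|\Lambda_{L}|$, still produces a vanishing contribution as $L\to\infty$ uniformly in $\omega$; and (ii) the combined asymptotic errors from Lemmas~\ref{long_lemme copy(4)}--\ref{long_lemme copy(5)}, after the double integration in $(\gamma,\alpha)$, the squaring, and the expectation, collapse to the clean form $\mathcal{O}(\vartheta^{2})+\mathcal{O}(T^{4})$ announced in the statement.
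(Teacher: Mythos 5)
Your proposal follows essentially the same route as the paper's proof: the diagonal lower bound $\mathrm{Tr}_{\mathfrak{h}}\bigl((K)^{\ast}K\bigr)\geq\sum_{z\in\Lambda_{L/2}}|\langle\mathfrak{e}_{z},K\mathfrak{e}_{z}\rangle_{\mathfrak{h}}|^{2}$, the computation of the diagonal via Lemma \ref{long_lemme copy(3)} (with the diamagnetic term dropping out), the small-$\vartheta$ and small-$\gamma$ asymptotics of Lemmata \ref{les termes en eta et eta^2 sont nuls copy(4)}--\ref{long_lemme copy(5)} with $\int_{0}^{-\alpha}2\gamma\,\mathrm{d}\gamma=\alpha^{2}$, and finally translation invariance of $\mathfrak{a}_{\Omega}$ together with $\mathbb{E}[w^{(\cdot)}]=0$ to identify the variance. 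The argument and error bookkeeping are correct and match the paper, so no further comment is needed.
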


\begin{proof}
Fix all parameters of the lemma. Take any $L\geq 2$. Note that%
\begin{equation}
\mathrm{Tr}_{\mathfrak{h}}\left( \left( K_{\{\Lambda _{L}\},\{\mathbb{Z}%
^{d}\}}^{(\omega ,\mathcal{E})}\right) ^{\ast }K_{\{\Lambda _{L}\},\{\mathbb{%
Z}^{d}\}}^{(\omega ,\mathcal{E})}\right) \geq \sum_{z\in \Lambda
_{L/2}}\left\Vert K_{\{\Lambda _{L}\},\{\mathbb{Z}^{d}\}}^{(\omega ,\mathcal{%
E})}\mathfrak{e}_{z}\right\Vert _{\mathfrak{h}}^{2}\geq \sum_{z\in \Lambda
_{L/2}}\left\vert \left\langle \mathfrak{e}_{z},K_{\{\Lambda _{L}\},\{%
\mathbb{Z}^{d}\}}^{(\omega ,\mathcal{E})}\mathfrak{e}_{z}\right\rangle _{%
\mathfrak{h}}\right\vert ^{2}.  \label{ddddddddd}
\end{equation}%
By using (\ref{definition Kbisdefinition Kbis})-(\ref{N}) and Lemma \ref%
{long_lemme copy(3)}, for any $z\in \Lambda _{L/2}$, we have that 
\begin{eqnarray*}
\left\langle \mathfrak{e}_{z},K_{\{\Lambda _{L}\},\{\mathbb{Z}%
^{d}\}}^{(\cdot ,\mathcal{E})}\mathfrak{e}_{z}\right\rangle _{\mathfrak{h}}
&=&\underset{k,q=1}{\sum^{d}}w_{k}\int_{-\infty }^{0}\left\{ \mathcal{E}%
\left( \alpha \right) \right\} _{q}\int\nolimits_{0}^{-\alpha }\sum_{y\in 
\mathbb{Z}^{d}}\zeta _{z,y,z}\ \mathrm{d}\gamma \mathrm{d}\alpha \\
&&+\underset{k,q=1}{\sum^{d}}w_{k}\int_{-\infty }^{0}\left\{ \mathcal{E}%
\left( \alpha \right) \right\} _{q}\int\nolimits_{0}^{-\alpha }\left\langle 
\mathfrak{e}_{z},\mathbf{R}_{\gamma ,q,k}^{(L,\omega )}\mathfrak{e}%
_{z}\right\rangle _{\mathfrak{h}}\mathrm{d}\gamma \mathrm{d}\alpha
\end{eqnarray*}%
with $\mathbf{R}_{\gamma ,q,k}^{(L,\omega )}\in \mathcal{B}\left( \mathfrak{h%
}\right) $ satisfying (\ref{limit}). Note that $\zeta _{z,y,z}$ is $\gamma $%
-dependent and its explicit expression is found in Lemma \ref{long_lemme
copy(3)}. If $T,\vartheta $ are sufficiently small then, by Lemmata \ref{les
termes en eta et eta^2 sont nuls copy(4)}-\ref{long_lemme copy(5)}, we
deduce that, for any $z\in \Lambda _{L/2}$,%
\begin{eqnarray*}
\left\langle \mathfrak{e}_{z},K_{\{\Lambda _{L}\},\{\mathbb{Z}%
^{d}\}}^{(\cdot ,\mathcal{E})}\mathfrak{e}_{z}\right\rangle _{\mathfrak{h}}
&=&\lambda \underset{k=1}{\sum^{d}}w_{k}\int_{-\infty }^{0}\left\{ 2\omega
_{1}\left( z\right) -\omega _{1}\left( z+e_{k}\right) -\omega _{1}\left(
z-e_{k}\right) \right\} \left\{ \mathcal{E}\left( \alpha \right) \right\}
_{k}\alpha ^{2}\mathrm{d}\alpha \\
&&+\mathcal{O}\left( \vartheta \right) +\mathcal{O}\left( T^{2}\right) +%
\underset{k,q=1}{\sum^{d}}w_{k}\int_{-\infty }^{0}\left\{ \mathcal{E}\left(
\alpha \right) \right\} _{q}\int\nolimits_{0}^{-\alpha }\left\langle 
\mathfrak{e}_{z},\mathbf{R}_{\gamma ,q,k}^{(L,\omega )}\mathfrak{e}%
_{z}\right\rangle _{\mathfrak{h}}\mathrm{d}\gamma \mathrm{d}\alpha
\end{eqnarray*}%
uniformly with respect to $\omega \in \Omega $ and $\lambda $ in compact
subsets of $\mathbb{R}_{0}^{+}$. By the translation invariance of the
distribution $\mathfrak{a}_{\Omega }$ (see \cite[Equations (1)-(2)]{LDP})
and (\ref{limit}), it follows that 
\begin{eqnarray*}
\lim_{L\rightarrow \infty }\mathbb{E}\left[ \left\vert \left\langle 
\mathfrak{e}_{z},K_{\{\Lambda _{L}\},\{\mathbb{Z}^{d}\}}^{(\cdot ,\mathcal{E}%
)}\mathfrak{e}_{z}\right\rangle _{\mathfrak{h}}\right\vert ^{2}\right]
&=&\lambda ^{2}\mathbb{E}\left[ \left\vert \int_{-\infty }^{0}\left\langle
w^{(\cdot )},\mathcal{E}\left( \alpha \right) \right\rangle _{\mathbb{R}%
^{d}}\alpha ^{2}\mathrm{d}\alpha \right\vert ^{2}\right] +\mathcal{O}\left(
\vartheta ^{2}\right) +\mathcal{O}\left( T^{4}\right) \\
&=&\lambda ^{2}\mathrm{Var}\left[ \int_{-\infty }^{0}\left\langle w^{(\cdot
)},\mathcal{E}\left( \alpha \right) \right\rangle _{\mathbb{R}^{d}}\alpha
^{2}\mathrm{d}\alpha \right] +\mathcal{O}\left( \vartheta ^{2}\right) +%
\mathcal{O}\left( T^{4}\right) ,
\end{eqnarray*}%
uniformly with respect to $\lambda $ in compact subsets of $\mathbb{R}%
_{0}^{+}$. Thanks to (\ref{ddddddddd}), the assertion then follows. Note that%
\begin{equation*}
\mathbb{E}\left[ \int_{-\infty }^{0}\left\langle w^{(\cdot )},\mathcal{E}%
\left( \alpha \right) \right\rangle _{\mathbb{R}^{d}}\alpha ^{2}\mathrm{d}%
\alpha ^{2}\right] =0.
\end{equation*}
\end{proof}

By combining Lemmata \ref{long_lemme copy(2)}, \ref{les termes en eta et
eta^2 sont nuls copy(3)} and \ref{equation inmportantelemma}, we directly
obtain that, for any $\vartheta ,\lambda ,T\in \mathbb{R}_{0}^{+}$, $T,\beta
\in \mathbb{R}^{+}$, $\mathcal{E}\in C_{0}^{0}(\mathbb{R};\mathbb{R}^{d})$
with support in $[-T,0]$ and $\vec{w}\in {\ \mathbb{R}}^{d}$ with $%
\left\Vert \vec{w}\right\Vert _{\mathbb{R}^{d}}=1$,%
\begin{equation}
\partial _{s}^{2}\mathrm{J}^{(s\mathcal{E})}|_{s=0}\geq \frac{1}{2\left( 1+%
\mathrm{e}^{\beta \left( 2d\left( 2+\vartheta \right) +\lambda \right)
}\right) ^{2}}\left( \lambda ^{2}\mathrm{Var}\left[ \int_{-\infty
}^{0}\left\langle w^{(\cdot )},\mathcal{E}\left( \alpha \right)
\right\rangle _{\mathbb{R}^{d}}\alpha ^{2}\mathrm{d}\alpha \right] +\mathcal{%
O}\left( \vartheta ^{2}\right) +\mathcal{O}\left( T^{4}\right) \right) ,
\label{inequalty fluctuation 1}
\end{equation}%
provided that $T,\vartheta $ are sufficiently small. In particular, if 
\begin{equation}
\mathrm{Var}\left[ \int_{-\infty }^{0}\left\langle w^{(\cdot )},\mathcal{E}%
\left( \alpha \right) \right\rangle _{\mathbb{R}^{d}}\alpha ^{2}\mathrm{d}%
\alpha \right] >0  \label{inequalty fluctuation 2}
\end{equation}%
then $\partial _{s}^{2}\mathrm{J}^{(s\mathcal{E})}|_{s=0}>0$. This last
condition is easily satisfied:\ Because the variance of the sum (or the
difference) of uncorrelated random variables is the sum of their variances,
if the random variables $\omega _{1}\left( 0\right) ,\omega _{1}\left(
e_{1}\right) ,\omega _{1}\left( -e_{1}\right) ,\ldots ,\omega _{1}\left(
e_{d}\right) ,\omega _{1}\left( -e_{d}\right) $ are independently and
identically distributed (i.i.d.), then 
\begin{eqnarray}
\mathbb{E}\left[ \left\vert \int_{-\infty }^{0}\left\langle w^{(\omega )},%
\mathcal{E}\left( \alpha \right) \right\rangle _{\mathbb{R}^{d}}\alpha ^{2}%
\mathrm{d}\alpha \right\vert ^{2}\right] &=&2\mathrm{Var}\left[ (\cdot
)_{1}\left( 0\right) \right] \times \left( 2\left( \int_{-\infty
}^{0}\left\langle w,\mathcal{E}\left( \alpha \right) \right\rangle _{\mathbb{%
R}^{d}}\alpha ^{2}\mathrm{d}\alpha \right) ^{2}\right.  \notag \\
&&\left. +\sum_{k=1}^{d}\left( w_{k}\int_{-\infty }^{0}\left( \mathcal{E}%
\left( \alpha \right) \right) _{k}\alpha ^{2}\mathrm{d}\alpha \right)
^{2}\right) ,  \label{iid}
\end{eqnarray}%
which is strictly positive as soon as $\mathcal{E}\neq 0$ and $\omega
_{1}\left( 0\right) $ is not almost surely constant, by Chebychev's
inequality. \bigskip

\noindent \textit{Acknowledgments:} \textbf{\ }This work is supported by
CNPq (308337/2017-4), as well as by the Basque Government through the grant
IT641-13 and the BERC 2018-2021 program and by the Spanish Ministry of
Science, Innovation and Universities: BCAM Severo Ochoa accreditation
SEV-2017-0718, MTM2017-82160-C2-2-P. We thank S. Rodrigues for having
pointed out \cite{bio}.

\end{document}